
\documentclass[a4paper, 10pt, conference]{ieeeconf}      

\IEEEoverridecommandlockouts                              

\overrideIEEEmargins                                      

\usepackage{cite}
\usepackage{amsmath,amssymb,amsfonts}
\usepackage{algorithmic}
\usepackage{textcomp}
\usepackage{mathtools}
\usepackage{subcaption}
\usepackage{graphicx}
\usepackage[ruled, lined, longend]{algorithm2e}
\usepackage{enumerate}
\usepackage{multirow}
\usepackage{hyperref}
\newtheorem{definition}{Definition}
\newtheorem{assumption}{Assumption}
\newtheorem{theorem}{Theorem}
\newtheorem{remark}{Remark}
\newtheorem{proposition}{Proposition}

\newtheorem{lemma}{Lemma}
\newtheorem{eg}{Example}

\allowdisplaybreaks
\begin{document}
\title{Robust Output-Lifted Learning Model Predictive Control}

\author{Siddharth H. Nair,  Francesco Borrelli
\thanks{SHN, FB are with the Department of Mechanical Engineering, University of California Berkeley. Email IDs: \{siddharth\_nair, fborrelli\}@berkeley.edu}
}
\maketitle
\thispagestyle{empty}
\pagestyle{empty}

\begin{abstract}
We propose an iterative approach for designing Robust Learning Model Predictive Control (LMPC) policies for a class of nonlinear systems with additive, unmodelled dynamics. The nominal dynamics are assumed to be difference flat, i.e., the  state and input can be reconstructed using flat output sequences. For the considered class of systems, we synthesize Robust MPC policies and show how to use historical trajectory data collected during iterative tasks to 1) obtain bounds on the unmodelled dynamics and 2) construct a convex value function approximation along with a convex safe set in the space of output sequences for designing terminal components in the Robust MPC design. We show that the proposed strategy guarantees robust constraint satisfaction, asymptotic convergence to a desired subset of the state space and non-decreasing closed-loop performance at each policy update. Finally, simulation results demonstrate the effectiveness of the proposed strategy on a minimum time control problem  using a constrained nonlinear and uncertain vehicle model.

\end{abstract}
\section{Introduction}
Infinite-horizon optimal control has a long and celebrated history, with the cornerstones laid in the 1950s by  \cite{pontryagin2018mathematical} and \cite{bellman1966dynamic}. The problem involves seeking a control signal that minimizes the cost incurred by a trajectory of a dynamical system starting from an initial condition over an infinite time horizon. While certain problem settings admit analytical solutions (like unconstrained LQR \cite{kwakernaak1972linear}), the infinite-horizon optimal control problem for general nonlinear dynamical systems subject to constraints, is challenging to solve. This is because these problems require the numerical solution of an infinite-dimensional optimization problem, which is intractable even in the discrete-time setting (where the solution is an infinite sequence of control inputs instead of a control input signal).

Model Predictive Control (MPC) 
is an attractive methodology for tractable synthesis of feedback control of constrained nonlinear discrete-time systems. The control action at every instant requires the solution of a finite-horizon optimal control problem with a suitable constraint and cost on the terminal state of the system to approximate the infinite-horizon problem. These terminal components are designed so that the closed-loop system is stablized to a desired \textit{goal set} and satisfies constraints. This is achieved by constraining the terminal state to lie in a control invariant set containing the goal set, with an associated Control Lyapunov function (CLF). 
The computation of these sets with an accompanying CLF for nonlinear systems is challenging in general, and typically require local approximations of the nonlinear dynamics around the goal set. A  proper review of such constructions goes outside the scope of this article.


For iterative tasks where the system starts from the same position for every iteration of the optimal control problem, data from previous iterations may be used to update the MPC design using ideas from Iterative Learning Control (ILC)~\cite{ILCsurvey,ILC_MPC_cost, ILC_MPC_model}.
In these strategies the goal of the controller is to track a given reference trajectory, and the tracking error from the previous execution is used to update the controller. For control problems where a reference trajectory may be hard to compute, \cite{UgoTAC} proposed a
reference-free iterative policy synthesis strategy, called Learning Model Predictive Control (LMPC) which iteratively constructs a control invariant terminal set and an accompanying terminal cost function using historical data. These quantities are discrete, therefore the LMPC relies on the solution of a Mixed-Integer Nonlinear Program (MINLP) at each instant for guaranteed stability and constraint satisfaction. 
In \cite{nair2021output}, we build on the work of~\cite{UgoTAC} and proposed a strategy to reduce the computational burden of LMPC for a class of nonlinear systems by replacing these discrete sets and functions with continuous, convex ones while still maintaining safety and performance  guarantees. Moreover, these quantities are computed without any local approximations of the nonlinear dynamics.

The LMPC framework of \cite{nair2021output} and this paper, considers discrete-time nonlinear systems for which the state and input can be reconstructed using certain system output sequences, which are defined as \textit{lifted outputs}.  These outputs sequences are constructed using \textit{flat} outputs \cite{flat_def} which have also been used in \cite{aranda1996linearization} to construct dynamic feedback linearizing inputs for discrete-time systems. Existing work on constrained control for such systems require a carefully designed reference trajectory which is then tracked using MPC with a linear model obtained either by a first order approximation \cite{de2009flatness} or by feedback linearization \cite{wang2019flatness,greeff2018flatness, kandler2012differential, alsalti2022data, agrawal2021constructive}. In both cases, there are no formal guarantees of closed-loop system stability and constraint satisfaction, except in \cite{alsalti2022data} and \cite{agrawal2021constructive}. \cite{alsalti2022data} proposes a model-free, data-driven approach based on Willem's fundamental lemma for Robust MPC by using input-output data of the feedback-linearized system. However, the formulation can not enforce state constraints for a recursively feasible Robust MPC scheme. In \cite{agrawal2021constructive} a model-based hierarchical approach is considered: a Robust MPC scheme with the feedback-linearized dynamics provides a reference trajectory, and a low level tracking controller for original nonlinear system is designed to track the reference trajectory. The approach in \cite{agrawal2021constructive} does not address input constraints, and the terminal set is chosen as the desired \textit{goal set}.  In this article, we design a model-based and data-driven Robust LMPC framework for the class of difference flat nonlinear systems, with additive uncertainty to capture unmodelled dynamics. The main contributions of this article are as follows:
\begin{enumerate}\item We use historical trajectory data to quantify and iteratively decrease the model uncertainty via semi-definite programming using techniques from \cite{nair2020modeling, bujarbaruah2020semi}, generalized beyond Lipschitz nonlinearities. This is used for construction of tightened state and input constraints.

\item We iteratively construct convex terminal sets and terminal costs in the space of lifted outputs using historical trajectory data for the Robust MPC optimization problem.
\item Finally, we show that the proposed Robust LMPC strategy in closed-loop ensures $i$) constraint satisfaction, $ii$) convergence to a desired set, $iii$) non-decreasing closed-loop system performance across iterations.
\end{enumerate}

The paper is organized as follows. We begin by formally describing the problem we want to solve in Section~\ref{sec:PD} along with required definitions. Section~\ref{sec:ROLMPC} details the construction of various  components of our the control design. The closed-loop properties are analysed in Section~\ref{sec:LMPC_analysis}. Finally, Section~\ref{sec:ex} presents numerical results that illustrate our proposed approach for optimal control of a kinematic bicycle.
\section{Problem Formulation}\label{sec:PD}
\subsection{System Model and Uncertainty Description}
Consider a nonlinear discrete-time system given by the dynamics
\begin{align}\label{eq:sysdyn}
    x_{t+1}=f(x_t,u_t)+d_t,\ d_t\in D(x_t,u_t)
\end{align}
 where $x_t\in \mathbb{R}^n$ and $u_t\in\mathbb{R}^m$  are the system state and input respectively at time $t$, and $f(\cdot,\cdot)$ is a known, continuous function.  The disturbance $d_t$ is assumed to belong to compact set $D(x_t, u_t)$, where $D(\cdot)$ is a set-valued map, $D:\mathbb{R}^{n+m}\rightarrow 2^{\mathbb{R}^n}$. The map $D(\cdot)$ is \textit{unknown} and represents unmodelled dynamics. We assume that this map satisfies the incremental property stated by the following assumption.
 \begin{assumption}\label{ass:dQC}
 The unknown set-valued map $D(\cdot)$ satisfies the following quadratic constraint, for any $z=(x,u), z'=(x',u')$ in $\mathbb{R}^{n+m}$:
 \begin{align*}
     &\begin{bmatrix}
     1\\z-z'\\d-d'
     \end{bmatrix}^\top Q^{(j)}\begin{bmatrix}
     1\\z-z'\\ d-d'
     \end{bmatrix}\geq 0, \begin{aligned}\\\forall d\in D(z),\forall d'\in D(z'),\\
     \forall Q^{(j)}\in\boldsymbol{\mathcal{Q}}~~~~~~~~~\end{aligned}
 \end{align*}
 where $\boldsymbol{\mathcal{Q}}\subset\mathbb{R}^{2n+m+1\times 2n+m+1}$\ is a known, finite set of symmetric matrices.
 \end{assumption}
 Each matrix $Q^{(j)}$ in $\boldsymbol{\mathcal{Q}}$, captures side information on the unmodelled dynamics such as sector bounds, Lipschitz constants or Jacobian bounds \cite{megretski1997system, hashemi2021certifying}.
 \begin{remark}\label{rem:approx_constants}
 The side information
are typically approximated using data, or chosen as \textit{hyperparameters} to be tuned. In the former case, without additional assumptions, robust constraint satisfaction can be guaranteed only in probability~\cite{maddalena2021deterministic, manzano2020robust, nubert2020safe, koller2018learning, vinod2022fly}.
 \end{remark}

 \begin{eg}\label{eg:lip_bnded}
 \textit{Suppose that the disturbance takes the form $d_t=w_t+d(x_t,u_t)$, where $w_t$ lies within the set $\mathcal{W}:=\{w| \Vert w\Vert_2\leq\gamma\}$, $d(x,u)$ is an unknown, $L-$Lipschitz continuous function. and so $$D(x_t,u_t)=\{d_t\in\mathbb{R}^n\ |\ \exists w_t\in\mathcal{W}, d_t=w_t+d(x_t,u_t)\}.$$ Then for any $z=(x,u),z'=(x',u')$, $w_1,w_2\in\mathcal{W}$, we have (using the Lipschitz constant $L$ of $d(\cdot,\cdot)$ and bound $\gamma$ for $w_t$) that $||d(z)+w_1-d(z')-w_2\Vert^2_2\leq (L\Vert z-z'\Vert_2+2\gamma)^2\leq 2L^2\Vert z-z'\Vert^2_2+8\gamma^2$. Thus $\boldsymbol{\mathcal{Q}}$ consists of a single matrix $Q=\text{blkdiag}([8\gamma^2,2L^2 I_n, -I_n])$}.
 \end{eg}
 
 Also define the \textit{nominal} nonlinear discrete-time system,
 \begin{align}\label{eq:nom_dyn}
     \bar{x}_{t+1}&=f(\bar{x}_t,\bar{u}_t)\nonumber\\
     \bar{y}_t&=h(\bar{x}_t)
 \end{align}
 where $\bar{x}_t\in\mathbb{R}^n$, $\bar{u}_t, \bar{y}_t \in\mathbb{R}^m$,  are the nominal system state, input and output at time $t$. The output $\bar{y}_t$ is a difference flat output (\cite{flat_def}), and is used to construct the \textit{lifted} output for the nominal system \eqref{eq:nom_dyn} as discussed next.
\begin{definition}\label{def:diffFlat} Let $\bar{y}_t=h(\bar{x}_t)$ with $h : \mathbb{R}^n\rightarrow \mathbb{R}^m$ be the output of system \eqref{eq:nom_dyn}. If $\exists R\in\mathbb{N}$ and a function $\mathcal{F}:\mathbb{R}^{m\times R+1}\rightarrow \mathbb{R}^n\times\mathbb{R}^m$,
such that any state/input pair ($\bar{x}_t$, $\bar{u}_t$) can be uniquely reconstructed from a sequence of outputs $\bar{y}_t,\dots, \bar{y}_{t+R}$ as
\begin{align}\label{eq:flatdef}
    (\bar x_{t},\bar u_{t})&=\mathcal{F}([\bar y_{t},\bar y_{t+1},\dots,\bar y_{t+R}]),
\end{align}
then the lifted output is the matrix \begin{align}\label{eq:vrtl_lftd_op}
\bar{\mathbf{Y}}_t=[\bar{y}_t,\dots,\bar{y}_{t+R}]\in\mathbb{R}^{m\times R+1}.\end{align}
\end{definition}
We formally assume the existence of the lifted output for our nominal system along with some additional structure on the map $\mathcal{F}(\cdot)$ next.

\begin{assumption}\label{ass:flat}
We are given an output function $\bar y_t=h(\bar x_t)$ with corresponding lifted output $\bar{\mathbf{Y}}_t$ for the nominal system \eqref{eq:nom_dyn}. Moreover, the map $\mathcal{F}(\cdot)$ in~\eqref{eq:flatdef} also satisfies the following properties:
\begin{enumerate}[\label=(A)]
    \item \label{ass:flat_class}$\mathcal{F}(\cdot)$ is continuous, and requires $R$ and $R+1$ outputs for identifying the nominal state and nominal input respectively, i.e.,
\begin{align}
    \bar x_t&=\mathcal{F}_x([\bar y_t,\bar y_{t+1},\dots,\bar y_{t+R-1}])\label{eq:flatclass_x}\\
    \bar u_t&=\mathcal{F}_u([\bar y_t,\bar y_{t+1},\dots,\bar y_{t+R}])\label{eq:flatclass_u}
\end{align}
\item\label{ass:flatmap_convex}Let $\mathcal{F}^i: \mathbb{R}^{m\times R+1}\rightarrow \mathbb{R}$ be the $i$th component of the map $\mathcal{F}:\mathbb{R}^{m\times R+1}\rightarrow \mathbb{R}^n\times\mathbb{R}^m$ where $i=1,\dots, n+m$. For each $i\in\{1,\dots, n+m\}$, there exist functions $\mathcal{F}^{i,\cup}$, $\mathcal{F}^{i,\cap}$  such that $\mathcal{F}^{i,\cap}(\bar{\mathbf{Y}})\leq\mathcal{F}^i(\bar{\mathbf{Y}})\leq \mathcal{F}^{i,\cup}(\bar{\mathbf{Y}})$  where $\mathcal{F}^{i,\cap}$ is quasiconcave and $\mathcal{F}^{i,\cup}$ is quasiconvex, i.e.,
\small
\begin{align*}
&\forall\bar{\mathbf{Y}}^1,\bar{\mathbf{Y}}^2\in\mathbb{R}^{m\times R+1}, \forall t\in[0,1]:\\
&\min\{\mathcal{F}^{i,\cap}(\bar{\mathbf{Y}}^1),\mathcal{F}^{i,\cap}(\bar{\mathbf{Y}}^2)\}\leq\mathcal{F}^{i,\cap}(t\bar{\mathbf{Y}}^1 + (1-t)\bar{\mathbf{Y}}^2),\\
&\mathcal{F}^{i,\cup}(t\bar{\mathbf{Y}}^1 + (1-t)\bar{\mathbf{Y}}^2)\leq\max\{\mathcal{F}^{i,\cup}(\bar{\mathbf{Y}}^1),\mathcal{F}^{i,\cup}(\bar{\mathbf{Y}}^2)\}.
\end{align*}
\normalsize

\end{enumerate}
\end{assumption}
The additional structure imposed by Assumption~\ref{ass:flat} is used for constructing invariant sets for \eqref{eq:sysdyn}, \eqref{eq:nom_dyn} using historical data, which will be clarified in Section \ref{sec:ROLMPC}C.
\begin{remark}
\textit{Assumption~\ref{ass:flat}\eqref{ass:flat_class} is naturally satisfied by flat, simple mechanical systems \cite{murray1997nonlinear} (where the system geometry/kinematics are affected by the control inputs via integrators). The bounding functions $\mathcal{F}^{i,\cap}(\cdot), \mathcal{F}^{i,\cup}(\cdot)$ in Assumption~\ref{ass:flat}\eqref{ass:flatmap_convex} can be constructed by exploiting system constraints, and the required properties can be verified via first \& second order conditions or composition rules for quasiconvex functions \cite{boyd}. If $\mathcal{F}^i(\cdot)$ is both quasiconcave and quasiconvex already, then the bounding functions are simply $\mathcal{F}^{i,\cap}(\cdot)=\mathcal{F}^{i,\cup}(\cdot)=\mathcal{F}^i(\cdot)$.}
\end{remark}

\begin{eg}
\textit{Consider the kinematic bicycle, described by the Euler-discretized dynamics}
{\small{
\begin{align*}
    X_{t+1}&=X_t+\text{dt}(v_t\cos(\theta_t))\\
    Y_{t+1}&=Y_t+\text{dt}(v_t\sin(\theta_t))\\
    \theta_{t+1}&=\theta_t+\text{dt}(\frac{v_t}{L_f}\tan^{-1}(\frac{L_r}{L_f+L_r}\tan(\delta_t)))
\end{align*}}}
\textit{with states $\bar{x}_t=[X_t,Y_t,\theta_t]$ and controls $\bar{u}_t=[v_t, \delta_t$. For the output $\bar{y}_t=[X_t, Y_t]$, the states are reconstructed as $[X_t, Y_t]=\bar{y}_t$, $\theta_t=\tan^{-1}(\frac{[0\ 1](\bar{y}_{t+1}-\bar{y}_t)}{[1\ 0](\bar{y}_{t+1}-\bar{y}_t)})$ and the inputs are reconstructed as $v_t=\frac{\Vert \bar{y}_{t+1}-\bar{y}_t\Vert}{\text{dt}}$, $\delta_t=\tan^{-1}(\frac{L_f+L_r}{L_r}\tan(\frac{L_f(\theta_{t+1}-\theta_t)}{\text{dt}v_t}))$ (which involves $\bar{y}_t,\bar{y}_{t+1},\bar{y}_{t+2}$), and so $R=2$.}

\textit{For Assumption~\ref{ass:flat}\eqref{ass:flat_class}, the states can be reconstructed from two consecutive outputs alone, i.e.,  $\mathcal{F}_x(\bar{y}_t,\bar{y}_{t+1})=[\bar{y}_t, \tan^{-1}(\frac{[0\ 1](\bar{y}_{t+1}-\bar{y}_t)}{[1\ 0](\bar{y}_{t+1}-\bar{y}_t)})]$.}

\textit{For Assumption~\ref{ass:flat}\eqref{ass:flatmap_convex}, the bounding functions corresponding to the states are}
{\small{
\begin{align*}
    &[\mathcal{F}^{1,\cap}(\bar{\mathbf{Y}}_t),\mathcal{F}^{2,\cap}(\bar{\mathbf{Y}}_t)]=[\mathcal{F}^{1,\cup}(\bar{\mathbf{Y}}_t),\mathcal{F}^{2,\cup}(\bar{\mathbf{Y}}_t)]=\bar{y}_t\\
    &\mathcal{F}^{3,\cap}(\bar{\mathbf{Y}}_t)=\mathcal{F}^{3,\cup}(\bar{\mathbf{Y}}_t)=\tan^{-1}(\frac{[0\ 1](\bar{y}_{t+1}-\bar{y}_t)}{[1\ 0](\bar{y}_{t+1}-\bar{y}_t)})
\end{align*}}}
\textit{if $[1\ 0](\bar{y}_{t+1}-\bar{y}_t)>0$ ($\because$ composition of quasi-linear function $\frac{[0\ 1](\bar{y}_{t+1}-\bar{y}_t)}{[1\ 0](\bar{y}_{t+1}-\bar{y}_t)}$ with monotonically increasing function $\tan^{-1}(\cdot)$ is both quasi-convex and quasi-concave). For the bounding functions of the inputs, we have for example,}
{\small{
\begin{align*}
    &\mathcal{F}^{4,\cap}(\bar{\mathbf{Y}}_t)=0, 
    \mathcal{F}^{4,\cup}(\bar{\mathbf{Y}}_t)=\frac{\Vert \bar{y}_{t+1}-\bar{y}_t\Vert}{\text{dt}}\\
    &\mathcal{F}^{5,\cap}(\bar{\mathbf{Y}}_t)=-\frac{\pi}{2}, 
    \mathcal{F}^{5,\cup}(\bar{\mathbf{Y}}_t)=\frac{\pi}{2}
\end{align*}}}
\end{eg}

\subsection{System Constraints}
In this work, we assume that system~\eqref{eq:sysdyn} is subject to state and input constraints given by box sets.
\begin{assumption}\label{ass:box}
The state constraints $\mathcal{X}$ and input constraints $\mathcal{U}$ are given by, 
\begin{align*}
\mathcal{X}=\{x\in\mathbb{R}^n| lb_x\leq x\leq ub_x\},\\ \mathcal{U}=\{u\in\mathbb{R}^m| lb_u\leq u\leq ub_u\}
\end{align*}
for some  $lb_x, ub_x \in\mathbb{R}^n$ and $ lb_u,ub_u\in\mathbb{R}^m$.
\end{assumption}

In Section~\ref{ssec:ConvOPSS}, the box constraints are required to exploit the element-wise bounds on $\mathcal{F}(\cdot)$ given in Assumption~\ref{ass:flat}\eqref{ass:flatmap_convex}. This helps in using system trajectory data to construct lifted outputs that map to nominal state and inputs within constraints.

\subsection{Background: Robust MPC for Nonlinear Systems}
Denote the desired set of states of \eqref{eq:sysdyn} as the goal set $\mathcal{X}_G=\{x\in\mathbb{R}^n| A^g x\leq b^g\}\subset\mathcal{X}$, and suppose that it is control invariant for \eqref{eq:sysdyn} i.e., $\forall x_t\in\mathcal{X}_G, \exists u_t\in\mathcal{U}\Rightarrow  x_{t+1}\in\mathcal{X}_G$.
Define the error system with state $e_t=x_t-\bar x_t$ and dynamics 
\begin{align}\label{eq:error_dyn}
    e_{t+1}&=f_e(e_t, \bar x_t,\bar u_t)+d_t,\ d_t\in D(e_t+\bar x_t, u_t )\nonumber\\
    f_e(e_t,\bar x_t, \bar u_t)&=f(e_t+\bar x_t, \bar u_t+\kappa(e_t))-f(\bar x_t, \bar u_t)
\end{align}
where $u_t$ is given as the sum of $\bar u_t$ and error feedback policy $\kappa(e_t)$. For a fixed policy $\kappa(e_t)$, a robustly positively invariant set $\mathcal{E}$ for the error dynamics~\eqref{eq:error_dyn} satisfies,
\begin{align}\label{eq:e_inv_def}
    &\forall x_t\in\mathcal{X},\forall u_t\in\mathcal{U}:\nonumber \\ 
  &e_t\in\mathcal{E}\Rightarrow e_{t+1}\in\mathcal{E}~\forall d_t\in D(x_t,u_t).
\end{align}
The nominal input $\bar u_t$ is obtained by solving the following finite-horizon optimal control problem for the nominal system,

\begin{equation}\label{eq:OP_RMPC_nom}
	\begin{aligned}
	J_{t}(x_t)= \min\limits_{\mathbf{\bar{u}_t},\mathbf{\bar{x}_t}} \quad & \displaystyle P(\bar x_{t+N|t})+\sum\limits_{k=t}^{t+N-1} c(\bar x_{k|t},\bar u_{k|t}) \\[1ex]
		\text{s.t.}\quad  & \bar x_{k+1|t}=f(\bar x_{k|t},\bar u_{k|t}), \\
    & \bar x_{k|t}\in\bar{\mathcal{X}}, \bar u_{k|t}\in\bar{\mathcal{U}},~\forall k\in \mathbb{I}_{t}^{t+N-1},\\
    & \bar x_{t+N|t}\in\bar{\mathcal{X}}_N,\\
	&   x_t-\bar x_{t|t}\in\mathcal{E}
	\end{aligned}
\end{equation}
where $\mathbf{\bar{u}_t}=[\bar{u}_{t|t},\dots,\bar{u}_{t+N-1|t}],\ \mathbf{\bar{x}_t}=[\bar{x}_{t|t},\dots, \bar{x}_{t+N|t}]$ are decision variables, and $x_t$ is the current state. The notation $\mathbb{I}_{k_1}^{k_2}=\{k_1,...,k_2\}$ is introduced to denote a set of integers between $k_1$ and $k_2$. The optimal solution of \eqref{eq:OP_RMPC_nom} provides the nominal control input as \begin{align}
    \bar{u}_{t}=\pi_{MPC}(x_t)=\bar{u}^\star_{t|t}
\end{align}
and the resulting feedback controller for system \eqref{eq:sysdyn} is
\begin{align}\label{eq:control}
    u_t=\pi(x_t)=\pi_{MPC}(x_t)+\kappa(x_t-x^*_{t|t})
\end{align}
The sets $\bar{\mathcal{X}}\subset\mathcal{X}\ominus \mathcal{E}$ and  $\bar{\mathcal{U}}\subset\mathcal{U}\ominus \kappa(\mathcal{E})$ are tightened nominal state and input constraints, where $\kappa(\mathcal{E})=\{u\in\mathbb{R}^m| \exists e\in\mathcal{E}: \kappa(e)=u\}$ to  ensure that if $\bar{x}_t\in\bar{\mathcal{X}}, \bar{u}_t\in\bar{\mathcal{U}}, e_t\in\mathcal{E}$, then  $x_t\in\mathcal{X}, u_t\in\mathcal{U}$.  The stage cost $c(\cdot)$ is chosen such that
\begin{align*}c(\bar x,\bar{u})=0~ \forall (\bar{x},\bar{u})\in(\mathcal{X}_G\ominus\mathcal{E})\times\bar{\mathcal{U}},\\ c(\bar x, \bar u)>0 ~\forall (\bar x,\bar u)\in\mathbb{R}^{n+m}\backslash(\mathcal{X}_G\ominus\mathcal{E})\times\bar{\mathcal{U}}.
\end{align*}
The terminal cost $P(\cdot)$ and terminal set $\bar{\mathcal{X}}_N$ are designed such that the optimization problem~\eqref{eq:OP_RMPC_nom} has a feasible solution $\forall t\geq 0$ for system \eqref{eq:sysdyn} in closed-loop with the control \eqref{eq:control}, and the state $x_t$ is asymptotically driven to the set $\mathcal{X}_G$. We formally state the elements to be designed for the Robust MPC next.

\subsection{Iterative Design of Robust MPC using Historical Data}
\subsubsection*{Design Elements} We propose an iterative design of the feedback control $u_t=\pi(\cdot)$ in \eqref{eq:OP_RMPC_nom}-\eqref{eq:control} by using historical state-input trajectory data to construct:
\begin{enumerate}[(D1)]
    \item \label{D:einv} the error invariant $\mathcal{E}$ for a fixed error policy $\kappa(\cdot)$ such that \eqref{eq:e_inv_def} is satisfied.
    \item \label{D:constr}the tightened constraints $\bar{\mathcal{X}}\subset\mathcal{X}\ominus\mathcal{E}$ and $\bar{\mathcal{U}}\subset\mathcal{U}\ominus \kappa(\mathcal{E})$ such that $\bar{x}_t\in\bar{\mathcal{X}}$, $\bar{u}_t\in\bar{\mathcal{U}}$, $e_t\in\mathcal{E}$  $\Rightarrow x_t\in\mathcal{X}$, $u_t\in\mathcal{U}$ for system \eqref{eq:sysdyn} with feedback control \eqref{eq:control}.
    \item \label{D:tcost} the terminal constraint $\bar{\mathcal{X}}_N$, and the terminal cost $P(\cdot)$ such that the optimization problem \eqref{eq:OP_RMPC_nom} has a feasible solution for system \eqref{eq:sysdyn} in closed-loop with \eqref{eq:control} $\forall t\geq 0$, and $x_t$ is steered to $\mathcal{X}_G$.
\end{enumerate}
\subsubsection*{Iterative Setup} An iteration as defined as a rollout of system \eqref{eq:sysdyn} starting from a fixed state $x_S$ with some policy $\pi(\cdot)$ such that system state and input remain within constraints, and the system state is asymptotically steered to $\mathcal{X}_G$. Formally, at iteration $j$: \begin{align}\label{eq:iter_spec}
    &\forall t\geq 0:\nonumber\\
    &x^j_{t+1}=f(x^j_t,u^j_t)+d^j_t,~ d^j_t\in D(x^j_t,u^j_t)\quad\nonumber\\
    &x^j_0=x_S,~ x^j_t\in\mathcal{X},~ u^j_t=\pi^j(x_t)\in\mathcal{U},\nonumber\\
    & \text{dist}_{\mathcal{X}_G}(x^j_t)\rightarrow 0
\end{align}
where $\text{dist}_{S}(x)=\min_{y\in S}=||x-y||_2$ is the distance of $x$ from the set $S$, $\pi^j(\cdot)$ is the policy for iteration $j$ and $x_t^j$, $u_t^j$ are the state and input of \eqref{eq:sysdyn} respectively at time $t$. The quantities for the nominal system  \eqref{eq:nom_dyn} are similarly denoted as  $\bar x^j_t, \bar u^j_t, \bar y^j_t$.
 \subsubsection*{Approach Overview} We iteratively synthesize policies $\pi^j(\cdot)$ for iterations $j\geq 1$ using the Robust MPC \eqref{eq:control}, given an initial iteration $0$ satisfying \eqref{eq:iter_spec}. The design elements (D1)-(D3) of the Robust MPC optimization problem at iteration $j$ are constructed using historical nominal trajectory data $\{\{(x^i_t, u^i_t)\}_{t\geq 0}\}_{i=0}^{j-1}$in the following steps:

 \begin{enumerate}
     \item First, we use set-membership techniques to compute outer-approximations of the set \begin{align}\label{eq:calD}
     \mathcal{D}=\bigcup\limits_{(x,u)\in\mathcal{X}\times\mathcal{U}}D(x,u),
 \end{align}which is the set of all disturbance values within constraints,  by using Assumption~\ref{ass:dQC} and trajectory data $\{\{(x^i_t,u^i_t)\}_{t\geq 0}\}_{i=0}^{j-1}$. These outer-approximations get progressively tighter with increasing iterations,  and are used for constructing the error invariant $\mathcal{E}^j$ at iteration $j$ with a fixed error policy $\kappa(\cdot)$ for (D\ref{D:einv}) in Section \ref{ssec:err_inv}.
 \item For the tightened constraints in (D\ref{D:constr}), it suffices to set $\bar{\mathcal{X}}^j=\mathcal{X}\ominus\mathcal{E}^j$, $\bar{\mathcal{U}}^j=\mathcal{U}\ominus\kappa(\mathcal{E}^j)$  to ensure that $\bar{x}^j_t\in\bar{\mathcal{X}}^j, \bar{u}^j_t\in\bar{\mathcal{U}}^j, e^j_t\in\mathcal{E}^j\Rightarrow x^j_t\in\mathcal{X}, u^j_t\in\mathcal{U}$ (by definition of the $\ominus$ operator). However for constructing the terminal set using historical data such that \eqref{eq:OP_RMPC_nom} has a feasible solution $\forall t\geq 0$ (for system \eqref{eq:sysdyn} in closed-loop with \eqref{eq:control}), we impose additional constraints on $\{\{(\bar{x}^i_t,\bar{u}^i_t)\}_{t\geq 0}\}_{i=0}^{j-1}$ in Section~\ref{ssec:constraints} that involve the bounding functions from Assumption \ref{ass:flat}\eqref{ass:flatmap_convex}. 
 \item The terminal set $\bar{\mathcal{X}}_N^j$ is designed by constructing a convex set using data $\{\{[\bar{y}^i_t,..,\bar{y}^i_{t+R-1}]\}_{t\geq 0}\}_{i=0}^{j-1}$, and taking its image under the map $\mathcal{F}_x(\cdot)$ from Assumption \ref{ass:flat}\eqref{ass:flat_class}. We provide a constructive proof in Section \ref{ssec:ConvOPSS} for showing that this set is control invariant: Definition \ref{def:diffFlat} and Assumption \ref{ass:flat}\eqref{ass:flat_class} together guarantee the existence of a control input to keep the state inside the set $\bar{\mathcal{X}}_N^j$, and Assumption \ref{ass:flat}\eqref{ass:flatmap_convex} ensures that this input is within the tightened input constraints $\bar{\mathcal{U}}^j$. The terminal cost is constructed using Barycentric interpolation in the space $[\bar{y}_t,..,\bar{y}_{t+R-1}]$ and verified to be a CLF.
 \end{enumerate}
 
 In Section~\ref{sec:LMPC_analysis}, the resulting trajectories of system \eqref{eq:sysdyn} in closed-loop with $\pi^j(\cdot)$, are shown to satisfy \eqref{eq:iter_spec}, and have non-increasing trajectory costs with increasing iterations.
\section{Robust Output-Lifted Learning MPC}\label{sec:ROLMPC}
In this section, we detail the design of our Robust MPC scheme and its components using our iterative setup.
\subsection{Construction of Error Invariant Set}\label{ssec:err_inv}
For element (D\ref{D:einv}), we first conservatively over-approximate the state-dependent disturbance set $D(\cdot)$ as an i.i.d disturbance $d_t$ with bounded support $\hat{\mathcal{D}}\supset\mathcal{D}$.  Consider the following uncertain system
\begin{align}\label{eq:fake_unc_dyn}
  x_{t+1}&=f(x_t,\bar{u}_t+\kappa(x_t-\bar{x}_t))+d_t\nonumber\\
  \bar{x}_{t+1}&=f(\bar{x}_t,\bar{u}_t)
\end{align}
where $d_t$ is the process noise with support $\hat{\mathcal{D}}$. Define the corresponding error system with error state $e_t=x_t-\bar x_t$ and uncertain dynamics
\begin{align}\label{eq:fake_errdyn}
    e_{t+1}=f_e(e_t,\bar{x}_t,\bar{u}_t)+d_t
\end{align}
where $f_e(\cdot)$ is defined as in \eqref{eq:error_dyn}. Let $\mathcal{E}$ be a Robust  Positive Invariant (RPI) set for \eqref{eq:fake_errdyn}:  
\begin{align*}
&\forall \bar{x}_t\in\bar{\mathcal{X}},\forall \bar{u}_t\in\bar{\mathcal{U}}:\\
&e_t\in\mathcal{E}\Rightarrow e_{t+1}=f_e(e_t,\bar x_t, \bar u_t)+d_t\in\mathcal{E},~\forall d_t\in\hat{\mathcal{D}}. 
\end{align*}
Constructing RPI $\mathcal{E}$ and error policy $\kappa(\cdot)$ for nonlinear systems is difficult in general. However under additional assumptions on $f(\cdot)$ (cf. smoothness, Lipchitz continuity, incremental stabilizability),  it is common in the nonlinear MPC literature to fix a policy $\kappa(e_t)$ and compute $\mathcal{E}$, or compute both jointly \cite{rakovic2005invariant, yu2013tube, jeantube}. In view of this, we make the following assumption to construct the error invariant for the \textit{actual} error dynamics \eqref{eq:error_dyn} in Proposition~\ref{prop:errinv}.
\begin{assumption}\label{ass:einv}
Given disturbance support $\hat{\mathcal{D}}\supset\mathcal{D}$ and a known, fixed linear policy $\kappa(e_t)=Ke_t$, the RPI set $\mathcal{E}$ for \eqref{eq:fake_errdyn} can be computed such that
\begin{align*}
&\forall \bar{x}_t\in\bar{\mathcal{X}},\forall \bar{u}_t\in\bar{\mathcal{U}}:\\
&e_t\in\mathcal{E}\Rightarrow e_{t+1}=f_e(e_t,\bar x_t, \bar u_t)+d_t\in\mathcal{E},~\forall d_t\in\hat{\mathcal{D}}. 
\end{align*}
\end{assumption}
\begin{proposition}\label{prop:errinv}
Let $\mathcal{E}$ by a Robust Positive Invariant set for \eqref{eq:fake_errdyn} with $\kappa(e_t)=Ke_t$ for some disturbance support $\hat{\mathcal{D}}\supset\mathcal{D}$. Then for the \textit{actual} error system given by \eqref{eq:error_dyn} with error state $e_t=x_t-\bar x_t$, we have
\begin{align*}
&\forall \bar{x}_t\in\mathcal{X}\ominus\mathcal{E},\forall \bar{u}_t\in\mathcal{U}\ominus K\mathcal{E}:\\
&e_t\in\mathcal{E}\Rightarrow e_{t+1}=f_e(e_t,\bar x_t, \bar u_t)+d_t\in\mathcal{E},\forall d_t\in D(e_t+\bar{x}_t,u_t). 
\end{align*}
\end{proposition}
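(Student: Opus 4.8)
The plan is to exploit the fact that the error map $f_e(\cdot)$ appearing in the \emph{actual} error dynamics \eqref{eq:error_dyn} is \emph{identical} to the one in the fictitious dynamics \eqref{eq:fake_errdyn}; the sole discrepancy between the two systems is the disturbance, which ranges over the state-dependent set $D(e_t+\bar{x}_t,u_t)$ in the former and over the fixed support $\hat{\mathcal{D}}$ in the latter. Hence it suffices to show that, under the stated hypotheses, every admissible realization of the actual disturbance is also an admissible realization of the fictitious one, i.e. $D(e_t+\bar{x}_t,u_t)\subseteq\hat{\mathcal{D}}$, after which the invariance granted by Assumption~\ref{ass:einv} transfers verbatim.

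First I would fix arbitrary $\bar{x}_t\in\mathcal{X}\ominus\mathcal{E}$, $\bar{u}_t\in\mathcal{U}\ominus K\mathcal{E}$, and $e_t\in\mathcal{E}$, and recover the true state/input pair $x_t=e_t+\bar{x}_t$ and $u_t=\bar{u}_t+\kappa(e_t)=\bar{u}_t+Ke_t$. Using the defining property of the Pontryagin difference, namely $a\in A\ominus B \Rightarrow a+b\in A$ for all $b\in B$, I would conclude $x_t\in\mathcal{X}$ (since $\bar{x}_t\in\mathcal{X}\ominus\mathcal{E}$ and $e_t\in\mathcal{E}$) and $u_t\in\mathcal{U}$ (since $\bar{u}_t\in\mathcal{U}\ominus K\mathcal{E}$ and $Ke_t\in K\mathcal{E}$). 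Thus $(x_t,u_t)\in\mathcal{X}\times\mathcal{U}$.

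By the definition of $\mathcal{D}$ in \eqref{eq:calD} as the union of $D(x,u)$ over all $(x,u)\in\mathcal{X}\times\mathcal{U}$, the inclusion $(x_t,u_t)\in\mathcal{X}\times\mathcal{U}$ immediately gives $D(e_t+\bar{x}_t,u_t)\subseteq\mathcal{D}$, and hence $D(e_t+\bar{x}_t,u_t)\subseteq\hat{\mathcal{D}}$ because $\hat{\mathcal{D}}\supset\mathcal{D}$. Consequently any $d_t\in D(e_t+\bar{x}_t,u_t)$ also satisfies $d_t\in\hat{\mathcal{D}}$. Since $\mathcal{X}\ominus\mathcal{E}$ and $\mathcal{U}\ominus K\mathcal{E}$ are exactly the tightened constraints $\bar{\mathcal{X}},\bar{\mathcal{U}}$ under which Assumption~\ref{ass:einv} asserts the RPI property of $\mathcal{E}$ for \eqref{eq:fake_errdyn}, applying that assumption with the particular disturbance $d_t\in D(e_t+\bar{x}_t,u_t)\subseteq\hat{\mathcal{D}}$ yields $e_{t+1}=f_e(e_t,\bar{x}_t,\bar{u}_t)+d_t\in\mathcal{E}$, which is the claim.

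I do not expect a substantive obstacle here: the entire argument is a set-inclusion chase, and the only conceptual point is recognizing that the state-dependent disturbance set collapses into the conservative support $\hat{\mathcal{D}}$ precisely because the constraints have been tightened so that $(x_t,u_t)$ remains feasible. The only step requiring care is the bookkeeping of the two Pontryagin-difference inclusions $x_t\in\mathcal{X}$ and $u_t\in\mathcal{U}$, which is exactly what the tightened constraints $\bar{\mathcal{X}},\bar{\mathcal{U}}$ were engineered to guarantee.
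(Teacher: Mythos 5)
Your proof is correct and follows essentially the same route as the paper's: use the Pontryagin-difference inclusions to get $x_t\in\mathcal{X}$, $u_t\in\mathcal{U}$, deduce $D(e_t+\bar{x}_t,u_t)\subseteq\mathcal{D}\subseteq\hat{\mathcal{D}}$ from \eqref{eq:calD}, and then transfer the RPI property of $\mathcal{E}$ from \eqref{eq:fake_errdyn} to \eqref{eq:error_dyn}. Even your identification of $\mathcal{X}\ominus\mathcal{E}$, $\mathcal{U}\ominus K\mathcal{E}$ with the sets on which Assumption~\ref{ass:einv} grants invariance mirrors the paper's own (slightly loose) usage, so there is nothing substantive to add.
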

\begin{proof}
Since $\mathcal{E}$ is RPI for \eqref{eq:fake_errdyn} with $\kappa(e_t)=Ke_t$ and disturbance support $\hat{\mathcal{D}}$, we have  
\begin{align*}
&\forall \bar{x}_t\in\mathcal{X}\ominus \mathcal{E},\forall \bar{u}_t\in\mathcal{U}\ominus K\mathcal{E}:\\
&e_t\in\mathcal{E}\Rightarrow e_{t+1}=f_e(e_t,\bar x_t, \bar u_t)+d_t\in\mathcal{E},~\forall d_t\in\hat{\mathcal{D}}. 
\end{align*}
Since $x_t=\bar{x}_t+e_t\in\mathcal{X}$ and $u_t=\bar{u}_t+Ke_t\in\mathcal{U}$ for any $e_t\in\mathcal{E}$, $\bar{x}_t\in\mathcal{X}\ominus\mathcal{E}$, $\bar{u}_t\in\mathcal{U}\ominus K\mathcal{E}$, we have $d_t\in D(e_t+\bar{x}_t,u_t)\subset\mathcal{D}$ from \eqref{eq:calD}. The desired result then follows because $\mathcal{D}\subset\hat{\mathcal{D}}$.
\end{proof}
 The final step towards obtaining $\mathcal{E}$ for (D\ref{D:einv}) is to construct an outer-approximation of $\mathcal{D}$. We use state-input trajectory data collected from our iterative setup using set membership techniques to approximate the \textit{graph} of $D(\cdot)$, defined below. 

\begin{definition}[Graph]\label{def:graph}
The graph of the set-valued map $D: \mathbb{R}^{n+m}\rightarrow 2^{\mathbb{R}^n}$ is defined as the set 
\begin{align}\label{graphdef}
    G(D)=\{(x,u,z)\in\mathbb{R}^{n+m+n}: \ (x,u)\in\mathbb{R}^{n+m},z\in D(x,u)\}.
\end{align}
\end{definition}

In the next proposition we establish an approximation of the graph $G(D)$ using the trajectory data $\{\{(x^i_t,u^i_t)\}_{t\geq0}\}_{i=0}^{j-1}$ and Assumption \ref{ass:dQC}.
\begin{proposition}\label{prop:qc}
Given data $\{x^i_t, u^i_t, x^i_{t+1}\}$ for system \eqref{eq:sysdyn}, define $d^i_t=x^i_{t+1}-f(x^i_t,u^i_t)$,  $q^i_t=(x^i_t,u^i_t)$, where $d^i_t\in D(x^i_t,u^i_t)$ from Assumption~\ref{ass:dQC} and consider the set
\begin{align}\label{eq:E_approx_jt}
    \mathcal{G}^i_{t}=\left\{(q,z)\middle|\ \begin{aligned}{\small{ \begin{bmatrix}1\\q-q^i_t\\z-d^i_{t}\end{bmatrix}^\top Q^{(j)} \begin{bmatrix}1\\q-q^i_t\\z-d^i_{t}\end{bmatrix}\geq 0 }},~\forall Q^{(j)}\in\boldsymbol{\mathcal{Q}}\end{aligned}\right\} .
\end{align}
Then the graph of $D(\cdot)$ satisfies,
\begin{align}\label{eq:E_inter}
    G(D)\subseteq \bigcap_{i=0}^{j-1}\bigcap_{t\geq0}\mathcal{G}^i_t.
\end{align}
\end{proposition}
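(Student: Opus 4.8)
The plan is to establish the inclusion pointwise: I would fix an arbitrary element of the graph $G(D)$ and verify that it belongs to every set $\mathcal{G}^i_t$ appearing in the intersection. The entire argument is a direct instantiation of the incremental quadratic constraint in Assumption~\ref{ass:dQC}, with the only point requiring care being the notational overlap between the symbol $z$, which denotes the disturbance coordinate of the graph in this proposition, and the symbol $z=(x,u)$, which denotes a state-input pair in Assumption~\ref{ass:dQC}.

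First I would take any $(x,u,z)\in G(D)$ and set $q=(x,u)$, so that by Definition~\ref{def:graph} we have $z\in D(q)$. Next I would record that each data triple furnishes a second admissible pair for the quadratic constraint: by its definition $d^i_t=x^i_{t+1}-f(x^i_t,u^i_t)$ is, in view of \eqref{eq:sysdyn}, a realized disturbance, so that $d^i_t\in D(q^i_t)$ with $q^i_t=(x^i_t,u^i_t)$, exactly as asserted in the statement.

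The central step is then to apply Assumption~\ref{ass:dQC} under the identification $(x,u)\mapsto q$, $(x',u')\mapsto q^i_t$, $d\mapsto z$, and $d'\mapsto d^i_t$. Since $z\in D(q)$ and $d^i_t\in D(q^i_t)$, the assumption delivers, for every $Q^{(j)}\in\boldsymbol{\mathcal{Q}}$,
\begin{align*}
\begin{bmatrix}1\\q-q^i_t\\z-d^i_t\end{bmatrix}^\top Q^{(j)}\begin{bmatrix}1\\q-q^i_t\\z-d^i_t\end{bmatrix}\geq 0,
\end{align*}
which is precisely the defining inequality of $\mathcal{G}^i_t$ in \eqref{eq:E_approx_jt}; hence $(q,z)\in\mathcal{G}^i_t$.

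Finally, since the indices $i\in\{0,\dots,j-1\}$ and $t\geq 0$ were arbitrary, the same point lies in every $\mathcal{G}^i_t$, so $(x,u,z)\in\bigcap_{i=0}^{j-1}\bigcap_{t\geq0}\mathcal{G}^i_t$; and as $(x,u,z)\in G(D)$ was arbitrary, the inclusion \eqref{eq:E_inter} follows. I do not anticipate any genuine obstacle here, since the statement is essentially a reindexing of Assumption~\ref{ass:dQC} evaluated at pairs consisting of a query point and a data point. The only matters demanding vigilance are the bookkeeping of which argument of the quadratic form is a state-input pair and which is a disturbance, and the confirmation that the data-defined $d^i_t$ is genuinely an element of $D(q^i_t)$, so that the assumption is legitimately applicable.
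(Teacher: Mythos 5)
Your proof is correct and follows essentially the same route as the paper's: both arguments fix an arbitrary $(q,z)\in G(D)$, note that each data pair satisfies $d^i_t\in D(q^i_t)$ so that $(q^i_t,d^i_t)\in G(D)$, and then instantiate the incremental quadratic constraint of Assumption~\ref{ass:dQC} at these two points to obtain exactly the defining inequality of $\mathcal{G}^i_t$, before intersecting over all $i$ and $t$. Your additional care about the notational overlap of $z$ (disturbance coordinate here versus state-input pair in Assumption~\ref{ass:dQC}) is a sound clarification but does not change the substance of the argument.
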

\begin{proof}\textit{Proof in the appendix}
\end{proof}
Now we present an approach to use the bound \eqref{eq:E_inter} on $G(D)$ proposition to construct $\hat{\mathcal{D}}\supset\mathcal{D}$ using semi-definite programming. We use the S-procedure to construct an ellipsoidal outer-approximation of the set in the RHS of \eqref{eq:E_inter}.

\begin{theorem}\label{thm:D_cnstrct}[\textbf{Support Construction}]
Given trajectories $\{\{(x^i_t,u^i_t)\}_{t\geq0}\}_{i=0}^{j-1}$ of system \eqref{eq:sysdyn} satisyfing Assumption~\ref{ass:dQC}, let $\lambda^{j\star}>0, S^{j\star}\in\mathbb{S}^{n}_{++}, c^{j\star}\in\mathbb{R}^n$ be the optimal solution of the SDP:
    \small
    \begin{align}\label{eq:LMI0}
    &\min_{\substack{S^j, c^j, \lambda^j,B_x, B_u, \tau^{i,l}_t\\
\forall t\in\mathbb{I}_0^{T_{max}},~\forall l\in\mathbb{I}_1^{|\boldsymbol{\mathcal{Q}}|},~\forall i\in\mathbb{I}_0^{j-1}}} \text{tr}(S^j)\nonumber\\
    &~~\text{s.t}\left[
\begin{array}{cccc}
    \multicolumn{3}{c}{\multirow{3}{*}{$\mathbf{M}^j_{11}$}} & -c^{j\top} \\
    \multicolumn{3}{c}{} & 0_n  \\
    \multicolumn{3}{c}{} &  \lambda^jI_n\\
     & \star & & S^j \\
\end{array} \right]\succeq 0,\nonumber\\
&~~~~~~\mathbf{M}^j_{11}=\lambda^je_{1}e_{1}^\top-M_{x}-M_u-\sum\limits_{i=0}^{j-1}\sum\limits_{t=0}^{T_{max}}\sum\limits_{l=1}^{|\boldsymbol{\mathcal{Q}}|}\tau_{t}^{i,l}M^{i,l}_t,\nonumber\\
&~~~~~~M_{x}=\begin{bmatrix}-lb_x^\top (B_x+B^\top_x) ub_x&(ub_x-lb_x)^\top B_x & O_n\\ &-B_x-B^\top_x & O_n\\ \star& &O_n\end{bmatrix},\nonumber\\
&~~~~~~M_{u}=\begin{bmatrix}-lb_u^\top (B_u+B^\top_u) ub_u&(ub_u-lb_u)^\top B_u & O_n\\ &-B_u-B^\top_u & O_n\\ \star& &O_n\end{bmatrix},\nonumber\\
&~~~~~~M^{i,l}_{t}=(I_{2n+1}-\begin{bmatrix}
    0\\q_t^i\\d_t^i
    \end{bmatrix}e_1^\top)^\top Q^{(l)}(I_{2n+1}-\begin{bmatrix}
    0\\q_t^i\\d_t^i
    \end{bmatrix}e_1^\top),\nonumber\\
 &~~~~~~S^j\succ 0, \lambda^j>0, B_x\geq 0, B_u\geq 0, \tau^{i,l}_t\geq 0,\nonumber\\
 &~~~~~~~\forall t\in\mathbb{I}_0^{T_{max}}, l\in\mathbb{I}_1^{|\boldsymbol{\mathcal{Q}}|}, i\in\mathbb{I}_0^{j-1} \nonumber\\
 &~~~~\text{if } j>1:\nonumber\\
 &~~~~~~\begin{bmatrix}1 & -\tilde{c}^{j-1\top}\\
 \star & \tilde{S}^{j-1}
 \end{bmatrix}\geq \begin{bmatrix}
 \lambda^j & -c^{j\top}\\
 \star & S^j
 \end{bmatrix}
    \end{align}
    \normalsize
   Then for $\tilde{S}^{j}=\frac{1}{\lambda^{\star j}}S^{\star j}, \tilde{c}^j=\frac{1}{\lambda^{\star j}}c^{j\star}$, the ellipsoidal set $\hat{\mathcal{D}}^{j}=\{d: (d-\tilde{c}^{j})^\top (\tilde{S}^{j})^{-1}(d-\tilde{c}^{j})\leq 1\}\supset \mathcal{D}$. Moreover, feasibility of \eqref{eq:LMI0} for $j=1$, implies feasibility $\forall j>1$ and $\hat{\mathcal{D}}^{j}\subset\hat{\mathcal{D}}^{j-1}$.

\end{theorem}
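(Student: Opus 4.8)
The plan is to split the statement into three parts: (i) the containment $\hat{\mathcal{D}}^j\supset\mathcal{D}$, (ii) propagation of feasibility, and (iii) the nesting $\hat{\mathcal{D}}^j\subset\hat{\mathcal{D}}^{j-1}$; all three rest on combining a Schur-complement rewriting of \eqref{eq:LMI0} with the S-procedure and Proposition~\ref{prop:qc}. First I would apply the Schur complement with respect to the lower-right block $S^j\succ 0$ to rewrite the large LMI as $\mathbf{M}^j_{11}-N(S^j)^{-1}N^\top\succeq 0$, where $N=[\,-c^{j\top};\,0_n;\,\lambda^j I_n\,]$ is the last column. Writing $\eta=[1;q;z]$ and using $N^\top\eta=\lambda^j z-c^j=\lambda^j(z-\tilde c^j)$ together with $(S^j)^{-1}=\tfrac1{\lambda^j}(\tilde S^j)^{-1}$, this inequality evaluated on any $\eta$ whose first entry is $1$ reads $\lambda^j-\eta^\top M_x\eta-\eta^\top M_u\eta-\sum_{i,l,t}\tau^{i,l}_t\,\eta^\top M^{i,l}_t\eta-\lambda^j(z-\tilde c^j)^\top(\tilde S^j)^{-1}(z-\tilde c^j)\ge 0$.

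For the containment, take any $z\in\mathcal{D}$; by \eqref{eq:calD} there is $(x,u)\in\mathcal{X}\times\mathcal{U}$ with $z\in D(x,u)$, i.e. $(x,u,z)\in G(D)$. For $\eta=[1;x;u;z]$, the box memberships $x\in\mathcal{X}$, $u\in\mathcal{U}$ make $\eta^\top M_x\eta$ and $\eta^\top M_u\eta$ nonnegative (each encodes the products $ub_\star-\cdot\ge 0$, $\cdot-lb_\star\ge 0$ weighted by $B_x,B_u\ge 0$), while Proposition~\ref{prop:qc} gives $\eta^\top M^{i,l}_t\eta\ge 0$ for all $i,l,t$. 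Since $\tau^{i,l}_t\ge 0$, the inequality above collapses to $\lambda^j(z-\tilde c^j)^\top(\tilde S^j)^{-1}(z-\tilde c^j)\le\lambda^j$, and dividing by $\lambda^j>0$ yields $z\in\hat{\mathcal{D}}^j$. This is the cleanest part; the only bookkeeping is verifying the congruence $M^{i,l}_t=(I-ve_1^\top)^\top Q^{(l)}(I-ve_1^\top)$, $v=[0;q^i_t;d^i_t]$, which turns each defining constraint of $\mathcal{G}^i_t$ into the quadratic form $\eta^\top M^{i,l}_t\eta$.

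For the propagation of feasibility I would exploit the positive homogeneity of \eqref{eq:LMI0}: scaling $(\lambda^j,c^j,S^j,\tau^{i,l}_t,B_x,B_u)$ by any $\alpha>0$ scales the whole block matrix by $\alpha$, so the normalized iteration-$(j-1)$ optimizer $(1,\tilde c^{j-1},\tilde S^{j-1},\dots)$ is again feasible for the iteration-$(j-1)$ LMI. Moving to iteration $j$, the only new terms are the constraints $M^{j-1,l}_t$ from the freshly collected trajectory; setting their multipliers $\tau^{j-1,l}_t=0$ leaves $\mathbf{M}^j_{11}$ unchanged, so the large LMI still holds, and choosing $(\lambda^j,c^j,S^j)=(1,\tilde c^{j-1},\tilde S^{j-1})$ makes the appended inequality hold with equality. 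Feasibility at $j=1$ therefore propagates to all $j>1$ by induction, and the previous ellipsoid is always an admissible choice at the next iteration.

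The nesting $\hat{\mathcal{D}}^j\subset\hat{\mathcal{D}}^{j-1}$ is the delicate part, enforced by the appended inequality $\tilde W^{j-1}\succeq W^j$ with $W^j=\bigl[\begin{smallmatrix}\lambda^j&-c^{j\top}\\-c^j&S^j\end{smallmatrix}\bigr]$ and $\tilde W^{j-1}=\bigl[\begin{smallmatrix}1&-\tilde c^{j-1\top}\\-\tilde c^{j-1}&\tilde S^{j-1}\end{smallmatrix}\bigr]$. My plan is to recast $z\in\hat{\mathcal{D}}^j$ as $\zeta^\top(\lambda^j e_1e_1^\top-\Gamma^{j\top}(S^j)^{-1}\Gamma^j)\zeta\ge 0$ with $\zeta=[1;z]$ and $\Gamma^j=[-c^j\ \ \lambda^j I]$, do the same for $\hat{\mathcal{D}}^{j-1}$, and argue that $\tilde W^{j-1}\succeq W^j$ transfers to the required ordering of these test forms, using the trace objective and the recursive-feasibility bound $\text{tr}(S^{j\star})\le\text{tr}(\tilde S^{j-1})$ to pin down the scale $\lambda^{j\star}$. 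I expect this to be the main obstacle: the map $W^j\mapsto\tilde S^j=S^j/\lambda^j$ is nonlinear and couples the scale $\lambda^j$ with the shape $S^j$, so the Loewner inequality does not pass to the set inclusion by a one-line Schur argument; one must control the scale degeneracy before concluding $\hat{\mathcal{D}}^j\subset\hat{\mathcal{D}}^{j-1}$.
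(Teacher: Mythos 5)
Your parts (i) and (ii) are correct and follow essentially the paper's route. For containment, the paper does exactly your computation: Schur complement of the first LMI in \eqref{eq:LMI0}, congruence with $[1\ q^\top\ z^\top]^\top$, nonnegativity of the box terms and of the $\tau^{i,l}_t$-weighted terms via Proposition~\ref{prop:qc}, then division by $\lambda^{j\star}>0$. On feasibility propagation you are in fact \emph{more} careful than the paper, which only remarks that the $j=1$ solution remains feasible with the new multipliers set to zero: reused without rescaling, that point makes the appended inequality read $\bigl[\begin{smallmatrix}1 & -\tilde c^{1\top}\\ -\tilde c^{1} & \tilde S^{1}\end{smallmatrix}\bigr]\succeq\lambda^{1\star}\bigl[\begin{smallmatrix}1 & -\tilde c^{1\top}\\ -\tilde c^{1} & \tilde S^{1}\end{smallmatrix}\bigr]$, which can fail (e.g.\ if $\lambda^{1\star}>1$, or if the left matrix is indefinite). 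Your normalized candidate $(1,\tilde c^{j-1},\tilde S^{j-1})$ together with homogeneity of \eqref{eq:LMI0} is the correct repair, and it satisfies the appended inequality with equality.

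The gap in your proposal is part (iii), which you leave as an unresolved ``obstacle'' --- but your diagnosis of that obstacle is precisely where the paper's own proof breaks, so no completion along the paper's lines exists. The paper claims that congruence of the appended LMI with $[1\ z^\top]$ yields $1-(z-\tilde c^{j-1})^\top(\tilde S^{j-1})^{-1}(z-\tilde c^{j-1})\geq\lambda^{j\star}\bigl(1-(z-\tilde c^{j})^\top(\tilde S^{j})^{-1}(z-\tilde c^{j})\bigr)$. It does not: the quadratic form of $\bigl[\begin{smallmatrix}1 & -\tilde c^{j-1\top}\\ -\tilde c^{j-1} & \tilde S^{j-1}\end{smallmatrix}\bigr]$ is $1-2\tilde c^{j-1\top}z+z^\top\tilde S^{j-1}z$, which involves the shape matrix $\tilde S^{j-1}$ and not its inverse --- exactly the scale/shape coupling you flagged. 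Moreover the defect is not just in the proof step; the appended constraint genuinely does not enforce nesting. Take $n=1$, $\tilde c^{j-1}=0$, $\tilde S^{j-1}=1$, and $(\lambda^j,c^j,S^j)=(\tfrac12,\tfrac12,\tfrac12)$: then
\begin{align*}
\begin{bmatrix}1&0\\0&1\end{bmatrix}\succeq\begin{bmatrix}\tfrac12&-\tfrac12\\-\tfrac12&\tfrac12\end{bmatrix}
\end{align*}
holds (the difference has eigenvalues $0$ and $1$), and the trace bound $\mathrm{tr}(S^j)\leq\mathrm{tr}(\tilde S^{j-1})$ from your plan holds as well, yet $\hat{\mathcal{D}}^j=\{d:(d-1)^2\leq\ 1\}=[0,2]\not\subset[-1,1]=\hat{\mathcal{D}}^{j-1}$. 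So controlling the scale $\lambda^{j\star}$ via the trace objective cannot rescue the argument: the nesting claim requires reformulating the constraint in the inverse parametrization $\bigl((\tilde S^j)^{-1},(\tilde S^j)^{-1}\tilde c^j\bigr)$ (the standard ellipsoid-containment LMI), which in turn requires re-deriving the containment LMI in those same variables. As stated, neither your plan nor the paper's proof establishes $\hat{\mathcal{D}}^j\subset\hat{\mathcal{D}}^{j-1}$; you were right to isolate this as the hard part, and right that it is not closable by a one-line Schur argument.
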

\begin{proof}Take the Schur complement of the first LMI in \eqref{eq:LMI0} and multiply from both sides by $[1\ q^\top\ z^\top]\ \text{and}\ [1\ q^\top\ z^\top]^\top$ (where $q=(x,u)$) to get
\begin{align*}
    \lambda^{j\star}&\left(1-(z-\frac{c^{j\star}}{\lambda^{j\star}})^\top\left(\frac{S^{j\star}}{\lambda^{j\star}}\right)^{-1}(z-\frac{c^{j\star}}{\lambda^{j\star}})\right)\succ\\
    &(lb_x-x)^\top  (B^\star_x+B^{\star\top}_x)(x-ub_x)+\\
    &(lb_u-u)^\top  (B^\star_u+B^{\star\top}_u)(u-ub_u)+\\
    &\sum\limits_{i=0}^{j-1}\sum\limits_{t=0}^{T_{max}}\sum\limits_{l=1}^{|\boldsymbol{\mathcal{Q}}|}\tau_t^{i,l}\begin{bmatrix}1\\x-x^i_t\\z-d^i_{t}\end{bmatrix}^\top Q^{(l)} \begin{bmatrix}1\\x-x^i_t\\z-d^i_{t}\end{bmatrix}.
\end{align*}
Then $\forall z\in D(x,u)$, $x\in\mathcal{X}, u\in\mathcal{U}$, the three terms on the RHS are positive. For $\lambda^{j\star}>0$ we have $(z-\tilde{c}^j)^\top(\tilde{S}^{j})^{-1}(z-\tilde{c}^j)\leq 1$,  which proves $\hat{\mathcal{D}}^j\supset\mathcal{D}$.
For $j>1$, multiplying the last LMI in \eqref{eq:LMI0} from both sides by $[1\ z^\top], [1\ z^\top]^\top$ gives $1-(z-\tilde{c}^{j-1})^\top (\tilde{S}^{j-1})^{-1}(z-\tilde{c}^{j-1})\geq \lambda^{j\star}(1-(z-\tilde{c}^{j})^\top (\tilde{S}^{j})^{-1}(z-\tilde{c}^j)$, which implies $\hat{\mathcal{D}}^{j}\subset\hat{\mathcal{D}}^{j-1}$. Feasibility of \eqref{eq:LMI0} for $j>1$ can be verified by noticing that the solution for $j=1$ is also feasible for $j>1$, with $\tau^{i,l}_t=0 ~\forall i\geq1, i<j$.
\end{proof}

\begin{assumption}\label{ass:E_init}
We are provided with a state-input trajectory $\{(x^0_t,u^0_t)\}_{t\geq0}$ of system \eqref{eq:sysdyn} for iteration $0$ for which SDP \eqref{eq:LMI0} is feasible for construction of $\mathcal{E}^{1}$ using Assumption \ref{ass:einv}.
\end{assumption}

By Assumption \ref{ass:E_init}, the trajectory data of iteration $j=0$ can be used to construct $\hat{\mathcal{D}}^1$ by solving \eqref{eq:LMI0}, and consequently by Assumption \ref{ass:einv}, the error invariant set $\mathcal{E}^1$ for a fixed policy $\kappa(e_t)=Ke_t$. For subsequent iterations, $\hat{\mathcal{D}}^j$ is given by solving \eqref{eq:LMI0} with an additional constraint for enforcing $\hat{\mathcal{D}}^j\subset \hat{\mathcal{D}}^{j-1}$, and RPI $\mathcal{E}^j$ is constructed with  $\kappa(e_t)=Ke_t$. By  Proposition~\ref{prop:errinv}, the sets $\mathcal{E}^i~ \forall i=1,\dots,j$ are positively invariant for the error dynamics \eqref{eq:error_dyn}. Since $\hat{\mathcal{D}}^j\subset\cdot\cdot\subset\hat{\mathcal{D}}^{1}$, we have 
    $\mathcal{E}^j\subseteq\cdot\cdot\subseteq\mathcal{E}^1$.

\subsection{Tightened State and Input Constraints}\label{ssec:constraints}
Given the error invariant $\mathcal{E}^{j}$ and policy $\kappa(e_t)=Ke_t$, consider the tightened state and input constraints $\hat{\mathcal{X}}^{j}=\mathcal{X}\ominus\text{Box}(\mathcal{E}^{j})=\{x| lb^j_x\leq x\leq ub^j_x\}$ and $\hat{\mathcal{U}}^j=\mathcal{U}\ominus \text{Box}(K\mathcal{E}^{j})=\{u| lb^j_u\leq u\leq ub^j_u\}$ where $\text{Box}(S)$ denotes smallest box set that contains $S$. We tighten $\hat{\mathcal{X}}^{j}\times\hat{\mathcal{U}}^{j}$ further such that for any $(\bar{x},\bar{u})\in\hat{\mathcal{X}}^{j}\times \hat{\mathcal{U}}^{j}$ with $\mathcal{F}(\bar{\mathbf{Y}})=(\bar{x},\bar{u})$, we also have $[\mathcal{F}^{\cap}(\bar{\mathbf{Y}}),\mathcal{F}^{\cup}(\bar{\mathbf{Y}})]\subset \hat{\mathcal{X}}^{j}\times \hat{\mathcal{U}}^{j}$. Since $\hat{\mathcal{X}}^{j}\times \hat{\mathcal{U}}^{j}$ are box constraints, this tightening can be expressed as
\begin{align*}
\bar{\mathcal{S}}^j_x&=\left\{
\bar{x}\in\hat{\mathcal{X}}^{j}\middle\vert
\begin{aligned}
&\exists\bar{\mathbf{y}}\in\mathbb{R}^{m\times R}, \bar{x}=\mathcal{F}_x(\bar{\mathbf{y}})\\
&lb^j_x\leq\mathcal{F}_x^{\cap}(\bar{\mathbf{y}}), \mathcal{F}_x^{\cup}(\bar{\mathbf{y}})\leq ub^j_x
\end{aligned}
\right\}\\
\bar{\mathcal{S}}^j_u&=\left\{
\bar{u}\in\hat{\mathcal{U}}^{j}\middle\vert
\begin{aligned}
&\exists\bar{\mathbf{Y}}\in\mathbb{R}^{m\times R+1}, \bar{u}=\mathcal{F}_u(\bar{\mathbf{Y}})\\
&lb^j_u\leq\mathcal{F}_u^{\cap}(\bar{\mathbf{Y}}), \mathcal{F}_u^{\cup}(\bar{\mathbf{Y}})\leq ub^j_u
\end{aligned}
\right\}
\end{align*}
where $\mathcal{F}^{\cup}_x(\cdot)=[\mathcal{F}^{1,\cup}(\cdot),..,\mathcal{F}^{n,\cup}(\cdot)],\ \mathcal{F}^{\cup}_u(\cdot)=[\mathcal{F}^{n+1,\cup}(\cdot),..,\mathcal{F}^{n+m,\cup}(\cdot)]$ and $\mathcal{F}^\cap_x(\cdot), \mathcal{F}^\cap_u(\cdot)$ are defined similarly.
We obtain box sets $\bar{\mathcal{X}}^{j}$, $\bar{\mathcal{U}}^{j}$ for iteration $j$ recursively, by using the box sets $\bar{\mathcal{X}}^{j-1}$, $\bar{\mathcal{U}}^{j-1}$ from iteration $j-1$ to solve the following nonlinear program,
\begin{align}\label{eq:tight_const_NLP}
\max_{\substack{\alpha^j_x\geq 1,\alpha^j_u\geq 1,v^j_x,v^j_u}}\quad&\alpha^j_x+\alpha^j_u\nonumber\\
\ \text{s.t}\quad &\ \bar{\mathcal{X}}^{j-1}\subseteq\alpha^j_x\bar{\mathcal{X}}^{j-1}+v^j_x\subseteq \bar{\mathcal{S}}^j_{x},\nonumber\\
 &\ \bar{\mathcal{U}}^{j-1}\subseteq \alpha^j_u\bar{\mathcal{U}}^{j-1}+v^j_u\subseteq \bar{\mathcal{S}}^j_{u}
\end{align}
\normalsize
to consequently define the tightened state and input constraints for our Robust MPC design as
\begin{align}\label{eq:tight_cnstrnts}
\bar{\mathcal{X}}^{j}=\alpha^{j\star}_x\bar{\mathcal{X}}^{j-1}+v^{j\star}_x,\nonumber\\ \bar{\mathcal{U}}^{j}=\alpha^{j\star}_u\bar{\mathcal{U}}^{j-1}+v^{j\star}_u,
\end{align}
as illustrated in Figure~\ref{fig:xbar}. 
The recursion \eqref{eq:tight_cnstrnts} is initialised by solving \eqref{eq:tight_const_NLP} with relaxed constraints $\alpha^1_x\bar{\mathcal{X}}^{0}+v^1_x\subseteq\bar{\mathcal{S}}^1_x$, $\alpha^1_u\bar{\mathcal{U}}^{0}+v^1_u\subseteq\bar{\mathcal{S}}^1_u$, $\alpha^j_x\geq 0$,$\alpha^j_u\geq 0$. 
\begin{figure}[h]
    \centering
    \includegraphics[scale=0.32]{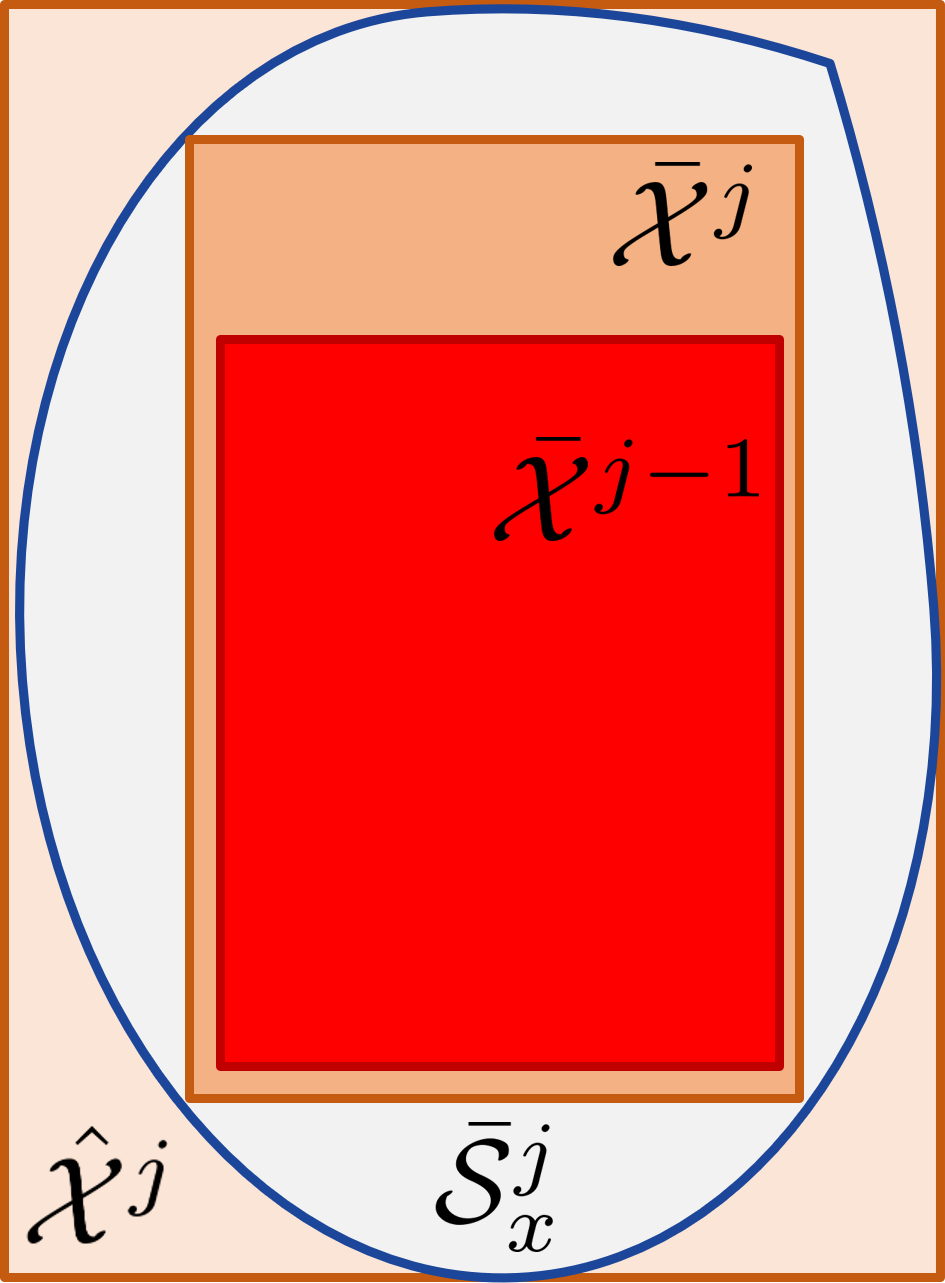}
    \caption{{\small{Construction of tightened state constraints $\bar{\mathcal{X}}^j$ via \eqref{eq:tight_const_NLP},\eqref{eq:tight_cnstrnts}. The set $\bar{\mathcal{S}}^j_x\subset \hat{\mathcal{X}}^j$ is constructed so that $lb^j_x\leq \mathcal{F}^{\cap}_x(\bar{\mathbf{Y}})\leq \mathcal{F}_x(\bar{\mathbf{Y}})=\bar{x}\leq\mathcal{F}^{\cup}_x(\bar{\mathbf{Y}})\leq ub^j_x$, which is required for property 3) in proposition~\ref{prop:tight_cons_NLP}.}}}
    \label{fig:xbar}
\end{figure}
\begin{remark}
\textit{The box sets' inclusion constraint $\bar{\mathcal{X}}^{j-1}\subseteq\alpha^j_x\bar{\mathcal{X}}^{j-1}+v^j_x$ in \eqref{eq:tight_const_NLP} is enforced using the diagonally opposite vertices via $2\cdot n$ inequalities. To enforce $\alpha^j_x\bar{\mathcal{X}}^{j-1}+v^j_x\subseteq \bar{\mathcal{S}}^j_{x}$, the $n-1$ dimensional facets of  $\alpha^j_x\bar{\mathcal{X}}^{j-1}+v^j_x$ are gridded and each grid point is constrained to lie within $ \bar{\mathcal{S}}^j_{x}$ (which is non-convex but simply connected\footnote{Shown by exploiting the continuity, surjectivity of $\mathcal{F}(\cdot)$ and convexity of the set $\{\bar{\mathbf{y}}| lb^j_x\leq \mathcal{F}^\cup_x(\bar{\mathbf{y}}), \mathcal{F}^\cap_x(\bar{\mathbf{y}})\leq ub^j_x\}$ \cite[Chapter 9]{munkres2000topology}.}, i.e., has no holes). The constraints involving $\bar{\mathcal{U}}^{j-1}$ are enforced similarly.}
\end{remark}
\begin{proposition}\label{prop:tight_cons_NLP} Given Assumption~\ref{ass:flat}, the tightened constraints \eqref{eq:tight_cnstrnts} and the nonlinear program \eqref{eq:tight_const_NLP} used for their construction satisfy the following properties:
\begin{enumerate}
    \item If problem \eqref{eq:tight_const_NLP} is feasible at iteration $j=1$, then it remains feasible for $\forall j\geq 1$ with $\bar{\mathcal{X}}^{j}\supseteq \bar{\mathcal{X}}^{j-1}$, $\bar{\mathcal{U}}^{j}\supseteq \bar{\mathcal{U}}^{j-1}$.
    \item $\bar{x}_t\in\bar{\mathcal{X}}^{j}, e_t\in\mathcal{E}^j, \bar{u}_t\in\bar{\mathcal{U}}^{j}\Rightarrow x_t\in\mathcal{X}, u_t\in\mathcal{U}$.
    \item Let $\{\bar{\mathbf{Y}}^1,\dots,\bar{\mathbf{Y}}^p\}$ be a set of lifted-outputs such that $\mathcal{F}(\bar{\mathbf{Y}}^k)\in\bar{\mathcal{X}}^j\times\bar{\mathcal{U}}^j~\forall k=1,\dots, p$. Then $\mathcal{F}(\bar{\mathbf{Y}})\in \bar{\mathcal{X}}^j \times \bar{\mathcal{U}}^j$ for any $\bar{\mathbf{Y}}\in\textrm{conv}(\{\bar{\mathbf{Y}}^1,\dots,\bar{\mathbf{Y}}^p\})$.
\end{enumerate}
\end{proposition}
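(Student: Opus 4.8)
The plan is to handle the three claims in turn, leaning on the set monotonicity $\mathcal{E}^j\subseteq\mathcal{E}^{j-1}$ established at the end of Section~\ref{ssec:err_inv} and on the sandwiching structure of Assumption~\ref{ass:flat}\eqref{ass:flatmap_convex}. For claim 1), I would argue by induction that the trivial choice $\alpha^j_x=\alpha^j_u=1$, $v^j_x=v^j_u=0$ is always feasible for \eqref{eq:tight_const_NLP}. Since $\mathcal{E}^j\subseteq\mathcal{E}^{j-1}$, the boxes $\hat{\mathcal{X}}^j=\mathcal{X}\ominus\text{Box}(\mathcal{E}^j)$ and $\hat{\mathcal{U}}^j=\mathcal{U}\ominus\text{Box}(K\mathcal{E}^j)$ can only grow with $j$ (i.e. $lb^j_x\le lb^{j-1}_x$, $ub^j_x\ge ub^{j-1}_x$, and likewise for the inputs), so the defining inequalities of $\bar{\mathcal{S}}^j_x,\bar{\mathcal{S}}^j_u$ relax and $\bar{\mathcal{S}}^{j-1}_x\subseteq\bar{\mathcal{S}}^j_x$, $\bar{\mathcal{S}}^{j-1}_u\subseteq\bar{\mathcal{S}}^j_u$. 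The outer inclusion of the NLP at iteration $j-1$ gives $\bar{\mathcal{X}}^{j-1}\subseteq\bar{\mathcal{S}}^{j-1}_x\subseteq\bar{\mathcal{S}}^j_x$, which is exactly the requirement for $\alpha^j_x=1,v^j_x=0$ to be feasible; feasibility at $j=1$ therefore propagates to all $j$. Monotone growth $\bar{\mathcal{X}}^j\supseteq\bar{\mathcal{X}}^{j-1}$ is then immediate from the inner inclusion $\bar{\mathcal{X}}^{j-1}\subseteq\alpha^{j\star}_x\bar{\mathcal{X}}^{j-1}+v^{j\star}_x=\bar{\mathcal{X}}^j$ enforced by \eqref{eq:tight_const_NLP}.

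Claim 2) follows directly from the chain $\bar{\mathcal{X}}^j\subseteq\bar{\mathcal{S}}^j_x\subseteq\hat{\mathcal{X}}^j=\mathcal{X}\ominus\text{Box}(\mathcal{E}^j)$ and $\bar{\mathcal{U}}^j\subseteq\hat{\mathcal{U}}^j=\mathcal{U}\ominus\text{Box}(K\mathcal{E}^j)$: $\bar{x}_t\in\bar{\mathcal{X}}^j$ gives $\bar{x}_t+\text{Box}(\mathcal{E}^j)\subseteq\mathcal{X}$ by definition of $\ominus$, and with $e_t\in\mathcal{E}^j\subseteq\text{Box}(\mathcal{E}^j)$ this yields $x_t=\bar{x}_t+e_t\in\mathcal{X}$; the input bound $u_t=\bar{u}_t+Ke_t\in\mathcal{U}$ is obtained identically using $Ke_t\in K\mathcal{E}^j\subseteq\text{Box}(K\mathcal{E}^j)$. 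Claim 3) is the substantive one and I would prove it componentwise. Writing $\bar{\mathbf{Y}}=\sum_{k=1}^p\lambda_k\bar{\mathbf{Y}}^k$ with $\lambda_k\ge0,\ \sum_k\lambda_k=1$, the first step is to use that $\mathcal{F}(\bar{\mathbf{Y}}^k)\in\bar{\mathcal{X}}^j\times\bar{\mathcal{U}}^j\subseteq\bar{\mathcal{S}}^j_x\times\bar{\mathcal{S}}^j_u$ to pin the bounding functions at each data point inside the relevant box, i.e. $lb^j_i\le\mathcal{F}^{i,\cap}(\bar{\mathbf{Y}}^k)$ and $\mathcal{F}^{i,\cup}(\bar{\mathbf{Y}}^k)\le ub^j_i$ for all $i,k$ (precisely the property $\bar{\mathcal{S}}^j$ is built to guarantee, cf. Figure~\ref{fig:xbar}). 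The second step upgrades the pairwise quasiconcavity/quasiconvexity of Assumption~\ref{ass:flat}\eqref{ass:flatmap_convex} to the $p$-point convex combination by a routine induction, giving $\mathcal{F}^{i,\cap}(\bar{\mathbf{Y}})\ge\min_k\mathcal{F}^{i,\cap}(\bar{\mathbf{Y}}^k)\ge lb^j_i$ and $\mathcal{F}^{i,\cup}(\bar{\mathbf{Y}})\le\max_k\mathcal{F}^{i,\cup}(\bar{\mathbf{Y}}^k)\le ub^j_i$. The final step chains the sandwich $\mathcal{F}^{i,\cap}\le\mathcal{F}^i\le\mathcal{F}^{i,\cup}$ to obtain $lb^j_i\le\mathcal{F}^i(\bar{\mathbf{Y}})\le ub^j_i$ for every $i=1,\dots,n+m$, placing $\mathcal{F}(\bar{\mathbf{Y}})$ back inside the box.

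\textbf{The main obstacle} I anticipate lies in the first step of claim 3). The set $\bar{\mathcal{S}}^j_x$ is defined through an \emph{existential} quantifier over lifted-output representations, so $\mathcal{F}_x(\bar{\mathbf{Y}}^k)\in\bar{\mathcal{S}}^j_x$ only certifies that \emph{some} representation has in-box bounding functions, not necessarily the given $\bar{\mathbf{Y}}^k$ (recall $\mathcal{F}_x$ need not be injective). I would close this by requiring the output sequences entering the terminal-set construction to be exactly those whose bounding functions meet the box constraints, so that the pointwise inequalities $\mathcal{F}^{i,\cup}(\bar{\mathbf{Y}}^k)\le ub^j_i$ and $\mathcal{F}^{i,\cap}(\bar{\mathbf{Y}}^k)\ge lb^j_i$ hold for the specific $\bar{\mathbf{Y}}^k$ used; reconciling these bounds against the tightened target box $\bar{\mathcal{X}}^j\times\bar{\mathcal{U}}^j$ (rather than the looser $\hat{\mathcal{X}}^j\times\hat{\mathcal{U}}^j$ appearing in the definition of $\bar{\mathcal{S}}^j$) is the delicate point. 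Once this pointwise bound is secured, the quasiconvex/quasiconcave extension together with the sandwiching property carries the argument through without further difficulty.
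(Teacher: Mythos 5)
Your proposal follows essentially the same route as the paper's proof: claim 1 via feasibility of the trivial point $\alpha^j_x=\alpha^j_u=1$, $v^j_x=v^j_u=0$ using $\mathcal{E}^j\subseteq\mathcal{E}^{j-1}$ and the resulting nestedness $\bar{\mathcal{S}}^{j-1}_x\subseteq\bar{\mathcal{S}}^{j}_x$, $\bar{\mathcal{S}}^{j-1}_u\subseteq\bar{\mathcal{S}}^{j}_u$; claim 2 via the inclusion chain $\bar{\mathcal{X}}^j\subseteq\bar{\mathcal{S}}^j_x\subseteq\hat{\mathcal{X}}^j$ (and likewise for inputs); and claim 3 via exactly the paper's two ingredients, namely the $p$-point interval bound $[\mathcal{F}^{\cap}(\bar{\mathbf{Y}}),\mathcal{F}^{\cup}(\bar{\mathbf{Y}})]\subseteq[\min_k\mathcal{F}^{\cap}(\bar{\mathbf{Y}}^k),\max_k\mathcal{F}^{\cup}(\bar{\mathbf{Y}}^k)]$ proved by induction from Assumption~\ref{ass:flat}\eqref{ass:flatmap_convex}, followed by the sandwich $\mathcal{F}^{\cap}\leq\mathcal{F}\leq\mathcal{F}^{\cup}$.

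The two obstacles you flag are genuine, and it is worth recording how the paper treats them. The representation issue --- that $\mathcal{F}(\bar{\mathbf{Y}}^k)\in\bar{\mathcal{S}}^j_x\times\bar{\mathcal{S}}^j_u$ only certifies \emph{some} lifted output with in-box bounding functions, not the given $\bar{\mathbf{Y}}^k$ --- is dispatched in the paper by a footnote, which invokes the uniqueness of the reconstructed pair $(\bar{x}^k,\bar{u}^k)$ from Definition~\ref{def:diffFlat} to swap the existential witness $\bar{\mathbf{Y}}^{'k}$ for $\bar{\mathbf{Y}}^k$ ``w.l.o.g.''; this is in substance your proposed fix of insisting that the specific sequences used (in practice, the data entering \eqref{eq:CSy_def} via \eqref{eq:nom_spec}) satisfy the bounding-function inequalities themselves, since Definition~\ref{def:diffFlat} alone does not make $\mathcal{F}$ injective and the convex hull changes if the witness is swapped. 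Your second delicate point, however, is \emph{not} closed by the paper either: its argument, like yours, establishes only $lb^j\leq\mathcal{F}^{\cap}(\bar{\mathbf{Y}})\leq\mathcal{F}(\bar{\mathbf{Y}})\leq\mathcal{F}^{\cup}(\bar{\mathbf{Y}})\leq ub^j$, i.e.\ membership of $\mathcal{F}(\bar{\mathbf{Y}})$ in the looser box $\hat{\mathcal{X}}^j\times\hat{\mathcal{U}}^j$ whose bounds define $\bar{\mathcal{S}}^j_x,\bar{\mathcal{S}}^j_u$, and then asserts in its final sentence membership in the strictly smaller set $\bar{\mathcal{X}}^j\times\bar{\mathcal{U}}^j$. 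So on this point your write-up is more candid than the source: as proved, statement 3 holds with $\hat{\mathcal{X}}^j\times\hat{\mathcal{U}}^j$ (equivalently, with $\bar{\mathcal{S}}^j_x\times\bar{\mathcal{S}}^j_u$, taking $\bar{\mathbf{Y}}$ itself as witness) in place of $\bar{\mathcal{X}}^j\times\bar{\mathcal{U}}^j$, and repairing it would require either tightening the inequalities defining $\bar{\mathcal{S}}^j_x,\bar{\mathcal{S}}^j_u$ to the bounds of $\bar{\mathcal{X}}^j\times\bar{\mathcal{U}}^j$ or weakening the stated conclusion and propagating that change through Proposition~\ref{prop:CS_CI} and Theorem~\ref{thm:rf}, which consume it.
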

\begin{proof} \textit{Proof in the appendix}
\end{proof}
 Property 1) ensures that recursion \eqref{eq:tight_cnstrnts} succeeds for iterations $j>1$. Property 2) proves that the tightened constraints satisfy (D\ref{D:constr}). Property 3) shows that given any set of historical lifted outputs mapping to state-input pairs within $\bar{\mathcal{X}}^j\times\bar{\mathcal{U}}^j$, \textit{any} convex combination of these historical lifted outputs also maps to state-input pairs within $\bar{\mathcal{X}}^j\times\bar{\mathcal{U}}^j$. This is proved using the bounding functions from Assumption \ref{ass:flat}\eqref{ass:flatmap_convex}, which are incorporated into the tightened constraints via $\bar{\mathcal{S}}^j_{x}, \bar{\mathcal{S}}^j_{u}$. In the next sub-section we construct a control invariant set using historical lifted output data that map to nominal state-input pairs within the tightened constraints.
\subsection{Terminal Set and Terminal Cost}\label{ssec:ConvOPSS}
In this section, we use historical trajectory data to iteratively construct the terminal set and terminal cost for our Robust MPC. Let $\{(x^j_t,u^j_t)\}_{t\geq 0}$ be a system trajectory satisfying the properties in \eqref{eq:iter_spec}. Then we recursively define the nominal \textit{Convex Output Safe Set} $\bar{\mathcal{CS}}^{j}_{\mathbf{y}}$ in three steps:
\begin{itemize}
    \item Step 1) Construct nominal trajectory data $\{(\bar{x}^{j}_t,\bar{u}^{j}_t)\}_{t\geq 0}$ satisfying:
    \begin{subequations}\label{eq:nom_spec}
    \begin{align}
    &x^{j}_t-\bar{x}^{j}_t\in\mathcal{E}^{j+1},\  \bar{x}^{j}_t\in\bar{\mathcal{X}}^{j+1}\label{eq:nom_a}\\
    &\bar{u}^{j}_t=u^{j}_t-K(x^{j}_t-\bar{x}^{j}_t)\in\bar{\mathcal{U}}^{j+1}\label{eq:nom_b}\\
    &\bar{x}^{j}_{t+1}=f(\bar{x}^{j}_t,\bar{u}^{j}_t)~~\forall t\geq 0\label{eq:nom_c}
    \end{align}
    \end{subequations}

    \item Step 2) Define $\bar{\mathbf{y}}^{j}_t=[\bar{y}^{j}_t,\dots,\bar{y}^{j}_{t+R-1}]\in\mathbb{R}^{m\times R}$ $\forall t\geq 0$ where $\bar{y}^{j}_t=h(\bar{x}^{j}_t)$.
    \item Step 3) Define the set $\bar{\mathcal{CS}}^{j}_{\mathbf{y}}$ as
    \begin{align}\label{eq:CSy_def}
    \bar{\mathcal{CS}}^{j}_{\mathbf{y}}=\text{conv}(\bar{\mathcal{CS}}^{j-1}_{\mathbf{y}}\bigcup_{t\geq 0}\bar{\mathbf{y}}^{j}_t).
    \end{align}
\end{itemize}


Note that each $\bar{\mathbf{y}}^{j}_t$ constructed in Step 2) uniquely identifies the nominal state $\bar{x}^{j}_{t}$ via the map \eqref{eq:flatclass_x}, $\mathcal{F}_x(\bar{\mathbf{y}}^{j}_t)=\bar{x}^{j}_t$ and similarly $\mathcal{F}_u(\bar{\mathbf{Y}}^{j}_t)=\bar{u}^{j}_t$ where $\bar{\mathbf{Y}}^{j}_t=[\bar{y}^{j}_{t},\dots,\bar{y}^{j}_{t+R}]\in\mathbb{R}^{m\times R+1}$ is the lifted output at time $t$ and iteration $j$.

Define  the forward-time shift $\delta(\cdot,\cdot)$ dynamics on $\bar{\mathcal{CS}}_{\mathbf{y}}^{j}$ as
\begin{align} \label{eq:time_shift}
   \bar{\mathbf{y}}_{t+1}&=[\bar{y}_{t+1},\bar{y}_{t+2},\dots,\bar{y}_{t+R}]\nonumber\\&=\delta( [\bar{y}_{t}, \bar{y}_{t+1},\dots,\bar{y}_{t+R-1}], \bar{y}_{t+R})\nonumber \\
   &=\delta(\bar{\mathbf{y}}_t,\bar{y}_{t+R})
\end{align}\normalsize
We now show that the Convex Safe Set  $\bar{\mathcal{CS}}_{\mathbf{y}}^{j}$ is in fact, control invariant for \eqref{eq:time_shift} in the following proposition and correspond to nominal states and inputs within constraints (as depicted in Figure~\ref{fig:prop_exp}).
\begin{figure}[h]
    \centering
    \includegraphics[scale=0.3]{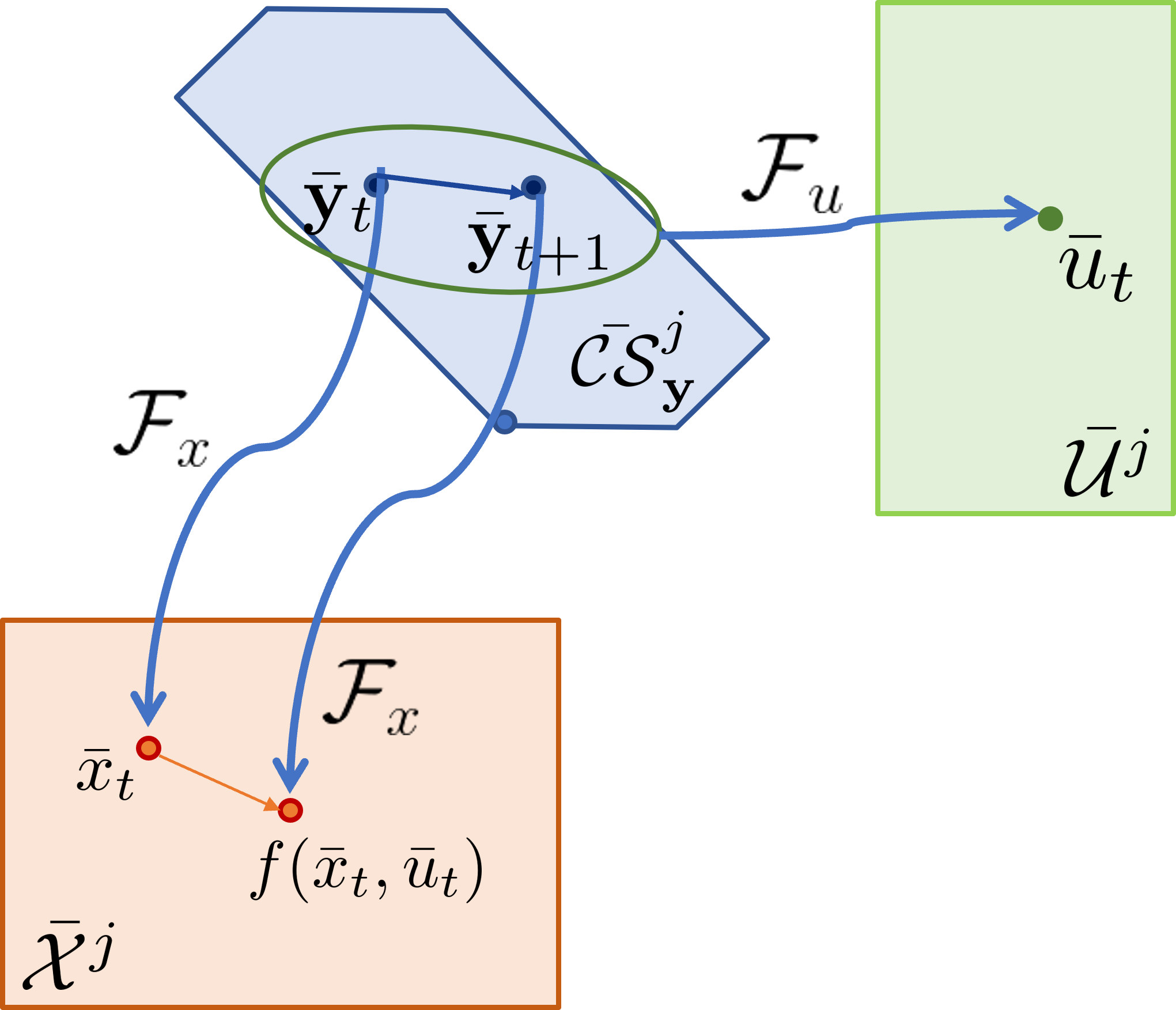}
    \caption{{\small{Illustration of the claim in Proposition \ref{prop:CS_CI}.}}}
    \label{fig:prop_exp}
\end{figure}
\begin{proposition}\label{prop:CS_CI}
Under Assumptions~\ref{ass:flat} and \ref{ass:box}, the set $\bar{\mathcal{CS}}_{\mathbf{y}}^{j}$ \eqref{eq:CSy_def} is control invariant for the forward-time shift dynamics \eqref{eq:time_shift}, i.e.,
\begin{align}\label{eq:imp_2_CI}
 \forall\bar{\mathbf{y}}_t\in\bar{\mathcal{CS}}_{\mathbf{y}}^{j},\exists \bar{y}_{t+R}\ :\ \bar{\mathbf{y}}_{t+1}=\delta(\bar{\mathbf{y}}_t,\bar{y}_{t+R})\in\bar{\mathcal{CS}}_{\mathbf{y}}^{j},    
\end{align} with
\begin{align*}
    \bar{x}_t=\mathcal{F}_x(\bar{\mathbf{y}}_t)\in\bar{\mathcal{X}}^j,&\ \bar{u}_t=\mathcal{F}_u(\bar{\mathbf{y}}_t,\bar{y}_{t+R})\in\bar{\mathcal{U}}^j,\\
    \mathcal{F}_x(\bar{\mathbf{y}}_{t+1})&=f(\bar{x}_t,\bar{u}_t)\in\bar{\mathcal{X}}^j.
\end{align*}
\end{proposition}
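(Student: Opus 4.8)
The plan is to exploit two structural facts: the time-shift map $\delta(\cdot,\cdot)$ is \emph{linear} in its arguments, so convex combinations of windows map to convex combinations of the shifted windows; and $\bar{\mathcal{CS}}^{j}_{\mathbf{y}}$ is, by construction, a convex hull of historical window-outputs. First I would unroll the recursion \eqref{eq:CSy_def} to write $\bar{\mathcal{CS}}^{j}_{\mathbf{y}}=\textrm{conv}\big(\bigcup_{i=0}^{j}\bigcup_{t\geq0}\bar{\mathbf{y}}^i_t\big)$, and denote this generating set by $\mathcal{P}$. The decisive observation is that every data window $\bar{\mathbf{y}}^i_t=[\bar y^i_t,\dots,\bar y^i_{t+R-1}]$ has a successor $\bar{\mathbf{y}}^i_{t+1}=\delta(\bar{\mathbf{y}}^i_t,\bar y^i_{t+R})\in\mathcal{P}$ along the same nominal trajectory, obtained by appending the genuine next output $\bar y^i_{t+R}$.

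For the invariance claim \eqref{eq:imp_2_CI}, I would take an arbitrary $\bar{\mathbf{y}}_t\in\bar{\mathcal{CS}}^{j}_{\mathbf{y}}$ and, by Carath\'eodory, write it as a finite convex combination $\bar{\mathbf{y}}_t=\sum_k\lambda_k\bar{\mathbf{y}}^{i_k}_{t_k}$ of points of $\mathcal{P}$. The key step is to pick the control move as the matching convex combination of the data successors' appended outputs, $\bar y_{t+R}=\sum_k\lambda_k\bar y^{i_k}_{t_k+R}$. Since $\delta$ merely drops the first column and appends $\bar y_{t+R}$, it is linear, so $\bar{\mathbf{y}}_{t+1}=\delta(\bar{\mathbf{y}}_t,\bar y_{t+R})=\sum_k\lambda_k\,\delta(\bar{\mathbf{y}}^{i_k}_{t_k},\bar y^{i_k}_{t_k+R})=\sum_k\lambda_k\bar{\mathbf{y}}^{i_k}_{t_k+1}$. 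Each successor $\bar{\mathbf{y}}^{i_k}_{t_k+1}\in\mathcal{P}\subseteq\bar{\mathcal{CS}}^{j}_{\mathbf{y}}$, so the right-hand side is a convex combination of points in the convex set $\bar{\mathcal{CS}}^{j}_{\mathbf{y}}$, whence $\bar{\mathbf{y}}_{t+1}\in\bar{\mathcal{CS}}^{j}_{\mathbf{y}}$, establishing control invariance.

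For the constraint-satisfaction claims I would assemble the full lifted output $\bar{\mathbf{Y}}_t=[\bar{\mathbf{y}}_t,\bar y_{t+R}]$. By the same linearity, $\bar{\mathbf{Y}}_t=\sum_k\lambda_k[\bar{\mathbf{y}}^{i_k}_{t_k},\bar y^{i_k}_{t_k+R}]=\sum_k\lambda_k\bar{\mathbf{Y}}^{i_k}_{t_k}$ is a convex combination of the data lifted outputs. By construction \eqref{eq:nom_spec} together with the nesting $\bar{\mathcal{X}}^{i}\subseteq\bar{\mathcal{X}}^{j}$, $\bar{\mathcal{U}}^{i}\subseteq\bar{\mathcal{U}}^{j}$ for $i\leq j$ (Proposition~\ref{prop:tight_cons_NLP}, property 1), every $\bar{\mathbf{Y}}^{i_k}_{t_k}$ satisfies $\mathcal{F}(\bar{\mathbf{Y}}^{i_k}_{t_k})\in\bar{\mathcal{X}}^j\times\bar{\mathcal{U}}^j$. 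Invoking Proposition~\ref{prop:tight_cons_NLP}, property 3, the convex combination maps inside as well, giving $\bar x_t=\mathcal{F}_x(\bar{\mathbf{y}}_t)\in\bar{\mathcal{X}}^j$ and $\bar u_t=\mathcal{F}_u(\bar{\mathbf{Y}}_t)\in\bar{\mathcal{U}}^j$. Applying property 3 once more to $\bar{\mathbf{y}}_{t+1}=\sum_k\lambda_k\bar{\mathbf{y}}^{i_k}_{t_k+1}$ (whose constituents map into $\bar{\mathcal{X}}^j$) yields $\mathcal{F}_x(\bar{\mathbf{y}}_{t+1})\in\bar{\mathcal{X}}^j$, while the equality $\mathcal{F}_x(\bar{\mathbf{y}}_{t+1})=f(\bar x_t,\bar u_t)$ is the forward-consistency of the difference-flat reconstruction from Definition~\ref{def:diffFlat}: flat outputs parametrise trajectories of \eqref{eq:nom_dyn}, so reconstructing from any window and from its forward shift is consistent with $f$.

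The main obstacle, and the reason the constraint machinery of Section~\ref{ssec:constraints} is needed, is that $\mathcal{F}$ is \emph{nonlinear}, so it does not commute with the convex combination: $\mathcal{F}_x(\sum_k\lambda_k\bar{\mathbf{y}}^{i_k}_{t_k})\neq\sum_k\lambda_k\mathcal{F}_x(\bar{\mathbf{y}}^{i_k}_{t_k})$ in general. The invariance part is immediate precisely because $\delta$ is linear; but membership in $\bar{\mathcal{X}}^j\times\bar{\mathcal{U}}^j$ cannot be obtained from convexity alone and rests entirely on property 3 of Proposition~\ref{prop:tight_cons_NLP}, which was engineered through the quasiconvex/quasiconcave envelopes $\mathcal{F}^{i,\cup},\mathcal{F}^{i,\cap}$ of Assumption~\ref{ass:flat}\eqref{ass:flatmap_convex} and the tightened sets $\bar{\mathcal{S}}^j_x,\bar{\mathcal{S}}^j_u$ so that convex combinations of admissible output data stay admissible. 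I would therefore take care to confirm that all generating data of $\bar{\mathcal{CS}}^{j}_{\mathbf{y}}$ indeed map into the relevant tightened constraints, which is the only genuine index bookkeeping in the argument; the two invocations of property 3 then close the proof.
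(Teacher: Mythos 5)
Your proposal is correct and follows essentially the same route as the paper's proof: write $\bar{\mathbf{y}}_t$ as a convex combination of stored data windows, append the matched convex combination $\bar{y}_{t+R}=\sum_k\lambda_k\bar{y}^{i_k}_{t_k+R}$ of the data successors' outputs, use linearity of $\delta(\cdot,\cdot)$ to get invariance, and invoke Proposition~\ref{prop:tight_cons_NLP}(3) for membership of $\mathcal{F}_x(\bar{\mathbf{y}}_t)$, $\mathcal{F}_u([\bar{\mathbf{y}}_t,\bar{y}_{t+R}])$ and $\mathcal{F}_x(\bar{\mathbf{y}}_{t+1})$ in the tightened constraints. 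The only step you compress is the equality $\mathcal{F}_x(\bar{\mathbf{y}}_{t+1})=f(\bar{x}_t,\bar{u}_t)$, which the paper spells out by introducing the intermediate inputs $\bar{u}_2,\dots,\bar{u}_{R-1}$ realizing the window $[\bar{y}_t,\bar{\mathbf{y}}_{t+1}]$ and appealing to the uniqueness of reconstruction in Definition~\ref{def:diffFlat} --- the same fact you cite as forward-consistency of the flat parametrization.
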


\begin{proof}\textit{Proof in the appendix}\end{proof}

Given trajectory data $\{\{(x^i_t,u^i_t)\}_{t\geq 0}\}_{i=0}^{j-1}$, the terminal set for our Robust MPC problem at iteration $j$ is defined as 
\begin{align}\label{eq:term_set}
    \bar{\mathcal{X}}^j_N:=\{\bar{x}\ | \exists\bar{\mathbf{y}}\in \bar{\mathcal{CS}}_{\mathbf{y}}^{j-1},\mathcal{F}_x(\bar{\mathbf{y}})=\bar{x}\}.
\end{align}
\begin{remark}\textit{Note that this set is constructed without any local linear approximations of the nominal dynamics \eqref{eq:nom_dyn}, and uses the full nonlinear dynamics \textit{implicitly} via the lifted output data $\{\{\bar{\mathbf{y}}^i_t\}_{t\geq 0}\}_{i=0}^{j-1}$ and map $\mathcal{F}(\cdot)$. }\end{remark}
Now we proceed to construct a terminal cost function which approximates the optimal cost-to-go from a state using lifted outputs from previous iterations. Construct box sets $\bar{\mathcal{X}}_G=\{x| \tilde{lb}^g_x\leq x\leq \tilde{ub}^g_x\}\subset\mathcal{X}_G\ominus\mathcal{E}^1$, and $\mathcal{U}_{nom}=\{u| \tilde{lb}^{nom}_u\leq u\leq \tilde{ub}^{nom}_u\}\subset \mathcal{U}\ominus K\mathcal{E}^1$ . Define the convex set \begin{align}\mathcal{Y}_G=\left\{\mathbf{Y}\in\mathbb{R}^{m\times R+1}\middle\vert\begin{aligned} &\mathbf{Y}=[\mathbf{y}, y],\\
&\tilde{lb}^g_x\leq\mathcal{F}_x^{\cap}(\mathbf{y}),\mathcal{F}_x^{\cup}(\mathbf{y})\leq \tilde{ub}^g_x, \\ &\tilde{lb}^{nom}_u\leq\mathcal{F}_u^{\cap}(\mathbf{Y}), \mathcal{F}_u^{\cup}(\mathbf{Y})\leq \tilde{ub}^{nom}_u \end{aligned}\right\},\end{align} and observe that $\bar{\mathbf{Y}}=[\bar{\mathbf{y}},y]\in\mathcal{Y}_G\Rightarrow \bar{x}=\mathcal{F}_x(\bar{\mathbf{y}})\in\mathcal{X}_G\ominus\mathcal{E}^1,  \bar{u}=\mathcal{F}_u(\bar{\mathbf{Y}})\in\mathcal{U}_{nom}$ by Assumption \ref{ass:flat}\eqref{ass:flatmap_convex}. For some iteration $i$ and time $t$, define the quantity
\begin{equation}\label{eq:ctg}
    \mathcal{C}^i_t=\sum\limits_{k\geq t} c(\bar{\mathbf{Y}}^i_k)
\end{equation}\normalsize
 where the function $c(\cdot)$ is convex, continuous and satisfies
 \begin{align}\label{stage_cost}
         c(\bar{\mathbf{Y}})=0~ \forall \bar{\mathbf{Y}}\in\mathcal{Y}_G,\  c(\bar{\mathbf{Y}})\succ 0 \ \forall \bar{\mathbf{Y}}\in\mathbb{R}^{m\times R+1}\backslash \mathcal{Y}_G.
 \end{align}\normalsize

For iteration $j$, we use \eqref{eq:ctg} to construct a cost function on the convex safe set $\bar{\mathcal{CS}}_{\mathbf{y}}^{j-1}$ using Barycentric interpolation \cite{jones2010polytopic} with tuples $(\bar{\mathbf{y}}^i_t, \mathcal{C}^i_t), \forall \bar{\mathbf{y}}^i_t\in\bar{\mathcal{CS}}_{\mathbf{y}}^{j-1}$.
\begin{equation}\label{eq:conv_cf}
\begin{aligned}
    \bar{Q}^{j-1}(\bar{\mathbf{y}})=\min\limits_{\substack{\lambda^i_k\in[0,1],\ \forall t\geq 0\\\forall i\in\mathbb{I}_0^{j-1}}} \quad & \sum\limits_{i=0}^{j-1}\sum\limits_{k\geq 0}\lambda^i_k\mathcal{C}^i_k\\[1ex]
		\text{s.t.} ~~\quad & \sum\limits_{i=0}^{j-1}\sum\limits_{k\geq 0}\lambda^i_k\bar{\mathbf{y}}^i_k=\bar{\mathbf{y}},\\
		& \sum\limits_{i=0}^{j-1}\sum\limits_{k\geq 0}\lambda^i_k=1
\end{aligned}
\end{equation}\normalsize

We make the following assumption to initialize our constructions of $\bar{\mathcal{CS}}^{j}_{\mathbf{y}}$ and $\bar{Q}^{j}(\cdot)$.
\begin{assumption}\label{ass:CS_init}
We are provided with a nominal state-input trajectory $\{(\bar{x}^0_t,\bar{u}^0_t)\}_{t\geq0}$ of system \eqref{eq:nom_dyn} for iteration $0$ satisfying \eqref{eq:nom_spec} to define $\bar{\mathcal{CS}}^{0}_{\mathbf{y}}=\text{conv}(\bigcup_{t\geq 0}\bar{\mathbf{y}}^{0}_t)$,  and with $\mathcal{C}^0_0<\infty$ to define $\bar{Q}^0(\cdot)$.
\end{assumption}

 The following proposition identifies CLF-like characteristics of the function \eqref{eq:conv_cf} on the set $\bar{\mathcal{CS}}_{\mathbf{y}}^{j}$ which we will use for convergence analysis in Section~\ref{sec:LMPC_analysis}.
\begin{proposition}\label{prop:Q_CLF}
Given Assumptions \ref{ass:flat} and \ref{ass:CS_init}, the cost function $\bar{Q}^{j}(\cdot)$ satisfies the following properties:
\begin{enumerate}
    \item $\bar{Q}^j(\cdot)$ is non-negative, and equal to $0$ only on $\mathcal{Y}^j_G=\text{conv}\left(\{\mathbf{y}^i_t| \exists i, t: [\bar{\mathbf{y}}^i_t,y^i_{t+R}]\in\mathcal{Y}_G\}\right)$, i.e.,  $\bar{Q}^{j}(\mathbf{y})=0~\forall \mathbf{y}\in\mathcal{Y}^j_G,\ \bar{Q}^{j}(\mathbf{y})\succ 0\ \forall \mathbf{y}\in\bar{\mathcal{CS}}_{\mathbf{y}}^{j}\backslash\mathcal{Y}^j_G.$
    \item $\bar{Q}^{j}(\bar{\mathbf{y}}_{t+1})-\bar{Q}^{j}(\bar{\mathbf{y}}_t)\leq -c(\bar{\mathbf{Y}}_t),\ \forall \bar{\mathbf{y}}_t\in\bar{\mathcal{CS}}_{\mathbf{y}}^{j}$ where $\bar{\mathbf{y}}_{t+1}=\delta(\bar{\mathbf{y}}_t,\bar{y}_{t+R})$ as in \eqref{eq:time_shift} and $\bar{\mathbf{Y}}_t=[\bar{\mathbf{y}}_t,\bar{y}_{t+R}]$.
\end{enumerate}
\end{proposition}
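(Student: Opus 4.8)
The plan is to establish both properties directly from the Barycentric program \eqref{eq:conv_cf}, exploiting non-negativity, the telescoping structure of the cost-to-go $\mathcal{C}^i_t$ in \eqref{eq:ctg}, and convexity of the stage cost $c(\cdot)$.

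For Property 1, non-negativity is immediate since each $\mathcal{C}^i_k=\sum_{l\geq k}c(\bar{\mathbf{Y}}^i_l)\geq 0$ by \eqref{stage_cost}, and $\bar{Q}^j(\bar{\mathbf{y}})$ is a convex combination of these with the non-negative optimal weights $\lambda^{i\star}_k$. For the zero set I would first record the equivalence $\mathcal{C}^i_k=0\Leftrightarrow \bar{\mathbf{Y}}^i_l\in\mathcal{Y}_G~\forall l\geq k$, which together with the fact that each data trajectory enters and then remains in $\mathcal{Y}_G$ (control invariance of the goal together with \eqref{eq:iter_spec}, which also renders $\mathcal{C}^i_0<\infty$ as in Assumption \ref{ass:CS_init}) upgrades to $\mathcal{C}^i_k=0\Leftrightarrow \bar{\mathbf{Y}}^i_k\in\mathcal{Y}_G$. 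Then: if $\bar{Q}^j(\bar{\mathbf{y}})=0$, non-negativity forces $\lambda^{i\star}_k\mathcal{C}^i_k=0$ termwise, so the optimizer places weight only on data points $\bar{\mathbf{y}}^i_k$ whose full lifted output lies in $\mathcal{Y}_G$, i.e. on the generators of $\mathcal{Y}^j_G$, whence $\bar{\mathbf{y}}\in\mathcal{Y}^j_G$. Conversely, any $\bar{\mathbf{y}}\in\mathcal{Y}^j_G$ is a convex combination of such generators, each with $\mathcal{C}^i_k=0$, so feeding these weights into \eqref{eq:conv_cf} is feasible and certifies $\bar{Q}^j(\bar{\mathbf{y}})=0$.

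For Property 2 I would use a shifted-weight construction. Fix $\bar{\mathbf{y}}_t\in\bar{\mathcal{CS}}^j_{\mathbf{y}}$ and let $\{\lambda^{i\star}_k\}$ be the optimal weights attaining $\bar{Q}^j(\bar{\mathbf{y}}_t)$, so $\sum_{i,k}\lambda^{i\star}_k\bar{\mathbf{y}}^i_k=\bar{\mathbf{y}}_t$ and $\sum_{i,k}\lambda^{i\star}_k=1$. Define the candidate successor $\bar{\mathbf{y}}_{t+1}:=\sum_{i,k}\lambda^{i\star}_k\bar{\mathbf{y}}^i_{k+1}$, applying the same weights to the one-step time-shifted samples. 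The key consistency check is that this convex combination is a genuine time-shift of $\bar{\mathbf{y}}_t$: because $\bar{\mathbf{y}}_t$ and each $\bar{\mathbf{y}}^i_k$ carry the sliding-window structure of \eqref{eq:time_shift}, the overlapping blocks $\bar y_{t+1},\dots,\bar y_{t+R-1}$ of $\sum\lambda^{i\star}_k\bar{\mathbf{y}}^i_{k+1}$ coincide with the last $R-1$ blocks of $\bar{\mathbf{y}}_t$, while the appended block defines $\bar y_{t+R}:=\sum_{i,k}\lambda^{i\star}_k\bar y^i_{k+R}$; hence $\bar{\mathbf{y}}_{t+1}=\delta(\bar{\mathbf{y}}_t,\bar y_{t+R})$, and being a convex combination of data points it lies in $\bar{\mathcal{CS}}^j_{\mathbf{y}}$, consistent with Proposition \ref{prop:CS_CI}.

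With the successor fixed, these shifted weights are feasible (though generally suboptimal) for \eqref{eq:conv_cf} evaluated at $\bar{\mathbf{y}}_{t+1}$, giving $\bar{Q}^j(\bar{\mathbf{y}}_{t+1})\leq\sum_{i,k}\lambda^{i\star}_k\mathcal{C}^i_{k+1}$. The telescoping identity $\mathcal{C}^i_{k+1}=\mathcal{C}^i_k-c(\bar{\mathbf{Y}}^i_k)$ from \eqref{eq:ctg} turns the right-hand side into $\bar{Q}^j(\bar{\mathbf{y}}_t)-\sum_{i,k}\lambda^{i\star}_k c(\bar{\mathbf{Y}}^i_k)$, and convexity of $c(\cdot)$ via Jensen's inequality yields $\sum_{i,k}\lambda^{i\star}_k c(\bar{\mathbf{Y}}^i_k)\geq c\big(\sum_{i,k}\lambda^{i\star}_k\bar{\mathbf{Y}}^i_k\big)=c([\bar{\mathbf{y}}_t,\bar y_{t+R}])=c(\bar{\mathbf{Y}}_t)$, where the middle equality uses that stacking the windows reproduces exactly $\bar{\mathbf{y}}_t$ and the appended $\bar y_{t+R}$. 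Combining the three estimates gives $\bar{Q}^j(\bar{\mathbf{y}}_{t+1})-\bar{Q}^j(\bar{\mathbf{y}}_t)\leq -c(\bar{\mathbf{Y}}_t)$. The main obstacle I anticipate is precisely the consistency verification in Property 2 — confirming that the optimal weights applied to the shifted samples yield an admissible image of $\delta(\cdot,\cdot)$ rather than an unrelated point, which rests on the shift commuting with convex combination on overlapping blocks; a secondary care point is the enter-and-remain property underlying the zero-set characterization in Property 1.
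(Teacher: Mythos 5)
Your proof is correct and follows essentially the same route as the paper's: for Property 1, non-negativity of the $\mathcal{C}^i_k$ together with a weights-supported-on-the-generators-of-$\mathcal{Y}^j_G$ argument, and for Property 2, reusing the optimal multipliers on the time-shifted samples (linearity of $\delta(\cdot,\cdot)$ across convex combinations), the telescoping identity $\mathcal{C}^i_{k+1}=\mathcal{C}^i_k-c(\bar{\mathbf{Y}}^i_k)$, and Jensen's inequality on the convex stage cost. Your explicit flagging of the enter-and-remain property (needed so that $\mathcal{C}^i_k=0$ precisely when $\bar{\mathbf{Y}}^i_k\in\mathcal{Y}_G$) makes rigorous a step the paper leaves implicit, but it does not alter the structure of the argument.
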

\begin{proof}\textit{Proof deferred to appendix}
\end{proof}
The above proposition shows that $\bar{Q}^{j}(\cdot)$ is in fact a CLF for the dynamics $\bar{\mathbf{y}}_{t+1}=\delta(\bar{\mathbf{y}}_t, \bar{y}_{t+R})$ with input $\bar{y}_{t+R}$ on the set $\mathcal{CS}_{\mathbf{y}}^{j}$. The terminal cost for our Robust MPC is defined as
\begin{align}\label{eq:term_cost}
    P^j(\bar{x}):=\{C\in\mathbb{R}| \exists\bar{\mathbf{y}}\in \bar{\mathcal{CS}}_{\mathbf{y}}^{j-1}, \mathcal{F}_x(\bar{\mathbf{y}})=\bar{x}, \bar{Q}^{j-1}(\bar{\mathbf{y}})=C\}.
\end{align}
\subsection{Robust MPC Feedback Policy and Optimization Problem}\label{LMPC_Pol}
In this section, we consolidate our designed elements (D1)-(D3) and present our Robust MPC policy. At iteration $j$, we use trajectory data $\{\{(x^i_t,u^i_t)\}_{t\geq 0}\}_{i=0}^{j-1}$ to construct outer-approximation $\hat{\mathcal{D}}^j$ of the disturbance support $\mathcal{D}$ by solving \eqref{eq:LMI0} offline, which is used for computing RPI $\mathcal{E}^j$. Then the tightened state and input constraints $\bar{\mathcal{X}}^j, \bar{\mathcal{U}}^j$ are constructed by solving \eqref{eq:tight_const_NLP} offline. The stage cost $c(\cdot)$ is chosen as in \eqref{stage_cost}, the terminal constraint and terminal cost are given by \eqref{eq:term_set}, \eqref{eq:term_cost} respectively. Like the forward-shift operator \eqref{eq:time_shift}, we define the backward-time shift operator,
 \begin{align}\label{eq:b_time_shift}
    \bar{\mathbf{Y}}_{t}&=[\bar{y}_{t},\dots,\bar{y}_{t+R}]\nonumber\\&=\delta^{-}( [\bar{y}_{t+1}, \bar{y}_{t+1},\dots,\bar{y}_{t+R+1}], \bar{y}_{t})\nonumber\\
      &=\delta^{-}(\bar{\mathbf{Y}}_{t+1},\bar{y}_{t})
 \end{align}\normalsize
 Employing this definition, the optimization problem for Robust Output-Lifted LMPC is given by the following:
\begin{equation}\label{eq:OP_RLMPC}
	\begin{aligned}
	J^j_{t}(x^j_t)= \min\limits_{\substack{\bar{\boldsymbol{u}}^j_t, \bar{\boldsymbol{x}}^j_t,\\\bar{\mathbf{y}}^j_{t+N|t}}} \quad & \bar{Q}^{j-1}(\bar{\mathbf{y}}^j_{t+N|t})+\sum\limits_{k=t}^{t+N-1} c(\bar{\mathbf{Y}}^j_{k|t}) \\[1ex]
		\text{s.t.}  \quad & \bar{x}^j_{k+1|t}=f(\bar{x}^j_{k|t},\bar{u}^j_{k|t}), \\
  &\bar{x}^j_{k|t}\in\bar{\mathcal{X}}^j, \bar{u}^j_{k|t}\in\bar{\mathcal{U}}^j~\forall k\in \mathbb{I}_{t}^{t+N-1},\\
    & \bar{\mathbf{Y}}^j_{k|t} =\delta^{-}(\bar{\mathbf{Y}}^j_{k+1|t},h(\bar{x}^j_{k|t}))~\forall k\in\mathbb{I}_t^{t+N-2},\\
    &\bar{x}^j_{t+N|t}=\mathcal{F}_x(\bar{\mathbf{y}}^j_{t+N|t}),~ \bar{\mathbf{y}}^j_{t+N|t}\in\bar{\mathcal{CS}}_{\mathbf{y}}^{j-1},\\
        & \bar{\mathbf{Y}}^j_{t+N-1|t}=[h(\bar{x}^j_{t+N-1|t}),\bar{\mathbf{y}}^j_{t+N|t}],\\
        & x^j_t-\bar{x}^j_{t|t} \in \mathcal{E}^j,\bar{x}^j_{t|t}=\bar{x}^{j\star}_{t|t-1}
	\end{aligned}
\end{equation}\normalsize
where $\boldsymbol{u}^j_t = [\bar{u}^j_{t|t},\ldots,\bar{u}^j_{t+N-1|t}],\  \boldsymbol{x}^j_t = [\bar{x}^j_{t|t},\ldots,\bar{x}^j_{t+N|t}]$ and $x^j_t$ is the state of the system at time $t$.  The control policy is obtained by solving \eqref{eq:OP_RLMPC} online and using the optimal solution $\bar{u}^j_t=\bar{u}^{j\star}_{t|t}, \bar{x}^j_t=\bar{x}^{j\star}_{t|t}$, as
\begin{align}\label{eq:LMPC}
u^j_t=\pi^j(x^j_t)=\bar{u}^{j}_{t}+K(x^j_t-\bar{x}^{j}_{t})
\end{align}
The optimal nominal state-input trajectory $\{(\bar{x}^{j\star}_{t|t}, \bar{u}^{j\star}_{t|t})\}_{t\geq 0}$ satisfies \eqref{eq:nom_spec}, by feasibility of \eqref{eq:OP_RLMPC}. Thus, this trajectory can be used for constructing the set $\bar{\mathcal{CS}}_{\mathbf{y}}^{j}$ and function $\bar{Q}^{j}(\cdot)$ for iteration $j+1$. We summarize the iterative policy synthesis in Algorithm~\ref{algo:ROLMPC}. Next, we will analyze the properties of the system \eqref{eq:sysdyn} in closed-loop with \eqref{eq:LMPC}.
\begin{remark} \textit{The nominal state $\bar{x}^j_t=x^{j\star}_{t|t}$ for \eqref{eq:OP_RLMPC} at time $t$ is obtained from the solution of \eqref{eq:OP_RLMPC} at time $t-1$. Consequently, the solution to \eqref{eq:OP_RLMPC} is the same for any $x^j_t\in\mathcal{E}^j\oplus x^{j\star}_{t|t}$. To incorporate feedback from $x^j_t$, the constraints $x^j_t-\bar{x}^j_{t|t}\in\mathcal{E}^j$, $\bar{x}^j_{t|t}=\bar{x}^{j\star}_{t|t-1}$ can be replaced with $x^j_t-\bar{x}^j_{t|t}\in\mathcal{E}^j$, $\bar{x}^j_{t|t}=f(\bar{x}^{j\star}_{t-1|t-1}, \bar{u}^j_{t-1|t}), \bar{u}^j_{t-1|t}\in\bar{\mathcal{U}}^j$, where the nominal state $\bar{x}^j_t=x^{j\star}_{t|t}$ and nominal input $\bar{u}^j_{t-1}=\bar{u}^{j\star}_{t-1|t}$ are re-computed to generate a nominal state-input trajectory satisfying \eqref{eq:nom_spec}.}
\end{remark}
\begin{algorithm}[!h]
    \caption{Iterative Robust Output-lifted LMPC}\label{algo:ROLMPC}
    \DontPrintSemicolon
    \SetKwInOut{KwIn}{Input}
    \SetKwInOut{KwOut}{Output}
    \SetKwFunction{Nom}{nominal\_states}
    \SetKwFunction{Supp}{d\_bound}
    \SetKwFunction{Einv}{err\_inv}
    \SetKwFunction{tc}{tight\_cons}
    \SetKwFunction{CSy}{construct\_terminal\_set}
    \SetKwFunction{ROLMPC}{Robust\_Output-lifted\_LMPC}
    \KwIn{$\{(x^0_t,u^0_t)\}_{t\geq 0}$\text{ satisfying }\eqref{eq:iter_spec}\text{, Assumption }\ref{ass:E_init},\\$\{(\bar{x}^0_t,\bar{u}^0_t)\}_{t\geq 0}$\text{ satisfying }\eqref{eq:nom_spec}\text{, Assumption }\ref{ass:CS_init},\\\text{ Error policy }$\kappa(e_t)=Ke_t$, $\text{max}\_\text{iters}$ }
    \KwOut{$\{\{(x^j_t,u^j_t)\}_{t\geq 0}\}_{j=1}^{\text{max}\_\text{iters}}$\text{ satisfying }\eqref{eq:iter_spec}}
    \SetKwProg{Fn}{Procedure}{:}{\KwRet $\{\{(x^j_t,u^j_t)\}_{t\geq 0}\}_{j=1}^{\text{max}\_\text{iters}}$}
    \Fn{\ROLMPC}{
    $\bar{\mathcal{CS}}^0_{\mathbf{y}}\leftarrow\emptyset$\\
    \For{$j=1$ \KwTo $\text{max}\_\text{iters}$}{
    \textbf{\textit{(Offline)}}\\
    \textbf{Step 1}:~Construct $\hat{\mathcal{D}}^j$ by solving \eqref{eq:LMI0} and compute $\mathcal{E}^j$ by Assumption~\ref{ass:einv}\\
    \textbf{Step 2}:~Construct $\bar{\mathcal{X}}^j,\bar{\mathcal{U}}^j$ by solving \eqref{eq:tight_const_NLP} \\
    \textbf{Step 3}:~Construct $\bar{\mathcal{CS}}^{j}_{\mathbf{y}}$ from \eqref{eq:CSy_def} and
        $\bar{Q}^{j}(\cdot)$ from \eqref{eq:conv_cf}\\
        \textbf{\textit{(Online)}}\\
         ~$x^j_0\leftarrow x_S, t\leftarrow 0$\\
        \While{$x^j_t\not\in\mathcal{X}_G$}{
        Solve \eqref{eq:OP_RLMPC} and store $\bar{x}^j_t\leftarrow \bar{x}^{j\star}_{t|t},\bar{u}^j_{t-1}\leftarrow \bar{u}^{j\star}_{t-1|t}$\\
        Apply \eqref{eq:LMPC} and store $u^j_{t}\leftarrow \pi^{j}_{t}(x^j_t)$\\
        Measure $x^j_{t+1}$\\ 
        $t\leftarrow t+1$
        }}}\end{algorithm}
\section{Properties of Proposed Strategy}\label{sec:LMPC_analysis}
In this section, we establish the closed-loop properties of the system trajectories with the proposed Robust Output-lifted LMPC, and examine the system performance across iterations.

Theorem \ref{thm:rf} establishes the recursive feasibility of optimization problem \eqref{eq:OP_RLMPC} for system \eqref{eq:sysdyn} in closed-loop with the LMPC policy \eqref{eq:LMPC}. We show this by leveraging the recursive definition of $\bar{\mathcal{CS}}_{\mathbf{y}}^{j}$ and the result of Proposition~\ref{prop:CS_CI}. 
\begin{theorem}\label{thm:rf}[\textbf{Recursive Feasibility}] Given Assumptions \ref{ass:dQC}-\ref{ass:CS_init}, the optimization problem \eqref{eq:OP_RLMPC} is feasible for the system \eqref{eq:sysdyn} in closed-loop with the policy \eqref{eq:LMPC} $\forall t\geq 0$ and for all iterations $j\geq 1$. Consequently, $x^j_t\in\mathcal{X}$, $u^j_t\in\mathcal{U}$ $\forall t\geq 0$, $\forall j\geq 1$. 
\end{theorem}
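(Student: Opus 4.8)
The plan is to prove recursive feasibility by a nested induction: an outer induction over iterations $j$ and, for each fixed $j$, an inner induction over time $t$. For the outer induction on $j\geq 1$, the inductive hypothesis is that iteration $j-1$ produced a closed-loop trajectory satisfying \eqref{eq:iter_spec} whose nominal counterpart satisfies \eqref{eq:nom_spec}; for $j=1$ this is exactly the data furnished by Assumptions \ref{ass:E_init} and \ref{ass:CS_init}. Under this hypothesis $\bar{\mathcal{CS}}^{j-1}_{\mathbf{y}}$ is well defined and, by Proposition \ref{prop:CS_CI}, control invariant for the forward-shift dynamics \eqref{eq:time_shift} with respect to the tightened constraints $\bar{\mathcal{X}}^{j-1}\times\bar{\mathcal{U}}^{j-1}$. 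The goal of the inner induction is then to show \eqref{eq:OP_RLMPC} is feasible for all $t\geq 0$ at iteration $j$, which in turn yields a closed-loop trajectory of iteration $j$ satisfying \eqref{eq:iter_spec} and \eqref{eq:nom_spec}, closing the outer induction.

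For the inner base case $t=0$, I would exhibit the stored nominal trajectory of iteration $j-1$ as a feasible candidate: set $\bar{x}^j_{k|0}=\bar{x}^{j-1}_k$ and $\bar{u}^j_{k|0}=\bar{u}^{j-1}_k$ for the first $N$ stages and take the terminal lifted output $\bar{\mathbf{y}}^j_{N|0}=\bar{\mathbf{y}}^{j-1}_N\in\bar{\mathcal{CS}}^{j-1}_{\mathbf{y}}$. Because the nominal data of iteration $j-1$ satisfies \eqref{eq:nom_spec}, every stage obeys the nominal dynamics and lies in $\bar{\mathcal{X}}^j\times\bar{\mathcal{U}}^j$, while the lifted-output coupling constraints (the $\delta^{-}$ recursions and the identity $\bar{\mathbf{Y}}^j_{N-1|0}=[h(\bar{x}^j_{N-1|0}),\bar{\mathbf{y}}^j_{N|0}]$) hold by construction of the outputs from the nominal states via $h(\cdot)$ and $\mathcal{F}_x(\cdot)$. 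The initial-state constraint $x^j_0-\bar{x}^j_{0|0}\in\mathcal{E}^j$ holds since $x^j_0=x_S=x^{j-1}_0$ and $x^{j-1}_0-\bar{x}^{j-1}_0\in\mathcal{E}^j$ by \eqref{eq:nom_spec}.

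For the inner inductive step, assuming \eqref{eq:OP_RLMPC} is feasible at time $t$ with optimizer $(\bar{\boldsymbol{x}}^{j\star}_t,\bar{\boldsymbol{u}}^{j\star}_t,\bar{\mathbf{y}}^{j\star}_{t+N|t})$, I would construct the shifted candidate at $t+1$ by reusing $\bar{x}^{j\star}_{k|t},\bar{u}^{j\star}_{k|t}$ for the retained stages and generating one new terminal stage from the control invariance of $\bar{\mathcal{CS}}^{j-1}_{\mathbf{y}}$. Applying Proposition \ref{prop:CS_CI} to $\bar{\mathbf{y}}^{j\star}_{t+N|t}\in\bar{\mathcal{CS}}^{j-1}_{\mathbf{y}}$ produces an output $\bar{y}_{t+N+R}$ for which the shifted terminal output remains in $\bar{\mathcal{CS}}^{j-1}_{\mathbf{y}}$, the reconstructed state $\mathcal{F}_x(\bar{\mathbf{y}}^{j\star}_{t+N|t})$ and input $\mathcal{F}_u(\bar{\mathbf{y}}^{j\star}_{t+N|t},\bar{y}_{t+N+R})$ lie in $\bar{\mathcal{X}}^{j-1}\times\bar{\mathcal{U}}^{j-1}$, and the successor state equals $f(\cdot,\cdot)$; since $\bar{\mathcal{X}}^{j-1}\subseteq\bar{\mathcal{X}}^j$, $\bar{\mathcal{U}}^{j-1}\subseteq\bar{\mathcal{U}}^j$ by Proposition \ref{prop:tight_cons_NLP}(1), the appended stage meets the level-$j$ constraints. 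The new initial-state constraint $x^j_{t+1}-\bar{x}^{j\star}_{t+1|t}\in\mathcal{E}^j$ follows from Proposition \ref{prop:errinv}: under \eqref{eq:LMPC} the error obeys $e^j_{t+1}=f_e(e^j_t,\bar{x}^j_t,\bar{u}^j_t)+d^j_t$, and since $e^j_t\in\mathcal{E}^j$, $\bar{x}^j_t\in\bar{\mathcal{X}}^j\subseteq\mathcal{X}\ominus\mathcal{E}^j$, $\bar{u}^j_t\in\bar{\mathcal{U}}^j\subseteq\mathcal{U}\ominus K\mathcal{E}^j$, the set $\mathcal{E}^j$ is positively invariant for the true error dynamics \eqref{eq:error_dyn}.

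The concluding claim $x^j_t\in\mathcal{X}$, $u^j_t\in\mathcal{U}$ then follows directly from Proposition \ref{prop:tight_cons_NLP}(2) with $\bar{x}^j_t\in\bar{\mathcal{X}}^j$, $e^j_t\in\mathcal{E}^j$, $\bar{u}^j_t\in\bar{\mathcal{U}}^j$. I expect the main obstacle to be the bookkeeping that ties the lifted-output consistency constraints to the nominal dynamics at the horizon boundary: one must verify that the single new output returned by Proposition \ref{prop:CS_CI} simultaneously satisfies the backward-shift relation $\delta^{-}$, the terminal coupling identity for $\bar{\mathbf{Y}}^j_{t+N-1|t+1}$, the membership $\bar{\mathbf{y}}^j_{t+N+1|t+1}\in\bar{\mathcal{CS}}^{j-1}_{\mathbf{y}}$, and the equality $\bar{x}^j_{t+N+1|t+1}=\mathcal{F}_x(\bar{\mathbf{y}}^j_{t+N+1|t+1})$, so that the appended stage is genuinely consistent with the full constraint set rather than merely feasible stage-by-stage. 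The remaining steps are routine shift arguments.
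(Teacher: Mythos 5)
Your proof is correct, and its core is the same as the paper's: an induction over time whose inductive step shifts the optimal solution by one stage, appends a new terminal stage obtained from the control invariance of $\bar{\mathcal{CS}}^{j-1}_{\mathbf{y}}$ (Proposition~\ref{prop:CS_CI}), certifies the new initial-state constraint via the robust invariance of $\mathcal{E}^j$ (Proposition~\ref{prop:errinv}), and concludes constraint satisfaction from the tightened-set properties (Proposition~\ref{prop:tight_cons_NLP}). The one place you genuinely depart from the paper is the base case at $t=0$: the paper reuses the iteration-$(j-1)$ MPC solution at $t=0$ and invokes only the nestedness $\bar{\mathcal{X}}^{j}\supseteq\bar{\mathcal{X}}^{j-1}$, $\bar{\mathcal{U}}^{j}\supseteq\bar{\mathcal{U}}^{j-1}$, $\bar{\mathcal{CS}}^{j-1}_{\mathbf{y}}\supseteq\bar{\mathcal{CS}}^{j-2}_{\mathbf{y}}$, whereas you instantiate the candidate from the \emph{stored nominal trajectory} of iteration $j-1$, which satisfies \eqref{eq:nom_spec}. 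Your variant is the more careful of the two: the previous iteration's MPC solution only certifies $x_S-\bar{x}^{j-1}_{0|0}\in\mathcal{E}^{j-1}$, and since the error sets shrink ($\mathcal{E}^{j}\subseteq\mathcal{E}^{j-1}$) this does not automatically transfer to the constraint $x^j_0-\bar{x}^{j}_{0|0}\in\mathcal{E}^{j}$ required by \eqref{eq:OP_RLMPC} at iteration $j$; by contrast, \eqref{eq:nom_spec} applied to the iteration-$(j-1)$ data gives exactly $x^{j-1}_0-\bar{x}^{j-1}_0\in\mathcal{E}^{j}$, $\bar{x}^{j-1}_t\in\bar{\mathcal{X}}^{j}$, $\bar{u}^{j-1}_t\in\bar{\mathcal{U}}^{j}$, so every constraint of the iteration-$j$ problem is met directly. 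Your index bookkeeping at the terminal stage (Proposition~\ref{prop:CS_CI} applied to $\bar{\mathcal{CS}}^{j-1}_{\mathbf{y}}$ yields membership in $\bar{\mathcal{X}}^{j-1}\times\bar{\mathcal{U}}^{j-1}$, upgraded to level $j$ via Proposition~\ref{prop:tight_cons_NLP}(1)) is likewise consistent with the terminal constraint $\bar{\mathbf{y}}^{j}_{t+N|t}\in\bar{\mathcal{CS}}^{j-1}_{\mathbf{y}}$ of \eqref{eq:OP_RLMPC}, where the paper's text elides the superscript shift. The only caveat is in closing your outer induction: asserting that the iteration-$j$ closed loop produces nominal data satisfying \eqref{eq:nom_spec} with respect to iteration-$(j+1)$ sets relies on the same claim the paper itself makes after \eqref{eq:OP_RLMPC} (together with the remark on re-computing the nominal trajectory), so this is not a gap attributable to your argument.
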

\begin{proof}For any iteration $j\geq 1$, suppose that the problem \eqref{eq:OP_RLMPC} is feasible at time $t\geq 1$. Let the state-input trajectory corresponding to the optimal solution  of \eqref{eq:OP_RLMPC} be \small\begin{equation}\label{eq:opt_sol_t}
    \{\bar{x}^{j\star}_{t|t},\bar{u}^{j\star}_{t|t}, \bar{x}^{j\star}_{t+1|t},\bar{u}^{j\star}_{t+1|t},\dots, \bar{x}^{j\star}_{t+N|t}, \bar{\mathbf{y}}^{j\star}_{t+N|t}\}.
\end{equation}\normalsize  Applying the control $u^j_t=\bar{u}^{j\star}_{t|t}+K(x^j_t-\bar{x}^{j\star}_{t|t})$ to system \eqref{eq:sysdyn} yields $x^j_{t+1}$ such that $x^j_{t+1}-\bar{x}^{j\star}_{t+1|t}\in\mathcal{E}^j,~\forall d_t\in D(x^j_t,u^j_t)$ because $x^j_t-\bar{x}^{j\star}_{t|t}\in\mathcal{E}^j$ ($\because$ Proposition \ref{prop:errinv}), where $\bar{x}^{j\star}_{t+1|t}=f(\bar{x}^j_{t}, \bar{u}^{j\star}_{t|t})$ and $\bar{x}^j_t=\bar{x}^{j\star}_{t|t}$. We also have
\small$$\bar{x}^{j\star}_{t+N|t}=\mathcal{F}_x(\bar{\mathbf{y}}^{j\star}_{t+N|t}),\ \bar{\mathbf{y}}^{j\star}_{t+N|t}\in\bar{\mathcal{CS}}_{\mathbf{y}}^{j}.$$\normalsize
From Proposition \ref{prop:CS_CI}, we have $\bar{x}^{j\star}_{t+N|t}\in\bar{\mathcal{X}}^j$, and $\exists\bar{y}'$ such that $\bar{\mathbf{y}}'=\delta(\bar{\mathbf{y}}_{t+N|t}^{j\star},\bar{y}')\in\bar{\mathcal{CS}}_{\mathbf{y}}^j$, $\bar{u}'=\mathcal{F}_u([\bar{\mathbf{y}}^{j\star}_{t+N|t},\bar{y}'])\in\bar{\mathcal{U}}^j$ and $\bar{x}'=\mathcal{F}_x(\bar{\mathbf{y}}')=f(\bar{x}^{j\star}_{t+N|t},\tilde{u})\in\bar{\mathcal{X}}^j$. Now consider the following state-input trajectory\small
\begin{equation}\label{eq:shift_sol_t}
    \{x^{j\star}_{t+1|t},\bar{u}^{j\star}_{t+1|t},\dots, \bar{x}^{j\star}_{t+N|t},\bar{u}',\bar{x}', \bar{\mathbf{y}}'\}
\end{equation}\normalsize
and see that this is feasible for problem \eqref{eq:OP_RLMPC} at time $t+1$. \\
We have shown that feasibility of the LMPC problem~\eqref{eq:OP_RLMPC} at time $t\geq 1$ implies feasibility of the LMPC problem~\eqref{eq:OP_RLMPC} at time $t+1$. For $t=0$ and any $j\geq1$, we have $x^j_t=x^{j-1}_t=x_S$ and $\bar{\mathcal{X}}^j\supset\bar{\mathcal{X}}^{j-1}$, $\bar{\mathcal{U}}^j\supset\bar{\mathcal{U}}^{j-1}$, $\bar{\mathcal{CS}}^j_{\mathbf{y}}\supset\bar{\mathcal{CS}}^{j-1}_{\mathbf{y}}$. Thus, the solution to \eqref{eq:OP_RLMPC} from iteration $j-1$ at $t=0$ is feasible for iteration $j$, with the initialization for $j=1$ given by Assumption \ref{ass:CS_init}. Induction on time $t$ proves the persistent feasibility of \eqref{eq:OP_RLMPC} $\forall t\geq 0$, $\forall j\geq 1$.

Thus, $x^j_t-\bar{x}^j_t\in\mathcal{E}^j$, $\bar{x}^j_t\in\bar{\mathcal{X}}^j\subset\mathcal{X}\ominus\mathcal{E}^j$, $\bar{u}^j_t\in\bar{\mathcal{U}}^j\subset\mathcal{U}\ominus K\mathcal{E}^j$ $\forall t\geq 0, \forall j\geq 1$. Since $(\mathcal{X}\ominus\mathcal{E}^j)\oplus\mathcal{E}^j\subset \mathcal{X}, (\mathcal{U}\ominus K\mathcal{E}^j)\oplus K\mathcal{E}^j\subset \mathcal{U}$, we have $x^j_t\in\mathcal{X}, u^j_t\in\mathcal{U}$.
\end{proof}

We also establish convergence of the closed-loop state trajectories of \eqref{eq:sysdyn} to the set $\mathcal{X}_G$. First,  we show that if $\lim_{t\rightarrow\infty}\text{dist}_{\mathcal{Y}_G}(\bar{\mathbf{Y}}^j_t)=0$ then $\lim_{t\rightarrow\infty}\text{dist}_{\mathcal{X}_G\ominus\mathcal{E}^1}(\bar{x}^j_t)=0$. We use this result to finally show that $\lim_{t\rightarrow\infty}\text{dist}_{\mathcal{X}_G}(x^j_t)=0$ in the proof of Theorem~\ref{thm:convergence}.
\begin{lemma}\label{lem:yconv_xconv}
Given Assumption \ref{ass:flat}, if the trajectory of lifted outputs $\{\bar{\mathbf{Y}}_t\}_{t\geq0}$ for system \eqref{eq:nom_dyn} converges to the set $\mathcal{Y}_G$, then the nominal state trajectory $\{\bar{x}_t\}_{t\geq 0}$ converges to $\mathcal{X}_G\ominus\mathcal{E}^1$, i.e.,
$$\lim_{t\rightarrow\infty}\text{dist}_{\mathcal{Y}_G}(\bar{\mathbf{Y}}^i_t)=0\Rightarrow \lim_{t\rightarrow\infty}\text{dist}_{\mathcal{X}_G\ominus\mathcal{E}^1}(\bar{x}_t)=0$$
\end{lemma}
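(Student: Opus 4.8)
The plan is to recast the claim as a statement about a single continuous scalar map, exploiting the fact recorded just before the lemma that $\bar{\mathbf{Y}}=[\bar{\mathbf{y}},y]\in\mathcal{Y}_G$ implies $\mathcal{F}_x(\bar{\mathbf{y}})\in\bar{\mathcal{X}}_G\subset\mathcal{X}_G\ominus\mathcal{E}^1$. Concretely, I would introduce the map $g:\mathbb{R}^{m\times R+1}\rightarrow\mathbb{R}$ defined by $g(\mathbf{Y})=\text{dist}_{\mathcal{X}_G\ominus\mathcal{E}^1}(\mathcal{F}_x(\mathbf{y}))$, where $\mathbf{Y}=[\mathbf{y},y]$ and $\mathbf{y}$ is the block of the first $R$ outputs. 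Since $g(\bar{\mathbf{Y}}_t)=\text{dist}_{\mathcal{X}_G\ominus\mathcal{E}^1}(\bar{x}_t)$ along the trajectory, the lemma is equivalent to showing $g(\bar{\mathbf{Y}}_t)\rightarrow 0$ whenever $\text{dist}_{\mathcal{Y}_G}(\bar{\mathbf{Y}}_t)\rightarrow 0$.

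First I would establish two structural facts about $g$. Continuity of $g$ follows from Assumption~\ref{ass:flat}\eqref{ass:flat_class} (continuity of $\mathcal{F}$, hence of $\mathcal{F}_x$), composed with the linear projection $\mathbf{Y}\mapsto\mathbf{y}$ and the continuity of the distance function $\text{dist}_{\mathcal{X}_G\ominus\mathcal{E}^1}(\cdot)$. Next, the recorded implication gives $g\equiv 0$ on $\mathcal{Y}_G$, and continuity propagates this to the closure: any $\mathbf{Y}^\infty$ with $\text{dist}_{\mathcal{Y}_G}(\mathbf{Y}^\infty)=0$ is a limit of points of $\mathcal{Y}_G$, so $g(\mathbf{Y}^\infty)=0$.

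I would then reach the conclusion by a subsequence/contradiction argument. Suppose $g(\bar{\mathbf{Y}}_t)\not\rightarrow 0$; then there exist $\varepsilon>0$ and a subsequence with $g(\bar{\mathbf{Y}}_{t_k})\geq\varepsilon$. The key enabling fact is that the lifted-output trajectory is bounded: the nominal states $\bar{x}_t$ remain in the bounded box constraint set by feasibility of \eqref{eq:OP_RLMPC} (Theorem~\ref{thm:rf}), so the outputs $\bar{y}_t=h(\bar{x}_t)$, and hence each $\bar{\mathbf{Y}}_t$, lie in a compact set. Passing to a convergent sub-subsequence $\bar{\mathbf{Y}}_{t_{k_l}}\rightarrow\mathbf{Y}^\infty$ and using continuity of $\text{dist}_{\mathcal{Y}_G}(\cdot)$ with $\text{dist}_{\mathcal{Y}_G}(\bar{\mathbf{Y}}_{t_{k_l}})\rightarrow 0$ yields $\text{dist}_{\mathcal{Y}_G}(\mathbf{Y}^\infty)=0$, so $g(\mathbf{Y}^\infty)=0$ by the previous step; but continuity of $g$ forces $g(\mathbf{Y}^\infty)=\lim_l g(\bar{\mathbf{Y}}_{t_{k_l}})\geq\varepsilon>0$, a contradiction.

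I expect the only genuine subtlety to be the boundedness of the trajectory, not the continuity bookkeeping. Continuity of $g$ alone does not suffice: a continuous function vanishing on a set can fail to vanish along a sequence that merely approaches the set (for instance $g(a,b)=ab$ evaluated along $(k,1/k)$, whose distance to the zero set tends to $0$ while $g=1$), so some form of compactness is indispensable. I would therefore take care to justify that $\{\bar{\mathbf{Y}}_t\}$ stays in a compact set, either directly from the closed-loop constraint satisfaction of Theorem~\ref{thm:rf}, or, to keep the lemma self-contained, by noting that $\text{dist}_{\mathcal{Y}_G}(\bar{\mathbf{Y}}_t)\rightarrow 0$ confines the tail of the trajectory to a bounded neighborhood of $\mathcal{Y}_G$ when $\mathcal{Y}_G$ is bounded.
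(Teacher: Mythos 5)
Your proof is correct, and it takes a genuinely different---and in fact tighter---route than the paper's. The paper argues through sets rather than through a composed scalar function: it projects $\mathcal{Y}_G$ onto its first $R$ output columns to obtain $\mathcal{Y}^x_G=\{\mathbf{y}\,|\,[\mathbf{y},y]\in\mathcal{Y}_G\}$, notes the inclusion $\mathcal{Y}^x_G\subseteq\mathcal{F}_x^{-1}(\mathcal{X}_G\ominus\mathcal{E}^1)$ (the same fact you extract from the observation preceding the lemma), and then asserts the chain $\lim_{t\to\infty}\text{dist}_{\mathcal{X}_G\ominus\mathcal{E}^1}(\mathcal{F}_x(\bar{\mathbf{y}}_t))=\lim_{t\to\infty}\text{dist}_{\mathcal{F}_x^{-1}(\mathcal{X}_G\ominus\mathcal{E}^1)}(\bar{\mathbf{y}}_t)$, justified only by continuity of $\mathcal{F}_x(\cdot)$ and uniqueness of the flat reconstruction. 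That limit-swapping step---transporting ``distance to the preimage tends to zero'' through $\mathcal{F}_x$ into ``distance of the image to the set tends to zero''---is precisely the move your counterexample ($g(a,b)=ab$ evaluated along $(k,1/k)$) shows cannot follow from continuity alone; it requires uniform continuity of $\mathcal{F}_x$ on a set containing the trajectory tail together with the nearby preimage points, i.e., exactly the compactness your subsequence argument makes explicit. So your proof does not merely reprove the lemma; it supplies the ingredient the paper's proof uses tacitly. Two caveats on your boundedness step, which you rightly identify as the only genuine subtlety: invoking Theorem~\ref{thm:rf} requires the output map $h(\cdot)$ to carry the compact box $\mathcal{X}$ into a bounded set (continuity of $h$ is implicit in the flatness setting but never stated in the paper), and it specializes the lemma to closed-loop trajectories---which is harmless, since that is the only way the lemma is invoked (in the proof of Theorem~\ref{thm:convergence}); your self-contained alternative instead needs $\mathcal{Y}_G$ to be bounded, which the paper's definition of $\mathcal{Y}_G$ does not guarantee in general, so whichever hypothesis you rely on should be stated explicitly.
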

\begin{proof}
Define the set $\mathcal{Y}^x_G=\{\mathbf{y}| [\mathbf{y},y]\in\mathcal{Y}_G\}$, which is just a linear projection of $\mathcal{Y}_G$ to obtain the first $R$ outputs, and see that $\lim_{t\rightarrow\infty}\text{dist}_{\mathcal{Y}_G}(\bar{\mathbf{Y}}_t)=0 \Rightarrow \lim_{t\rightarrow\infty}\text{dist}_{\mathcal{Y}^x_G}(\bar{\mathbf{y}}_t)=0$ because of the continuity of the projection. Then from the fact that the image of the flat map \eqref{eq:flatclass_x} is unique (Definition~\ref{def:diffFlat}) and the continuity of $\mathcal{F}_x(\cdot)$ (Assumption \ref{ass:flat}(A)), we have
\begin{align*}
    \lim_{t\rightarrow\infty}\text{dist}_{\mathcal{X}_G\ominus\mathcal{E}^1}(\bar{x}_t)&=\lim_{t\rightarrow\infty}\text{dist}_{\mathcal{X}_G\ominus\mathcal{E}^1}(\mathcal{F}_x(\bar{\mathbf{y}}_t))\\
    &=\lim_{t\rightarrow\infty}\text{dist}_{\mathcal{F}_x^{-1}(\mathcal{X}_G\ominus\mathcal{E}^1)}(\bar{\mathbf{y}}_t).
\end{align*}
From the definition of $\mathcal{Y}_G, \mathcal{Y}^x_G$, we know that $\bar{\mathbf{y}}\in\mathcal{Y}^x_G\Rightarrow \mathcal{F}_x(\bar{\mathbf{y}})=\bar{x}\in\mathcal{X}_G\ominus\text{Box}(\mathcal{E}^1)\subset\mathcal{X}\ominus\mathcal{E}^1$. Thus, $\mathcal{F}_x(\mathcal{Y}^x_G)\subseteq\mathcal{X}_G\ominus\mathcal{E}^1\Rightarrow\mathcal{Y}^x_G\subseteq\mathcal{F}^{-1}_x(\mathcal{X}_G\ominus\mathcal{E}^1)$ and so $\lim_{t\rightarrow\infty}\text{dist}_{\mathcal{Y}^x_G}(\bar{\mathbf{y}}_t)=0\Rightarrow \lim_{t\rightarrow\infty}\text{dist}_{\mathcal{F}^{-1}_x(\mathcal{X}_G\ominus\mathcal{E}^1)}(\bar{\mathbf{y}}_t)=0$.
$$ \therefore \lim_{t\rightarrow\infty}\text{dist}_{\mathcal{X}_G\ominus\mathcal{E}^1}(\bar{x}_t)=0$$
\end{proof}
\begin{theorem}\label{thm:convergence}[\textbf{Convergence}] Given Assumptions \ref{ass:dQC}-\ref{ass:CS_init}, for any iteration $j\geq 1$, the state trajectory of system \eqref{eq:sysdyn} in closed-loop with control $\eqref{eq:LMPC}$ converges to the set $\mathcal{X}_G$,
$\lim_{t\rightarrow\infty}\text{dist}_{\mathcal{X}_G}(x^j_t)=0$.
\end{theorem}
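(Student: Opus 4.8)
The plan is to run the standard Lyapunov-style argument for MPC, using the optimal value function $J^j_t(x^j_t)$ of \eqref{eq:OP_RLMPC} as a Lyapunov function along the closed-loop nominal trajectory, and then to transfer the resulting nominal-output convergence to the true state via Lemma~\ref{lem:yconv_xconv} and the error invariant. First I would establish a per-step descent inequality on $J^j_t$. Fix an iteration $j\ge 1$ and a time $t$, and let the minimizer of \eqref{eq:OP_RLMPC} be as in the proof of Theorem~\ref{thm:rf}, with terminal node $\bar{\mathbf{y}}^{j\star}_{t+N|t}\in\bar{\mathcal{CS}}^{j-1}_{\mathbf{y}}$. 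Using the control-invariance of $\bar{\mathcal{CS}}^{j-1}_{\mathbf{y}}$ (Proposition~\ref{prop:CS_CI}) I pick $\bar{y}'$ so that $\bar{\mathbf{y}}'=\delta(\bar{\mathbf{y}}^{j\star}_{t+N|t},\bar{y}')\in\bar{\mathcal{CS}}^{j-1}_{\mathbf{y}}$, which yields the shifted feasible candidate \eqref{eq:shift_sol_t} at time $t+1$ for the realized state $x^j_{t+1}$ (feasibility was already verified in Theorem~\ref{thm:rf}). Evaluating the objective on this candidate and invoking the CLF inequality of Proposition~\ref{prop:Q_CLF}(2) at index $j-1$, namely $\bar{Q}^{j-1}(\bar{\mathbf{y}}')+c(\bar{\mathbf{Y}}')\le \bar{Q}^{j-1}(\bar{\mathbf{y}}^{j\star}_{t+N|t})$ with $\bar{\mathbf{Y}}'=[\bar{\mathbf{y}}^{j\star}_{t+N|t},\bar{y}']$, the terminal cost and the newly added stage cost collapse into the old terminal cost. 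Since $J^j_{t+1}(x^j_{t+1})$ is no larger than the suboptimal cost of this candidate, I obtain
\[
J^j_{t+1}(x^j_{t+1})\le J^j_t(x^j_t)-c(\bar{\mathbf{Y}}^j_t),
\]
where $\bar{\mathbf{Y}}^j_t=\bar{\mathbf{Y}}^{j\star}_{t|t}$ is the closed-loop nominal lifted output at time $t$.

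Next I would exploit this inequality. Because $c\ge 0$ and $\bar{Q}^{j-1}\ge 0$ (Proposition~\ref{prop:Q_CLF}(1)), the sequence $\{J^j_t(x^j_t)\}_{t\ge0}$ is non-negative and non-increasing, hence convergent; telescoping gives $\sum_{t\ge0} c(\bar{\mathbf{Y}}^j_t)\le J^j_0(x^j_0)<\infty$, so $c(\bar{\mathbf{Y}}^j_t)\to 0$. To turn this into $\text{dist}_{\mathcal{Y}_G}(\bar{\mathbf{Y}}^j_t)\to 0$ I would use that, by recursive feasibility (Theorem~\ref{thm:rf}) and Assumption~\ref{ass:box}, the closed-loop nominal states and inputs stay in the compact box sets $\bar{\mathcal{X}}^j\times\bar{\mathcal{U}}^j$, so the lifted outputs $\bar{\mathbf{Y}}^j_t$ lie in a compact set; combined with the continuity and positive-definiteness of $c$ relative to the closed set $\mathcal{Y}_G$ from \eqref{stage_cost}, a standard subsequence/compactness argument rules out any subsequence of $\bar{\mathbf{Y}}^j_t$ staying bounded away from $\mathcal{Y}_G$ while $c(\bar{\mathbf{Y}}^j_t)\to 0$. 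Hence $\text{dist}_{\mathcal{Y}_G}(\bar{\mathbf{Y}}^j_t)\to 0$.

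Finally I would descend from the lifted output to the true state. Lemma~\ref{lem:yconv_xconv} immediately gives $\text{dist}_{\mathcal{X}_G\ominus\mathcal{E}^1}(\bar{x}^j_t)\to 0$. Writing $x^j_t=\bar{x}^j_t+e^j_t$ with $e^j_t\in\mathcal{E}^j\subseteq\mathcal{E}^1$ (the nested inclusion $\mathcal{E}^j\subseteq\mathcal{E}^1$ was established in Section~\ref{ssec:err_inv}), and taking the projection $p_t$ of $\bar{x}^j_t$ onto $\mathcal{X}_G\ominus\mathcal{E}^1$, the point $p_t+e^j_t$ lies in $(\mathcal{X}_G\ominus\mathcal{E}^1)\oplus\mathcal{E}^1\subseteq\mathcal{X}_G$, so $\text{dist}_{\mathcal{X}_G}(x^j_t)\le\|x^j_t-(p_t+e^j_t)\|=\text{dist}_{\mathcal{X}_G\ominus\mathcal{E}^1}(\bar{x}^j_t)\to 0$, which is the claim.

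The genuinely delicate steps are not the descent inequality, which is routine once Theorem~\ref{thm:rf} supplies the shifted feasible trajectory and Proposition~\ref{prop:Q_CLF} supplies the CLF decrease, but rather the two set-theoretic passages: converting summability of the merely positive-definite stage cost $c$ into convergence of $\text{dist}_{\mathcal{Y}_G}(\bar{\mathbf{Y}}^j_t)$, which is why I lean on compactness of the reachable lifted-output set, and the closing Minkowski/Pontryagin manipulation, where the nested inclusion $\mathcal{E}^j\subseteq\mathcal{E}^1$ is exactly what guarantees the reconstructed state lands in $\mathcal{X}_G$ rather than a slightly inflated set.
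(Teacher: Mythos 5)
Your proof follows essentially the same route as the paper's: the identical shifted-candidate descent inequality $J^j_{t+1}(x^j_{t+1})\le J^j_t(x^j_t)-c(\bar{\mathbf{Y}}^j_t)$ obtained from the Theorem~\ref{thm:rf} candidate and Proposition~\ref{prop:Q_CLF}, monotone convergence of the value function forcing $c(\bar{\mathbf{Y}}^j_t)\to 0$, the passage to $\lim_{t\rightarrow\infty}\text{dist}_{\mathcal{X}_G\ominus\mathcal{E}^1}(\bar{x}^j_t)=0$ via Lemma~\ref{lem:yconv_xconv}, and the closing Minkowski-sum step using $\mathcal{E}^j\subseteq\mathcal{E}^1$ and $(\mathcal{X}_G\ominus\mathcal{E}^1)\oplus\mathcal{E}^1\subseteq\mathcal{X}_G$. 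The one place you deviate is actually an improvement: where the paper asserts $c(\bar{\mathbf{Y}}^j_t)\to 0\Leftrightarrow\text{dist}_{\mathcal{Y}_G}(\bar{\mathbf{Y}}^j_t)\to 0$ ``by continuity of $c(\cdot)$'' (which continuity alone does not give), you correctly note that this implication needs the reachable lifted outputs to lie in a compact set and supply the subsequence argument that makes it rigorous.
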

\begin{proof}
We adopt the same notation as the proof of theorem \ref{thm:rf}. Using the feasibility of \eqref{eq:shift_sol_t} for the problem \eqref{eq:OP_RLMPC} at time $t+1$ and the fact that $J^{j}_{t+1}(x^j_{t+1})$ is the optimal cost at time $t+1$, we get\small
\begin{align}\label{eq:cost_decrease}
    J^j_{t+1}(x^j_{t+1})&\leq \bar{Q}^{j}(\mathbf{y}')+c(\bar{\mathbf{Y}}_{N|t}^\star)+\sum\limits_{k=1}^{N-1}c(\bar{\mathbf{Y}}^\star_{k|t})\nonumber\\
    &\leq \bar{Q}^{j}(\bar{\mathbf{y}}^\star_{N|t})+\sum\limits_{k=1}^{N-1}c(\bar{\mathbf{Y}}^\star_{k|t})\nonumber\\
    &=J^j_{t}(x^j_t)-c(\bar{\mathbf{Y}}^\star_{t|t}).
\end{align}\normalsize
The feasibility of the problem \eqref{eq:OP_RLMPC} (guaranteed by Theorem~\ref{thm:rf}) and positive definiteness of $c(\cdot)$ imply that the sequence $\{J^j_{t}(x^j_t)\}_{t\geq 0}$ is non-increasing. Moreover, positive definiteness of $\bar{Q}^{j}(\cdot)$ (by Proposition~\ref{prop:Q_CLF}) further implies that the sequence is lower bounded by $0$. Thus the sequence converges and taking limits on both sides of \eqref{eq:cost_decrease} gives\small
$$0\leq \lim_{t\rightarrow\infty}-c(\bar{\mathbf{Y}}^j_t)\leq 0\Rightarrow \lim_{t\rightarrow\infty}c(\bar{\mathbf{Y}}^j_t)=0$$\normalsize
By continuity of $c(\cdot)$, we have that $\lim_{t\rightarrow\infty}c(\bar{\mathbf{Y}}^j_t)=0\Leftrightarrow \lim_{t\rightarrow\infty}\text{dist}_{c^{-1}(0)}(\bar{\mathbf{Y}}^j_t)=0$. From \eqref{stage_cost}, we know $c^{-1}(0)=\mathcal{Y}_G$ and thus, $\lim_{t\rightarrow\infty}\text{dist}_{\mathcal{Y}_G}(\bar{\mathbf{Y}}^j_t)=0$ and $\lim_{t\rightarrow\infty}\text{dist}_{\mathcal{X}_G\ominus\mathcal{E}^1}(\bar{x}^j_t)=0$ by lemma \ref{lem:yconv_xconv}. 
 By Proposition \ref{prop:errinv},  we also have $x^j_t-\bar{x}^j_t\in\mathcal{E}^j\subseteq\mathcal{E}^1$. Using the facts $(\mathcal{X}_G\ominus\mathcal{E}^1)\oplus\mathcal{E}^1\subseteq\mathcal{X}_G$ and $\text{dist}_S(\cdot)\geq 0$, we have
\small
\begin{align*}
    \lim_{t\rightarrow\infty}\text{dist}_{\mathcal{X}_G}(x^j_t)=&\lim_{t\rightarrow\infty}\text{dist}_{\mathcal{X}_G}(\bar{x}^j_t+x^j_t-\bar{x}^j_t)\\
    \leq&\lim_{t\rightarrow\infty}\text{dist}_{(\mathcal{X}_G\ominus\mathcal{E}^1)\oplus\mathcal{E}^1}(\bar{x}^j_t+x^j_t-\bar{x}^j_t)\\
    =&0\\
    \Rightarrow \lim_{t\rightarrow\infty}\text{dist}_{\mathcal{X}_G}(x^j_t)=&0
\end{align*}
\normalsize
\end{proof}

We conclude our theoretical analysis of the proposed Robust MPC \eqref{eq:LMPC} with the following theorem. We state and prove that the closed-loop costs of system trajectories in closed-loop with the LMPC do not increase with iterations if the system starts from the same state, i.e., $x^j_0=x_S~\forall j\geq 0$.
\begin{theorem}\label{thm:cost_imp}[\textbf{Performance Improvement}]
Given Assumptions \ref{ass:dQC}-\ref{ass:CS_init}, the cost of the trajectories of system \eqref{eq:sysdyn} in closed-loop with control \eqref{eq:LMPC} does not increase with iterations,
$$j_2>j_1\Rightarrow J^{j_2}_{0\rightarrow \infty}(x_S)\leq J^{j_1}_{0\rightarrow \infty}(x_S)$$
where $J^{j}_{0\rightarrow \infty}(x_S)=\mathcal{C}^j_0$.
\end{theorem}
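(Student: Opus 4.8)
The plan is to prove the one-step monotonicity $\mathcal{C}^{j}_0 \le \mathcal{C}^{j-1}_0$ for every $j \ge 1$; the stated claim for $j_2 > j_1$ then follows by chaining these inequalities. To obtain the one-step bound I would sandwich the closed-loop cost $J^{j}_{0\to\infty}(x_S) = \mathcal{C}^{j}_0$ between two copies of the optimal MPC value $J^{j}_0(x_S)$: a telescoping cost-decrease argument on iteration $j$ shows $\mathcal{C}^{j}_0 \le J^{j}_0(x_S)$, and a feasible warm-start built from iteration $j-1$ shows $J^{j}_0(x_S) \le \mathcal{C}^{j-1}_0$.

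For the first inequality, I would note that the closed-loop lifted output at time $t$ equals the first predicted element, $\bar{\mathbf{Y}}^{j}_t = \bar{\mathbf{Y}}^{j\star}_{t|t}$, so the cost-decrease relation \eqref{eq:cost_decrease} reads $J^{j}_{t+1}(x^{j}_{t+1}) \le J^{j}_t(x^{j}_t) - c(\bar{\mathbf{Y}}^{j}_t)$. Summing over $t=0,\dots,T-1$ telescopes to $\sum_{t=0}^{T-1} c(\bar{\mathbf{Y}}^{j}_t) \le J^{j}_0(x_S) - J^{j}_T(x^{j}_T) \le J^{j}_0(x_S)$, using $J^{j}_T \ge 0$ (from non-negativity of $c(\cdot)$ in \eqref{stage_cost} and of $\bar{Q}^{j-1}(\cdot)$ in Proposition \ref{prop:Q_CLF}). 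Letting $T\to\infty$ gives $\mathcal{C}^{j}_0 = \sum_{t\ge 0} c(\bar{\mathbf{Y}}^{j}_t) \le J^{j}_0(x_S)$.

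For the second inequality, I would construct a feasible point of \eqref{eq:OP_RLMPC} at iteration $j$, $t=0$, from the first $N$ stages of the nominal closed-loop trajectory of iteration $j-1$: take $\bar{x}^{j}_{k|0} = \bar{x}^{j-1}_k$, $\bar{u}^{j}_{k|0} = \bar{u}^{j-1}_k$ for $k=0,\dots,N-1$, and terminal output $\bar{\mathbf{y}}^{j}_{N|0} = \bar{\mathbf{y}}^{j-1}_N$. Feasibility holds because the iteration $j-1$ nominal trajectory satisfies \eqref{eq:nom_spec} (with index $j-1$), giving $\bar{x}^{j-1}_k \in \bar{\mathcal{X}}^{j}$, $\bar{u}^{j-1}_k \in \bar{\mathcal{U}}^{j}$ and $x_S - \bar{x}^{j-1}_0 \in \mathcal{E}^{j}$, while $\bar{\mathbf{y}}^{j-1}_N$ is among the data points defining $\bar{\mathcal{CS}}^{j-1}_{\mathbf{y}}$ in \eqref{eq:CSy_def}, so the terminal constraint is met. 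The candidate cost is $\bar{Q}^{j-1}(\bar{\mathbf{y}}^{j-1}_N) + \sum_{k=0}^{N-1} c(\bar{\mathbf{Y}}^{j-1}_k)$; since the barycentric program \eqref{eq:conv_cf} admits the convex combination placing unit weight on $\bar{\mathbf{y}}^{j-1}_N$, I obtain $\bar{Q}^{j-1}(\bar{\mathbf{y}}^{j-1}_N) \le \mathcal{C}^{j-1}_N$, with $\mathcal{C}^{j-1}_N$ as in \eqref{eq:ctg}. Hence the candidate cost is at most $\sum_{k\ge N} c(\bar{\mathbf{Y}}^{j-1}_k) + \sum_{k=0}^{N-1} c(\bar{\mathbf{Y}}^{j-1}_k) = \mathcal{C}^{j-1}_0$, and optimality of $J^{j}_0$ gives $J^{j}_0(x_S) \le \mathcal{C}^{j-1}_0$. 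Combining the two inequalities yields $\mathcal{C}^{j}_0 \le \mathcal{C}^{j-1}_0$, and induction on $j$ completes the argument.

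I expect the main obstacle to be the clean verification of warm-start feasibility under the shifting index conventions — in particular that iteration $(j-1)$'s nominal trajectory lies in the \emph{next} iteration's tightened sets $\bar{\mathcal{X}}^{j}, \bar{\mathcal{U}}^{j}$ and that its initial error lies in $\mathcal{E}^{j}$ — together with the terminal-cost estimate $\bar{Q}^{j-1}(\bar{\mathbf{y}}^{j-1}_N) \le \mathcal{C}^{j-1}_N$, which is precisely where the barycentric interpolation \eqref{eq:conv_cf} and the monotonicity $\bar{\mathcal{CS}}^{j-1}_{\mathbf{y}} \supseteq \bar{\mathcal{CS}}^{j-2}_{\mathbf{y}}$ must be invoked.
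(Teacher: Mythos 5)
Your proof is correct and takes essentially the same route as the paper: the paper likewise sandwiches the closed-loop cost via $J^{j}_{0\rightarrow\infty}(x_S)\leq J^{j}_{0}(x_S)$ (repeated application of \eqref{eq:cost_decrease}) and $J^{j}_{0}(x_S)\leq J^{j-1}_{0\rightarrow\infty}(x_S)$ (warm start from iteration $j-1$ together with the barycentric bound $\bar{Q}^{j-1}(\bar{\mathbf{y}}^{j-1}_N)\leq\mathcal{C}^{j-1}_N$), and then chains the one-step inequality over iterations. Your write-up merely makes explicit two points the paper leaves implicit: the warm-start feasibility argument via \eqref{eq:nom_spec} (which the paper compresses into ``optimality of problem \eqref{eq:OP_RLMPC}''), and the correct terminal-cost index $\bar{Q}^{j-1}$ where the paper's displayed chain writes $\bar{Q}^{j}$.
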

\begin{proof}
 The cost of the trajectory in iteration $j-1$ is given by\small
\begin{align*}
    J^{j-1}_{0\rightarrow \infty}(x_S)&=\sum\limits_{t\geq 0} c(\bar{\mathbf{Y}}^{j-1}_t)\\
    &=\sum\limits_{t=0}^{N-1}c(\bar{\mathbf{Y}}^{j-1}_t)+\mathcal{C}^{j-1}_N\\
    &\geq\sum\limits_{t=0}^{N-1}c(\bar{\mathbf{Y}}^{j-1}_t)+\bar{Q}^{j}(\bar{\mathbf{y}}^{j-1}_N)\\
    &\geq J^j_{0}(x_S)
\end{align*}\normalsize
The second to last inequality comes from the definition of $\bar{Q}^{j}(\cdot)$ in \eqref{eq:conv_cf} while the last inequality comes from optimality of problem \eqref{eq:OP_RLMPC} in the $j$th iteration starting from $x^j_0=x_S$.\\
Now we use inequality \eqref{eq:cost_decrease} repeatedly to derive\small
\begin{align*}
    J^j_{0}(x_S)\geq& c(\bar{\mathbf{Y}}^j_0)+J^j_{1}(x^j_1)\\
    \geq& c(\bar{\mathbf{Y}}^j_0)+c(\bar{\mathbf{Y}}^j_1)+J^{j}_{2}(x^j_2)\\
    \geq& \lim_{t\rightarrow\infty}( \sum\limits_{k=0}^{t-1}c(\bar{\mathbf{Y}}^j_t)+J^j_{t}(x^j_t))\\
    \geq &\lim_{t\rightarrow\infty} \sum\limits_{k=0}^{t-1}c(\bar{\mathbf{Y}}^j_t)\\
    =&J^j_{0\rightarrow\infty}(x_S)
\end{align*}\normalsize
Thus,
\small$$J^{j-1}_{0\rightarrow\infty}(x_S)\geq J^j_{0\rightarrow N}(x_S)\geq J^{j}_{0\rightarrow\infty}(x_S)$$\normalsize
The desired statement easily follows from above.   
\end{proof}

\section{Numerical Example: Kinematic Bicycle in Frenet Frame}\label{sec:ex}
In this section, we demonstrate our approach for constrained optimal control of a kinematic bicycle in the Frenet frame. The code for this example is hosted at  \url{https://github.com/shn66/ROLMPC}.
\subsection{Problem Formulation}\label{ssec:kb_prob_form}
We solve a constrained optimal control problem for driving a kinematic bicycle over a chicane into a goal set $\mathcal{X}_G$. The dynamics $f(\cdot)$ of the bicycle are described in the Frenet frame, \begin{align}
    s_{t+1}&=s_{t}+\dfrac{v_t \cos(e_{\psi t})}{1-e_{y,t}C(s_t)}\text{dt}+d_{1,t}\nonumber\\
    e_{y,t+1}&=e_{y,t}+v_t\sin(e_{\psi, t})\text{dt}+d_{2,t}\nonumber\\
e_{\psi,t+1}&=e_{\psi,t}+v_t\left(\frac{\tan(\delta_t)}{L_{RF}}-\dfrac{\cos(e_{\psi t})C(s_t)}{1-e_{y,t}C(s_t)}\right)\text{dt}+d_{3,t}\nonumber
\end{align}
where time-step $\text{dt}=0.2 s$, and state $x_t=[s_t, e_{y,t}, e_{\psi, t}]$ consists of the longitudinal abcissa $s_t$, the lateral offset $e_{y,t}$ and the heading alignment error $e_{\psi, t}$ w.r.t the road center-line. The disturbance $d_t=[d_{1,t},d_{2,t},d_{3,t}]$ captures bounded process noise, errors from discretization and conversion between the Euclidean and Frenet frame. The inputs $u_t=[v_t,\delta_t]$ are the speed $v_t$ of the rear axle and steering angle $\delta_t$ of the front axle.  $L_{RF}=4m$ is the wheelbase of the vehicle, and it is assumed that the centre of gravity is on the rear axle. The center-line is given by a chicane with curvature $C(s_t)=\frac{1}{10\pi}\tan^{-1}(100-\frac{1}{2}s^2_t)$. The constraints are given by the box sets as in Assumption~\ref{ass:box}:
\begin{align}
    \mathcal{X}&=\left\{[s,e_y,e_{\psi}]\middle\vert s\in[-2,60],e_y\in[-4.5,4.5],e_\psi\in[-\frac{\pi}{3},\frac{\pi}{3}]\right\}\nonumber\\
    \mathcal{U}&=\left\{[v,\delta]\middle\vert\ v\in[0,18], \delta\in[-\frac{\pi}{2},\frac{\pi}{2}]\right\},\nonumber
\end{align}
and the goal set is $\mathcal{X}_G=\{[s,e_y,e_\psi]\in\mathcal{X}| s\geq 40\}$.

We use our Robust Output-lifted LMPC to iteratively approximate the solution of the following optimal control problem
\begin{equation}\label{eq:inf_OP_frenet_bike}
	\begin{aligned}
\min\limits_{\{\pi_t(\cdot)\}_{t\geq 0}} \quad & \displaystyle \sum\limits_{t=0}^{\infty} \max\{40-s_t,0\} \\[1ex]
		\text{s.t.}\quad  & x_{t+1}=f( x_{t},\pi_t(x_{t}))+d_t, \\
    & x_{t}\in\mathcal{X},~\pi_t(x_t)\in\mathcal{U},\\
    & x_0=x_S
	\end{aligned}
\end{equation}
for the kinematic bicycle starting from $x_S=[0,1,0]$.
To apply Algorithm \ref{algo:ROLMPC}, we verify that Assumptions~\ref{ass:dQC}, \ref{ass:flat} and \ref{ass:einv} are satisfied. First, we describe the lifted output $\bar{\mathbf{Y}}$ and associated maps $\mathcal{F}_x(\cdot), \mathcal{F}_u(\cdot)$ for the kinematic bicycle, and obtain the bounding functions $\mathcal{F}^{\cup}(\cdot), \mathcal{F}^{\cap}(\cdot)$ for verifying Assumption~\ref{ass:flat}.  Second, we obtain an uncertainty description (as in Assumption \ref{ass:dQC}) for the additive disturbance $d_t$ using data. To verify Assumption~\ref{ass:einv}, we describe a procedure for constructing the error invariant $\mathcal{E}$ for the error dynamics \eqref{eq:error_dyn}, and choosing a fixed linear policy $\kappa(e_t)=Ke_t$. The three steps are detailed below.

\subsubsection*{Lifted Outputs} The difference flat output for the nominal kinematic bicycle model are given by $\bar{s}_t, \bar{e}_{y,t}$. The lifted output and associated maps for the nominal system are given by 

{\begingroup
\allowdisplaybreaks
\small
\begin{align*}
    \bar{y}_t=[\bar{y}_{1,t}\ \bar{y}_{2,t}]^\top&=[\bar{s}_t\ \bar{e}_{y,t}]^\top,\ \bar{\mathbf{Y}}_t=[\bar{y}_t,\bar{y}_{t+1},\bar{y}_{t+2}]\\
\mathcal{F}_x(\bar{y}_t,\bar{y}_{t+1})&=\begin{bmatrix}\bar{y}_t\\\tan^{-1}\left(\dfrac{\bar{y}_{2,t+1}-\bar{y}_{2,t}}{(1-\bar{y}_{2,t}C(\bar{y}_{1,t}))(\bar{y}_{1,t+1}-\bar{y}_{1,t})}\right)\end{bmatrix}\nonumber\\
    \mathcal{F}_u(\bar{y}_t,\bar{y}_{t+1},\bar{y}_{t+2})&=[\bar{v}_t\ \bar{\delta}_t]^\top
    \end{align*}
\begin{align*}
    \bar{v}_t&=\frac{1}{dt}\left\Vert\begin{bmatrix}(1-\bar{y}_{2,t}C(\bar{y}_{1,t}))(\bar{y}_{1,t+1}-\bar{y}_{1,t})\\ \bar{y}_{2,t+1}-\bar{y}_{2,t}\end{bmatrix}\right\Vert_2\\
    \bar{\delta}_t&=\tan^{-1}\Big([0\ 0\ \frac{L_{RF}}{dt\bar{v}_t}]\big(\mathcal{F}_x(\bar{y}_{t+1},\bar{y}_{t+2})-\mathcal{F}_x(\bar{y}_{t},\bar{y}_{t+1})\big)\nonumber\\
    &\quad (\bar{y}_{1,t+1}-\bar{y}_{1,t})C(\bar{y}_{1,t})\Big)
\end{align*}
\normalsize
\endgroup}
Next, we propose bounding functions $\mathcal{F}^{\cup}(\cdot)$, $\mathcal{F}^{\cap}(\cdot)$ such that they are quasiconvex and quasiconcave respectively, with $\mathcal{F}^\cap(\bar{\mathbf{Y}})\leq \mathcal{F}(\bar{\mathbf{Y}})\leq \mathcal{F}^\cup(\bar{\mathbf{Y}})$, as required by Assumption \ref{ass:flat}\eqref{ass:flatmap_convex}. For the nominal positions $\bar{s}_t,\bar{e}_{y,t}$, the bounding functions are trivially given by $\mathcal{F}^1_x(\cdot), \mathcal{F}_x^2(\cdot)$ because linear functions are both quasiconvex and quasiconcave. 
{\small{
\begin{align*}
    &[\mathcal{F}^{1,\cap}(\bar{\mathbf{Y}}_t),\mathcal{F}^{2,\cap}(\bar{\mathbf{Y}}_t)]=[\mathcal{F}^{1,\cup}(\bar{\mathbf{Y}}_t),\mathcal{F}^{2,\cup}(\bar{\mathbf{Y}}_t)]=\bar{y}_t
    \end{align*}}}
    For the bounding functions corresponding to $\bar{e}_{\psi,t}$, we use the system constraints to bound $(1-\bar{y}_{2,t}C(\bar{y}_{1,t}))\in [\frac{31}{40},\frac{49}{40}]$, to construct bounding functions as:
{\small{    
    \begin{align*}
    &\mathcal{F}^{3,\cap}(\bar{\mathbf{Y}}_t)=\tan^{-1}\left(\min\{\frac{(\bar{y}_{2,t+1}-\bar{y}_{2,t})}{\frac{49}{40}(\bar{y}_{1,t+1}-\bar{y}_{1,t})}, \frac{(\bar{y}_{2,t+1}-\bar{y}_{2,t})}{\frac{31}{40}(\bar{y}_{1,t+1}-\bar{y}_{1,t})}\}\right) \\
    &\mathcal{F}^{3,\cup}(\bar{\mathbf{Y}}_t)=\tan^{-1}\left(\max\{\frac{(\bar{y}_{2,t+1}-\bar{y}_{2,t})}{\frac{49}{40}(\bar{y}_{1,t+1}-\bar{y}_{1,t})}, \frac{(\bar{y}_{2,t+1}-\bar{y}_{2,t})}{\frac{31}{40}(\bar{y}_{1,t+1}-\bar{y}_{1,t})}\}\right) 
\end{align*}}}
where $\mathcal{F}^{3,\cup}(\cdot), \mathcal{F}^{3,\cap}(\cdot)$ can be verified to be quasiconvex and quasiconcave respectively by using composition rules of quasilinear functions. Similarly for the inputs, we get: 
{\small{
\begin{align*}
    &\mathcal{F}^{4,\cap}(\bar{\mathbf{Y}}_t)=0, 
    \mathcal{F}^{4,\cup}(\bar{\mathbf{Y}}_t)=\frac{1}{dt}\left\Vert\begin{bmatrix}\frac{49}{40}(\bar{y}_{1,t+1}-\bar{y}_{1,t})\\ \bar{y}_{2,t+1}-\bar{y}_{2,t}\end{bmatrix}\right\Vert_2\\
    &\mathcal{F}^{5,\cap}(\bar{\mathbf{Y}}_t)=-\frac{\pi}{2}, 
    \mathcal{F}^{5,\cup}(\bar{\mathbf{Y}}_t)=\frac{\pi}{2}
\end{align*}
}}

\subsubsection*{Uncertainty Modeling} The disturbance $d_t$ is assumed to lie in a state and input dependent set $D(x,u)$, modelled as $D(x,u)=\{d| \exists w: d=d(x,u)+w, ||w||_{2}\leq\gamma\}$, where the function $d(x,u)$ is unknown, but assumed to be $L-$Lipschitz. The conditions on $D(x,u)$ for Assumption \ref{ass:dQC} are satisfied as shown in Example \ref{eg:lip_bnded}.
The constants $L$, $\gamma$ are estimated as follows:
\begin{enumerate}
    \item Sample system transitions to obtain  data-set $\mathcal{T}=\{(x_i,u_i,d_i)\}_{i=1}^{T}$ 
    \item The true Lipschitz constant $L$ and bound $\gamma$ are one of the minimizers of the following semi-infinite, multi-objective optimization problem: {\small{\begin{align}\min
    \left\{(\tilde{L},\tilde{\gamma})\in\mathbb{R}^2_+ \middle\vert\begin{aligned}
    &\Vert \bar{d}-\bar{d}'\Vert_2\leq \left|\tilde{L}\left\Vert \begin{bmatrix}x-x'\\u-u'\end{bmatrix}\right\Vert_2+2\tilde{\gamma}\right|,\\
    &\forall \bar{d}\in D(x,u),\bar{d}'\in D(x',u'),\\ &\forall (x,u),(x',u')\in\mathbb{R}^{n+m}
    \end{aligned}\right\}\nonumber
    \end{align}}}
    \item We solve for $L,\gamma$ for the scalarized objective $\tilde{L}+\tilde{\gamma}$, and approximate the semi-infinite optimization via the scenario approach using the data-set $\mathcal{T}$ to obtain the LP: 
    {\small{\begin{align}
\begin{aligned}\min_{\tilde{L}>0,\tilde{\gamma}>0}&\tilde{L}+\tilde{\gamma}\\
        \text{s.t}&\ \Vert d_i-d_j\Vert_2\leq \tilde{L}\left\Vert \begin{bmatrix}x_i-x_j\\u_i-u_j\end{bmatrix}\right\Vert_2+2\tilde{\gamma},\\
    &~~\forall (x_i,u_i,d_i),(x_j,u_j,d_j)\in\mathcal{T}\end{aligned}\nonumber
    \end{align}}}
The approximated constants obtained after solving the linear program were $\hat{L}=0.0329$, $\hat{\gamma}=0.1640$. Using \cite[Corollary 6]{alamorandom} and linearity of the semi-infinite constraints, it can be shown that sampled constraints provide an \textit{inner approximation} of the actual feasible set for the semi-infinite problem with high-confidence. Thus, the event $E:=\{\hat{L}\geq L, \hat{\gamma}\geq \gamma\}$ holds with high probability, and the statements of Theorems \ref{thm:D_cnstrct}--\ref{thm:cost_imp} hold conditioned on $E$ (which is conventional as noted in Remark \ref{rem:approx_constants}).
\end{enumerate}

\subsubsection*{Error Invariant and Error Policy} To construct the error invariant and the error policy for Assumption \ref{ass:einv}, we linearise the bicycle dynamics about $x_{ref}=(s, e_y,e_{\psi})=(\sqrt{200},0,0)$ $u_{ref}=(v,\delta)=(10, 0)$, and obtain bounds on the higher-order terms using the system constraints to give the linearized dynamics $x_{t+1}=f(x_{ref},u_{ref})+A(x_t-x_{ref})+B(u_t-u_{ref})+n_t+d_t$, where $n_t\in\mathcal{W}^l$ is the linearisation error and $d_t\in\hat{\mathcal{D}}$ corresponds to the error due to unmodelled dynamics. Similarly, the nominal dynamics are given as $\bar{x}_{t+1}=f(x_{ref},u_{ref})+A(\bar{x}_t-x_{ref})+B(\bar{u}_t-u_{ref})+\bar{n}_t$, and so, the error dynamics are $e_{t+1}=Ae_t+B\kappa(e_t)+d_t+[I_n\ -I_n][n^\top_t\ \bar{n}^\top_t]^\top$. For the combined disturbance $\bar{\mathcal{D}}=\hat{\mathcal{D}}\oplus [I_n\ -I_n](\mathcal{W}^l\times \mathcal{W}^l)$, the RPI is computed by fixing $\kappa(e_t)=K_{LQR}e_t$ and setting $\mathcal{E}_{RPI}=\bigoplus_{i=0}^\infty (A+BK_{LQR})^i\bar{\mathcal{D}}$. The cost matrices $Q, R$ for the LQR policy are tuned such that $\mathcal{E}_{RPI}\subset\mathcal{X}$ and $K_{LQR}\mathcal{E}_{RPI}\subset\mathcal{U}$.

\subsection{Results}
We implement the proposed Robust Output-lifted LMPC strategy as described in Algorithm~\ref{algo:ROLMPC}  for optimal control of the kinematic bicycle from Section~\ref{ssec:kb_prob_form}.  The results emphasize the following aspects of our approach:

\subsubsection{Iterative Learning}
The trajectory data across iterations $i=0,..,j-1$ is used for constructing the disturbance support $\hat{\mathcal{D}}^j$, the terminal set $\bar{\mathcal{X}}^j_N$ and trajectory cost estimate via the terminal cost $P^j(\cdot)$. In the following plots, we observe that with each successive iteration, the disturbance supports shrink, the terminal sets enlarge and the trajectory costs decrease.

\begin{itemize}
    \item \textbf{Disturbance Bound:} For Step 1 of Algorithm~\ref{algo:ROLMPC} at iteration $j$, we use the approach in Theorem~\ref{thm:D_cnstrct} of Section~\ref{ssec:err_inv} with system trajectory data $\{\{x^i_t,u^i_t\}_{t\geq 0}\}_{i=0}^{j-1}$ to obtain the outer-approximation of the disturbance support $\hat{\mathcal{D}}^j$ as a product of intervals, $\hat{\mathcal{D}}^j_1\times \hat{\mathcal{D}}^j_2\times\hat{\mathcal{D}}_3^j\subset\mathbb{R}^3$. We show the constructed disturbance support for iterations $j=1,10,15,20, 25$ in Figure~\ref{fig:W}. Notice that the disturbance supports shrink as more trajectory data as collected, as enforced by the SDP \eqref{eq:LMI0}.
\begin{figure}[h]
    \centering
\includegraphics[width=\linewidth]{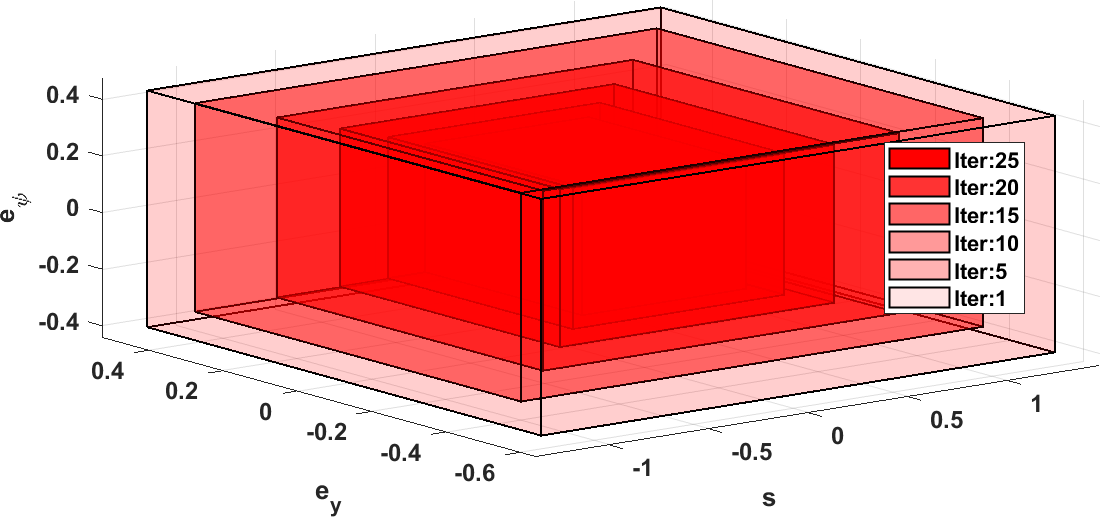}
    \caption{{\small{Disturbance support estimates across iterations. As more data is collected, the support estimates shrink.}}}
    \label{fig:W}
\end{figure}
\item \textbf{Tightened Constraints:} For Step 2 of Algorithm~\ref{algo:ROLMPC} at iteration $j$, the tightened constraints $\bar{\mathcal{X}}^j,\bar{\mathcal{U}}^j$ are obtained by solving NLP \eqref{eq:tight_const_NLP} in Section~\ref{ssec:constraints}. This requires $\bar{\mathcal{X}}^{j-1},\bar{\mathcal{U}}^{j-1}$, the error invariant $\mathcal{E}^j$ and error policy $\kappa(e_t)=Ke_t$ for the error dynamics \eqref{eq:error_dyn} with disturbance support $\hat{\mathcal{D}}^j$ computed in Step 1. The error invariant and error policy are computed as described in Section~\ref{ssec:kb_prob_form}. We show the tightened constraints for iterations $j=1,10,15,20, 25$ in Figure~\ref{fig:XtUt}. Notice that the tightened state and input constraints increase in size with increasing iterations, as guaranteed by Proposition~\ref{prop:tight_cons_NLP}.
\begin{figure}[h]
\begin{subfigure}{.5\textwidth}
  \centering
  \includegraphics[width=.95\linewidth]{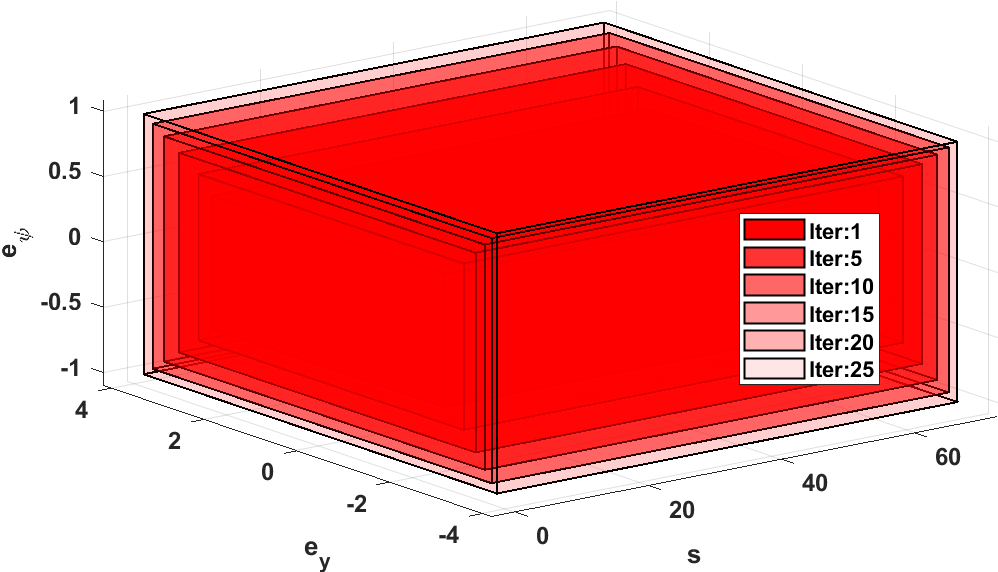}
  \caption{{\small{Tightened state constraints across iterations.}}}
  \label{fig:Xt}
\end{subfigure}%

\begin{subfigure}{.5\textwidth}
  \centering
  \includegraphics[width=.95\linewidth]{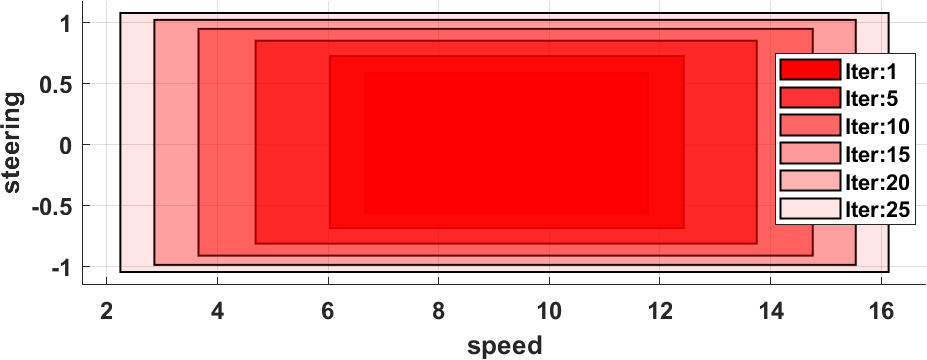}
  \caption{{\small{Tightened input constraints across iterations.}}}
  \label{fig:Ut}
\end{subfigure}
\caption{{\small{The tightened constraints increase in size across iterations, as the model uncertainty is learned.}}}
\label{fig:XtUt}
\end{figure}
\item \textbf{Terminal Set:}  The terminal set at iteration $j$ is constructed from  $\bar{\mathcal{CS}}^j_{\mathbf{y}}$ as  $\bar{\mathcal{X}}^j_N=\{\bar{x}| \exists\bar{\mathbf{y}}\in\bar{\mathcal{CS}}^j_{\mathbf{y}}, \mathcal{F}_x(\bar{\mathbf{y}})=\bar{x}\}$. To visualize this set, 1) we sample points in $\bar{\mathcal{CS}}^j_{\mathbf{y}}$, 2) map them onto the state space via $\mathcal{F}_x(\cdot)$ and 3) use Matlab's \textit{alphaShape} function to fit a surface over the projected points (note that the image of convex set $\bar{\mathcal{CS}}^j_{\mathbf{y}}$ under continuous, surjective function $\mathcal{F}_x(\cdot)$ can be shown to be path-connected \cite[Chapter 9]{munkres2000topology}). The resulting terminal set approximation is shown in Figure~\ref{fig:term} using trajectory data up to iteration $j=25$, in both Frenet and global coordinates. Note that these continuous sets were constructed without any local linear approximations, or a pre-computed reference, and utilise the complete nonlinear dynamics of \eqref{eq:sysdyn} \textit{implicitly} via trajectory data $\{\bar{x}_t, \bar{u}_t\}_{t\geq 0}$ and the map $\mathcal{F}_x(\cdot)$. 

\begin{figure}[!h]
\centering
\begin{subfigure}[b]{0.47\columnwidth}
   \includegraphics[width=\linewidth]{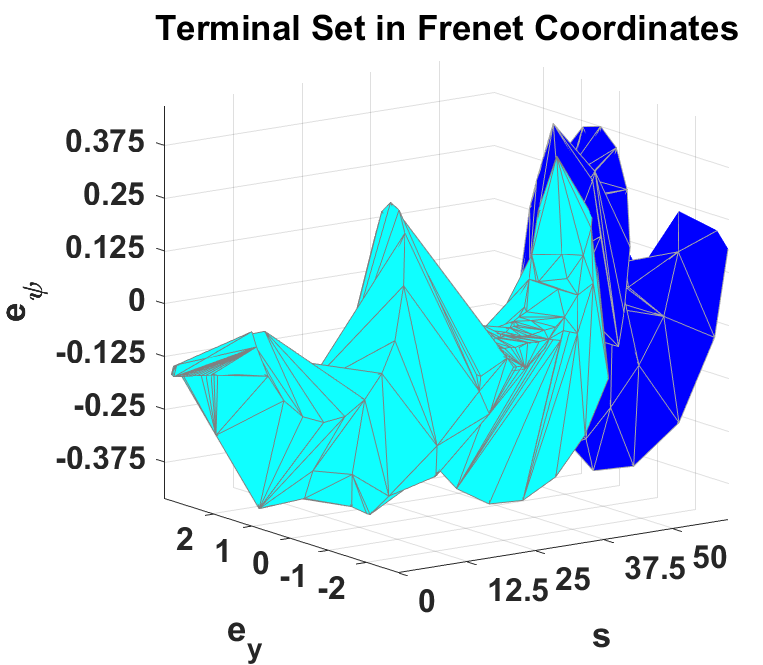}
\end{subfigure}
\begin{subfigure}[b]{0.47\columnwidth}
   \includegraphics[width=\linewidth]{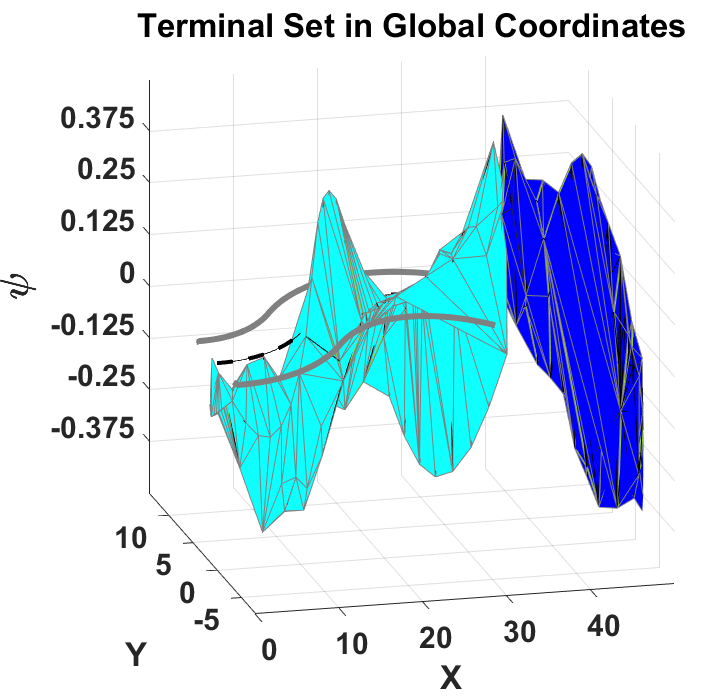}
\end{subfigure}
\caption{{\small{Terminal sets in state space, in Frenet and global coordinates constructed from nominal system trajectory data up to iteration $j=20$. The dark blue regions denote   states in $\mathcal{X}_G$. }}}\label{fig:term}
\end{figure}
\item\textbf{Trajectory Costs:} We plot the trajectory costs of the closed-loop trajectories across all the iterations in Figure~\ref{fig:traj_cost}, and see that the trajectory costs decrease with each iteration, validating the claim of Theorem~\ref{thm:cost_imp}. 
\begin{figure}[!h]
    \centering
    \includegraphics[width=1\linewidth]{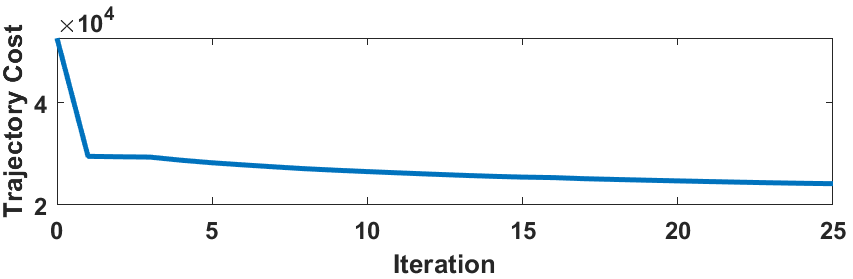}
    \caption{{\small{Closed-loop trajectory costs decrease across iterations.}}}
    \label{fig:traj_cost}
\end{figure}
\end{itemize}

    

    

\subsubsection{Robust Constraint Satisfaction}: The tightened constraints within the Robust MPC formulation ensure that the closed-loop system trajectories satisfy the constraints robustly, despite the uncertainty in the dynamics. 

We plot the closed-loop state and input trajectories of the kinematic bicycle in global coordinates in Figures~\ref{fig:traj_bi}, \ref{fig:speed_bi}, \ref{fig:delta_bi} across the iterations. Iteration 0 corresponds to the first trajectory with which our algorithm was initialized. At iteration 25, we see that the path of the closed-loop trajectory is significantly tighter than that of iteration 0 in Figure~\ref{fig:traj_bi}. From the Figures~\ref{fig:speed_bi},~\ref{fig:nspeed_bi}, \ref{fig:delta_bi}, we see that the actual and nominal speed profiles, and the steering commands are within constraints.  The tightened constraints for the nominal speed \ref{fig:nspeed_bi} are shown for $j=25$. Also notice in Figure~\ref{fig:nspeed_bi} that the trajectory in iteration 25 reaches $\mathcal{X}_G$ the fastest. 
\begin{figure}[!h]
    \centering
    \includegraphics[width=0.9\linewidth]{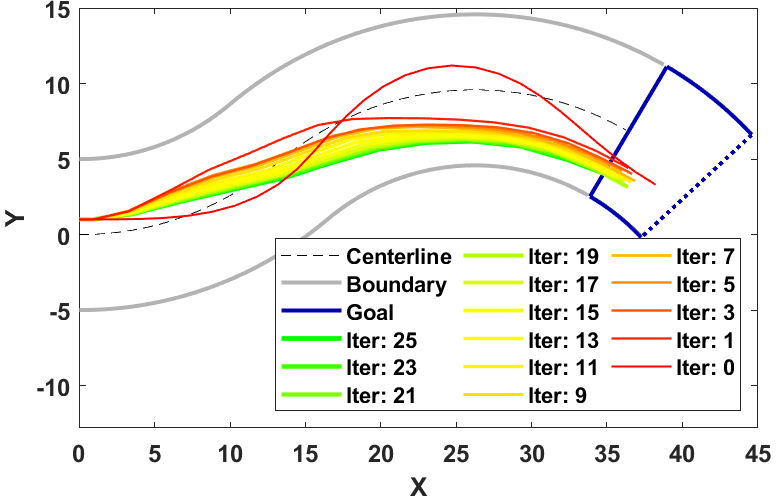}
    \caption{{\small{State trajectories across iterations.}}}
    \label{fig:traj_bi}
\end{figure}
\begin{figure}[!h]
\centering
\begin{subfigure}[b]{0.5\textwidth}
   \includegraphics[width=1\linewidth]{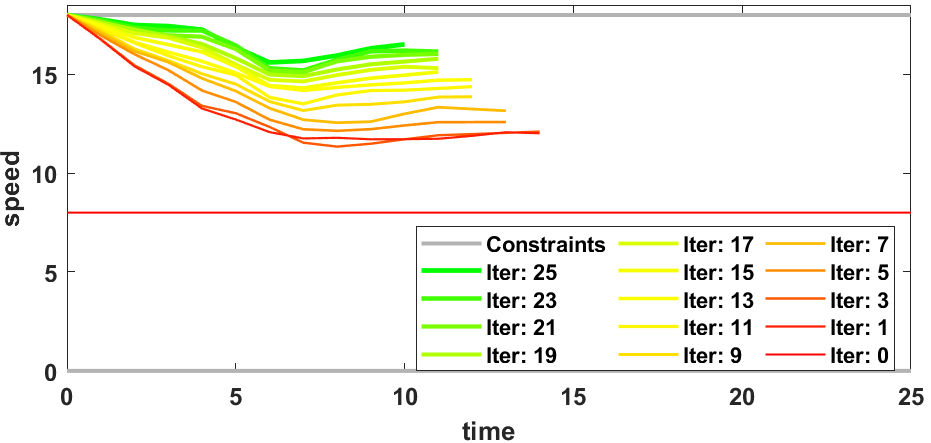}
   \caption{Speed profile}
   \label{fig:speed_bi} 
\end{subfigure}
\begin{subfigure}[b]{0.5\textwidth}
   \includegraphics[width=1\linewidth]{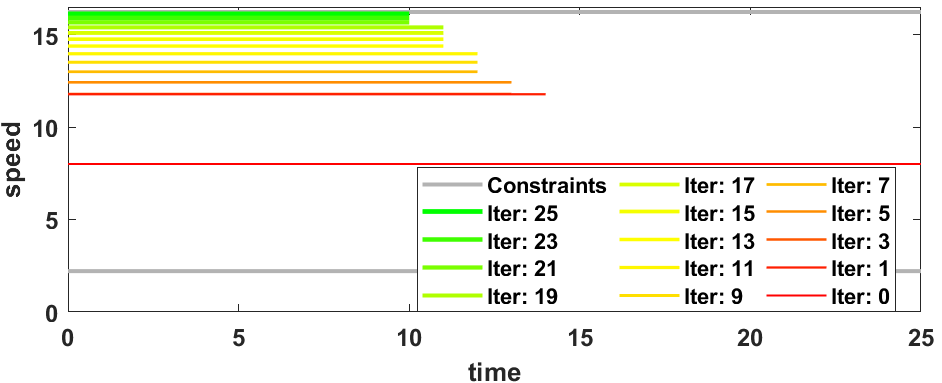}
   \caption{Nominal speed profile}
   \label{fig:nspeed_bi} 
\end{subfigure}
\begin{subfigure}[b]{0.45\textwidth}
   \includegraphics[width=1\linewidth]{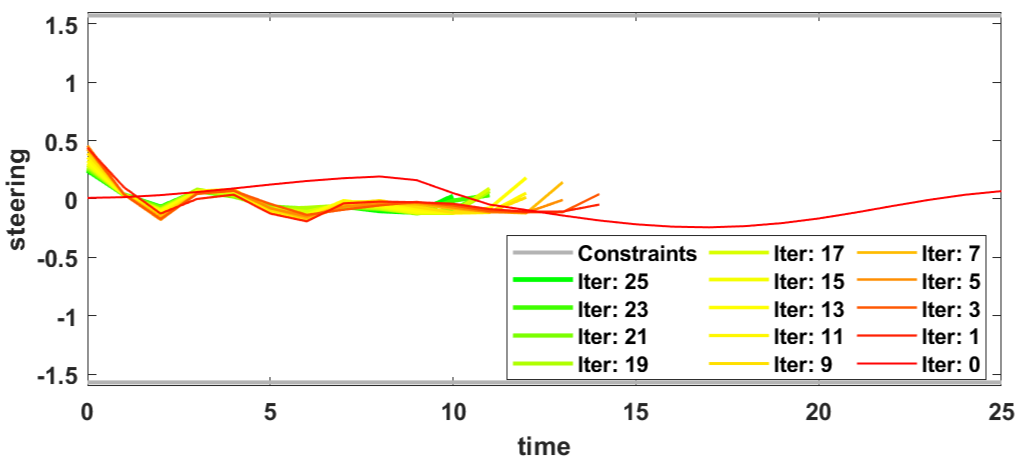}
   \caption{Steering profile}
   \label{fig:delta_bi}
\end{subfigure}
\caption{{\small{Input trajectories across iterations.}}}
\end{figure}

\subsubsection{Computational Tractability}: The proposed Convex Output Safe set $\bar{\mathcal{CS}}^j_{\mathbf{y}}$ is a convex set, as opposed to the discrete Safe set construction $\mathcal{SS}^j$ in \cite{UgoTAC}. This improves computation efficiency for solving the optimization problem \ref{eq:OP_RLMPC} without sacrificing performance guarantees.

We compare the average solve times for our approach and the LMPC from \cite{UgoTAC} to demonstrate the benefit of using the continuous safe set $\bar{\mathcal{CS}}^j_{\mathbf{y}}$ over the discrete safe set $\mathcal{SS}^j$. The former leads to solving a nonlinear program (which is solved using IPOPT) and the latter requires solving a mixed-integer nonlinear program (which is solved using BONMIN). In Figure~\ref{fig:comp_cost}, we see that the solve times increase with iterations because of the growing size of the safe sets, but our approach is markedly more efficient (with solve times $\leq 10^{-0.5} s\approx 0.3 s$).
\begin{figure}[!h]
    \centering
    \includegraphics[width=1.1\linewidth]{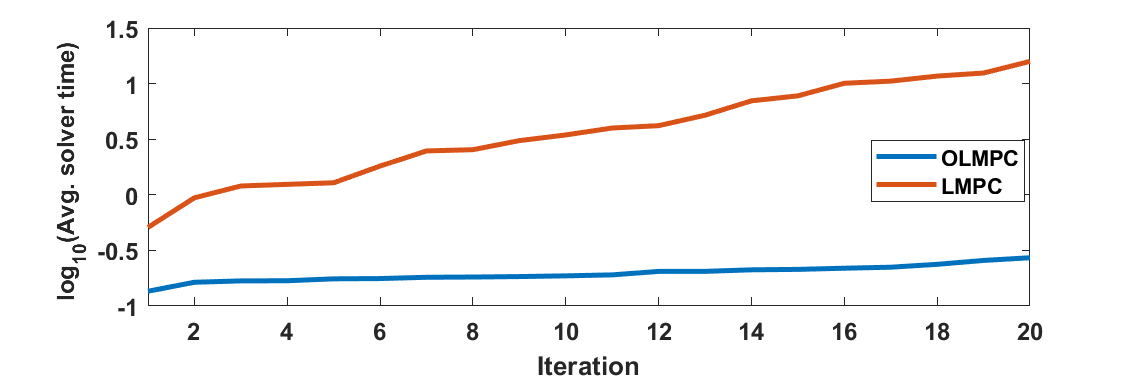}
    \caption{{\small{Avg. solve times for Output-lifted LMPC and LMPC for iterations $j=1$ to $j=20$ in $\log_{10}$ scale.}}}
    \label{fig:comp_cost}
\end{figure}
\section{Conclusion}
We have proposed a formulation of Robust LMPC for systems with lifted outputs performing iterative tasks. We showed that with certain properties of these outputs, we can iteratively construct continuous, control invariant terminal sets and CLF terminal costs for nonlinear system dynamics using historical state-input trajectory data. Furthermore, we show how to use the trajectory data to quantify and iteratively decrease model uncertainty, and construct tightened state and input constraints for the Robust MPC design. The proposed Robust Output-lifted LMPC scheme is recursively feasible, convergent and iteratively improves system performance while guaranteeing robust constraint satisfaction.

\bibliographystyle{ieeetr}
\bibliography{root.bib}
\section{Appendix}
\subsection{Proof of Proposition \ref{prop:qc}}
By Assumption~\ref{ass:dQC}, we know that for any $q,q'\in\mathbb{R}^{n+m}$, $z\in D(q),z'\in D(q')$, we have the incremental inequalities
\begin{align*}
     &\begin{bmatrix}
     1\\q-q'\\z-z'
     \end{bmatrix}^\top Q^{(j)}\begin{bmatrix}
     1\\q-q'\\ z-z'
     \end{bmatrix}\geq 0,~\forall Q^{(j)}\in\boldsymbol{\mathcal{Q}},
 \end{align*}
 which also holds for $(q,z), (q',z')\in G(D)$ by Definition \ref{def:graph}. Now define the set $\mathcal G^i_t$ using the incremental inequalities and $(q^i_t,d^i_t)\in G(D)$ as in \eqref{eq:E_approx_jt}. Observe that for any $(q,z)\in G(D)$, we have $(q,z)\in\mathcal{G}^i_t$. Thus, $G(D)\subseteq\mathcal{G}^i_t$. Since this holds for any iteration $i$ and time $t$ for system \eqref{eq:sysdyn}, we have $G(D)\subseteq\mathcal{G}^i_t,~\forall t\geq 0, \forall i\leq j-1$ which further implies $G(D)\subseteq\cap_{i=0}^{j-1}\cap_{t\geq 0}\mathcal{G}^i_t$
\hfill$\blacksquare$
\subsection{Proof of Proposition \ref{prop:tight_cons_NLP}}
\begin{enumerate}
\item If \eqref{eq:tight_const_NLP} is feasible for $j=1$, we have $\bar{\mathcal{X}}^{0}\subseteq\bar{\mathcal{X}}^{1}, \bar{\mathcal{U}}^{0}\subseteq\bar{\mathcal{U}}^{1}$. For $j>1$, see that since $\mathcal{E}^j\subset\mathcal{E}^{j-1}$, we have $\hat{\mathcal{X}}^{j}\supset \hat{\mathcal{X}}^{j-1}$ and $\hat{\mathcal{U}}^{j}\supset\hat{\mathcal{U}}^{j-1}$. Thus, $\bar{\mathcal{S}}^j_x\supseteq \bar{\mathcal{S}}^{j-1}_x, \bar{\mathcal{S}}^j_u\supseteq \bar{\mathcal{S}}^{j-1}_u$ and since $\bar{\mathcal{X}}^{j-1}\subseteq\bar{\mathcal{S}}^{j-1}_x$, $\bar{\mathcal{U}}^{j-1}\subseteq\bar{\mathcal{S}}^{j-1}_u$, we also have $\bar{\mathcal{X}}^{j-1}\subseteq\bar{\mathcal{S}}^{j}_x$, $\bar{\mathcal{U}}^{j-1}\subseteq\bar{\mathcal{S}}^{j}_u$. This implies that $\alpha^j_x=1, \alpha^j_u=1$, $v^j_x=0, v^j_u=0$ is feasible for \eqref{eq:tight_const_NLP}.
\item Since $\bar{\mathcal{S}}_x^j\subset \hat{\mathcal{X}}^j=\mathcal{X}\ominus \mathcal{E}^j, \bar{\mathcal{S}}_u^j\subset \hat{\mathcal{U}}^j=\mathcal{U}\ominus K\mathcal{E}^j $ by definition, and $\bar{\mathcal{X}}^j\subseteq\bar{\mathcal{S}}_x^j, \bar{\mathcal{U}}^j\subseteq\bar{\mathcal{S}}_u^j$ by feasibility of \eqref{eq:tight_const_NLP}, we have $\bar{x}_t\in\bar{\mathcal{X}}^{j}, \bar{u}_t\in\bar{\mathcal{U}}^{j}\Rightarrow x_t\in\mathcal{X}, u_t\in\mathcal{U}$.
\item Before proving the statement, we first prove the following auxiliary property that is granted by Assumption \ref{ass:flat}\eqref{ass:flatmap_convex}:
\small
\begin{align*}
  [\mathcal{F}^{\cap}(\bar{\mathbf{Y}}), \mathcal{F}^{\cup}(\bar{\mathbf{Y}})]\subseteq[\min_{\substack{k=1,.., p}}\mathcal{F}^{\cap}(\bar{\mathbf{Y}}^k),&\max_{\substack{k=1,.., p}}\mathcal{F}^{\cup}(\bar{\mathbf{Y}}^k)]
\end{align*}\normalsize
for $\bar{\mathbf{Y}}\in\textrm{conv}(\{\bar{\mathbf{Y}}^1,\dots,\bar{\mathbf{Y}}^p\})$, where the intervals and $\min, \max$ are defined elment-wise. 
We proceed using induction on $p$, the number of points in the set. For $p=2$, the property follows trivially by Assumption \ref{ass:flat}\eqref{ass:flatmap_convex}.
Suppose the property is true for $p-1$, i.e.,\small
\begin{align*}
&[\mathcal{F}^{\cap}(\bar{\mathbf{Y}}'), \mathcal{F}^{\cup}(\bar{\mathbf{Y}}')]\subseteq\\
&[\min_{\substack{k=1,.., p-1}}\mathcal{F}^{\cap}(\bar{\mathbf{Y}}^k),\max_{\substack{k=1,.., p-1}}\mathcal{F}^{\cup}(\bar{\mathbf{Y}}^k)]
\end{align*}\normalsize
for any $\bar{\mathbf{Y}}' \in\textrm{conv}(\{\bar{\mathbf{Y}}^1,\dots,\bar{\mathbf{Y}}^{p-1}\})$. Adding an additional point in the set, let $\textrm{conv}(\{\bar{\mathbf{Y}}^1,\dots,\bar{\mathbf{Y}}^p\})\ni\bar{\mathbf{Y}}=\lambda\bar{\mathbf{Y}}^p+(1-\lambda)\bar{\mathbf{Y}}'$ for some $\lambda\in[0,1]$. Using the property for $p=2$, we have \small
\begin{align*}
&[\mathcal{F}^{\cap}(\bar{\mathbf{Y}}), \mathcal{F}^{\cup}(\bar{\mathbf{Y}})]\subseteq\\&[\min(\mathcal{F}^{\cap}(\bar{\mathbf{Y}}^p),\mathcal{F}^{\cap}(\bar{\mathbf{Y}}')),\max(\mathcal{F}^{\cup}(\bar{\mathbf{Y}}^p),\mathcal{F}^{\cup}(\bar{\mathbf{Y}}'))]
\end{align*}\normalsize
Using the truth of property for $p-1$, we therefore write
\small
\begin{align*}
&\min_{\substack{k=1,.., p}}\mathcal{F}^{\cap}(\bar{\mathbf{Y}}^k)\leq\min(\mathcal{F}^{\cap}(\bar{\mathbf{Y}}^p),\mathcal{F}^{\cap}(\bar{\mathbf{Y}}')),\\ 
&\max(\mathcal{F}^{\cup}(\bar{\mathbf{Y}}^p),\mathcal{F}^{\cup}(\bar{\mathbf{Y}}')) \leq \max_{\substack{k=1,.., p}}\mathcal{F}^{\cup}(\bar{\mathbf{Y}}^k)\\
\Rightarrow &[\mathcal{F}^{\cap}(\bar{\mathbf{Y}}), \mathcal{F}^{\cup}(\bar{\mathbf{Y}})]\subseteq[\min_{\substack{k=1,\dots, p}}\mathcal{F}^{\cap}(\bar{\mathbf{Y}}^k),\max_{\substack{k=1,.., p}}\mathcal{F}^{\cup}(\bar{\mathbf{Y}}^k)]\ (\star)
\end{align*}\normalsize
where the $\min, \max$ for vectors are computed element-wise. The property thus holds true for $p$ as well and induction helps us conclude that this holds for any $p\geq 1$. 

Now we prove statement 3) of the proposition. We have $\forall k=1,..,p$, $\mathcal{F}(\bar{\mathbf{Y}}^k)=(\bar{x}^k,\bar{u}^k)\in \bar{\mathcal{X}}^j \times \bar{\mathcal{U}}^j$. Since $\bar{\mathcal{X}}^j \times \bar{\mathcal{U}}^j\subset \bar{\mathcal{S}}_x^j\times\bar{\mathcal{S}}_u^j$ by construction, this implies that $\mathcal{F}^{\cup}(\bar{\mathbf{Y}}^k), \mathcal{F}^{\cap}(\bar{\mathbf{Y}}^k)\in\hat{\mathcal{X}}^j\times\hat{\mathcal{U}}^j~\forall k=1,\dots, p$\footnote{Technically, $\exists \bar{\mathbf{Y}}^{'k}: \mathcal{F}(\bar{\mathbf{Y}}^{'k})=(\bar{x}^k,\bar{u}^k), \mathcal{F}^{\cup}(\bar{\mathbf{Y}}^{'k}), \mathcal{F}^{\cap}(\bar{\mathbf{Y}}^{'k})\in\hat{\mathcal{X}}^j\times\hat{\mathcal{U}}^j$ but due to the uniqueness of $(\bar{x}^k,\bar{u}^k)$ granted by Definition \ref{def:diffFlat}, we can use $\bar{\mathbf{Y}}^k$ instead of $\bar{\mathbf{Y}}^{'k}$ w.l.o.g.} . Additionally $\hat{\mathcal{X}}^j\times\hat{\mathcal{U}}^j$ is a box constraint, so we have $[\min\limits_{k=1,\dots, p}\mathcal{F}^{\cap}(\bar{\mathbf{Y}}^k),\max\limits_{k=1,\dots, p}\mathcal{F}^{\cup}(\bar{\mathbf{Y}}^k)]\subseteq\hat{\mathcal{X}}^j\times\hat{\mathcal{U}}^j$. Finally by using result $(\star)$ and $\mathcal{F}^\cap(\bar{\mathbf{Y}})\leq \mathcal{F}(\bar{\mathbf{Y}})\leq\mathcal{F}^\cup(\bar{\mathbf{Y}})$, we have $\mathcal{F}(\bar{\mathbf{Y}})\in \bar{\mathcal{X}}^j \times \bar{\mathcal{U}}^j$ for any $\bar{\mathbf{Y}}\in\textrm{conv}(\{\bar{\mathbf{Y}}^1,\dots,\bar{\mathbf{Y}}^p\})$.
\end{enumerate}
\hfill$\blacksquare$
 
\subsection{Proof of Proposition \ref{prop:CS_CI}}
By definition of $\bar{\mathcal{CS}}_{\mathbf{y}}^{j}$ we have for $\bar{\mathbf{y}}_t\in\bar{\mathcal{CS}}_{\mathbf{y}}^{j}$,\small
\begin{align}\label{eq:y_convCS}
&\bar{\mathbf{y}}_t=\sum\limits_{i=0}^{j-1}\sum\limits_{k\geq 0}\lambda^i_k\bar{\mathbf{y}}^i_k,\ \bar{\mathbf{y}}^i_k\in\bar{\mathcal{CS}}_{\mathbf{y}}^{j}\subset\mathbb{R}^{m\times R}\\
&\sum_{i=0}^{j-1}\sum_{k\geq 0}\lambda^i_k=1,\ \lambda^i_k\geq 0\nonumber
\end{align}\normalsize
By the definition of $\bar{\mathcal{CS}}_{\mathbf{y}}^{j}$ in \eqref{eq:CSy_def}, each $\bar{\mathbf{y}}^i_k$ maps to a feasible state, i.e., $\mathcal{F}_x(\bar{\mathbf{y}}^i_k)=\bar{x}^i_k\in\bar{\mathcal{X}}^{i+1}\subseteq\bar{\mathcal{X}}^{j}$. Invoking Proposition~\ref{prop:tight_cons_NLP}(3) gives us, \small
\begin{equation}\label{eq:imp_1_CI}
    \mathcal{F}_x(\bar{\mathbf{y}}_t)=\bar{x}_t\in\bar{\mathcal{X}}^j
\end{equation}\normalsize
See that $\bar{\mathbf{y}}^i_k\in\bar{\mathcal{CS}}_{\mathbf{y}}^{j}\Rightarrow\bar{\mathbf{y}}^i_{k+1}\in\mathcal{CS}_{\mathbf{y}}^{j}$. We use the lifted-output $\bar{\mathbf{Y}}^i_k$ and map $\mathcal{F}_u(\cdot)$ to reconstruct the nominal input as $\bar{u}^i_k=u^i_k-K(x^i_k-\bar{x}^i_k)=\mathcal{F}_u([\bar{y}^i_{k},\bar{y}^i_{k+1},\dots,\bar{y}^i_{k+R}])=\mathcal{F}_u([\bar{\mathbf{y}}^i_{k},\bar{y}^i_{k+R}])$ and note that $\bar{u}^i_k\in\bar{\mathcal{U}}^{i+1}\subset\bar{\mathcal{U}}^j$ from \eqref{eq:nom_b}. Consider the following control input
\small
\begin{align}\label{eq:safe_u}
    \bar{u}_t&=\mathcal{F}_u(\sum\limits_{i=0}^{j-1}\sum\limits_{k\geq 0} \lambda^i_k[\bar{\mathbf{y}}^i_{k},\bar{y}^i_{k+R}])\nonumber\\
    &=\mathcal{F}_u([\bar{\mathbf{y}}_t, \bar{y}_{t+R}])
\end{align}\normalsize
where $\bar{y}_{t+R}=\sum_{i=0}^{j-1}\sum_{k\geq 0} \lambda^i_k \bar{y}^i_{k+R}$. Invoking Proposition~\ref{prop:tight_cons_NLP}(3) again proves $\bar{u}_t\in\bar{\mathcal{U}}^j$. Also see that 
\begin{align}\label{eq:time_shift_step}\small
    \bar{\mathbf{y}}_{t+1}&=\delta([\bar{\mathbf{y}}_t,\bar{y}_{t+R}])\nonumber\\
    &=\delta(\sum\limits_{i=0}^{j-1}\sum\limits_{k\geq 0} \lambda^i_k[\bar{\mathbf{y}}^i_k,\bar{y}^i_{k+R}])\nonumber\\
    &=\sum\limits_{i=0}^{j-1}\sum\limits_{k\geq 0} \lambda^i_k \bar{\mathbf{y}}^i_{k+1}\Rightarrow \bar{\mathbf{y}}_{t+1}\in\bar{\mathcal{CS}}_{\mathbf{y}}^{j}.
\end{align}\normalsize 
Let $\bar{u}_2,\dots,\bar{u}_{R-1}\in\mathbb{R}^m$ be the remaining inputs that generate $[\bar{y}_t,\bar{\mathbf{y}}_{t+1}]\in\mathbb{R}^{m\times R+1}$, i.e.,\small
\begin{align}\label{eq:flat_seq_gen}
    [\bar{y}_t,\bar{\mathbf{y}}_{t+1}]=&[h(\bar{x}_t),h(f(\bar{x}_t,\bar{u}_t)),h(f^{(2)}(\bar{x}_t,\bar{u}_t,\bar{u}_2)),\dots,\nonumber\\
    &h(f^{(R-1)}(\bar{x}_t,\bar{u}_t,\dots,\bar{u}_{R-1}))]\in\mathbb{R}^{m\times R+1}
\end{align}\normalsize
where \small$f^{(k)}(\bar{x}_t,\bar{u}_t,\dots,\bar{u}_k)=\underbrace{f(\dots(f}_{k\textrm{ times}}(\bar{x}_t,\bar{u}_t),\dots \bar{u}_k).$\normalsize
 Using the map \eqref{eq:flatclass_x} to construct the nominal state, we can write\small
\begin{align*}
    \mathcal{F}_x(\bar{\mathbf{y}}_{t+1})=&\mathcal{F}_x([h(f(\bar{x}_t,\bar{u}_t)),h(f^{(2)}(\bar{x}_t,\bar{u}_t,\bar{u}_2)),\dots,\\&
    h(f^{(R-1)}(\bar{x}_t,\bar{u}_t,\dots,\bar{u}_{R-1}))])\\
    =&f(\bar{x}_t,\bar{u}_t)
\end{align*}\normalsize
where the last equality is true because of the unique correspondence from \small$[\bar{y}_t,\dots, \bar{y}_{t+R-1}]=[h(\bar{x}_t),\dots, h(f^{(R-1)}(\bar{x}_t,\bar{u}_t,\dots,\bar{u}_{t+R-1}))]$\normalsize  to $\bar{x}_t$ (Definition~\ref{def:diffFlat}). Finally, invoking Proposition~\ref{prop:tight_cons_NLP}(3) again using sequences $\bar{\mathbf{y}}^i_{k+1}, \forall i=0,\dots,j-1$ gives us\small
\begin{equation}\label{eq:imp_3_CI}
f(\bar{x}_t,\bar{u}_t)\in\bar{\mathcal{X}}^j.
\end{equation}\normalsize
\hfill$\blacksquare$
\subsection{Proof of Proposition~\ref{prop:Q_CLF}}
1) First note that $\bar{\mathbf{y}}\in\bar{\mathcal{CS}}_{\mathbf{y}}^{j}$ implies that the optimization problem implicit in the definition \eqref{eq:conv_cf} of $\bar{Q}^{j}(\cdot)$ is feasible. Also see that since the feasible set is compact (countable product of compact sets is compact by Tychonoff's theorem)  and the objective is continuous (linear, in fact, and bounded because of Theorem ), a minimizer exists by Weierstrass' theorem for every $\bar{\mathbf{y}}\in\bar{\mathcal{CS}}_{\mathbf{y}}^{j}$. Thus for any $\bar{\mathbf{y}}\in\bar{\mathcal{CS}}_{\mathbf{y}}^{j}$, we can write 
$\bar{Q}^{j}(\bar{\mathbf{y}})= \sum\limits_{i=0}^{j-1}\sum\limits_{k\geq 0}\lambda^{\star i}_k\mathcal{C}^i_k$ where the $\lambda^{\star i}_k$s satisfy the constraints in \eqref{eq:conv_cf}. The definition of $\mathcal{C}^i_k$ in \eqref{eq:ctg} and positive definiteness of $c(\cdot)$ by \eqref{stage_cost} imply that $\bar{Q}^{j}(\mathbf{y}) \succ 0 \ \forall \mathbf{y}\in\mathcal{CS}_{\mathbf{y}}^{j}\backslash\mathcal{Y}_G$. For any $\bar{\mathbf{y}}\in\mathcal{Y}^j_G$, we have $\bar{\mathbf{y}}=\sum_{i=0}^{j-1}\sum_{k\geq 0}\lambda_k^i\bar{\mathbf{y}}_k^i$ with $\lambda^i_k>0$ only for $\bar{\mathbf{y}}^i_k\in\mathcal{Y}_G$, implying $\bar{Q}^j(\bar{\mathbf{y}})\leq 0$. Thus, $\bar{Q}^{j}(\mathbf{y})=0~\forall \bar{\mathbf{y}}\in\mathcal{Y}^j_G$.   We finish the proof for the first part by observing that for $\bar{\mathbf{y}}\in\mathcal{Y}_G\backslash\mathcal{Y}^j_G$, there exists no combination of multipliers such that $\lambda_k^i>0$ only for $\bar{\mathbf{y}}^i_k\in\mathcal{Y}_G$, and since $\mathcal{C}^i_k>0$ for $\bar{\mathbf{y}}^i_k\not\in\mathcal{Y}_G$, we must have $\bar{Q}^j(\bar{\mathbf{y}})>0$.\\
2) For any $\bar{\mathbf{y}}_t\in\bar{\mathcal{CS}}_{\mathbf{y}}^{j}$, let $\bar{Q}^{j}(\bar{\mathbf{y}}_t)= \sum\limits_{i=0}^{j-1}\sum\limits_{k\geq0}\lambda^{\star i}_k\mathcal{C}^i_k$ with $\lambda^{\star i}_k$ satisfying the constraints in \eqref{eq:conv_cf}. Observing the linearity of the forward-time shift operator $\delta(\cdot,\cdot)$, we have
\small
\begin{align*}
    \bar{\mathbf{y}}_{t+1}&=\delta(\bar{\mathbf{y}}_t,\bar{y}_{t+R})\\
    &=\delta(\sum\limits_{i=0}^{j-1}\sum\limits_{k\geq0}\lambda^{\star i}_k\bar{\mathbf{y}}^i_k, \sum\limits_{i=0}^{j-1}\sum\limits_{k\geq0}\lambda^{\star i}_k \bar{y}^i_{k+R})\\
    &=\sum\limits_{i=0}^{j-1}\sum\limits_{k\geq0}\lambda^{\star i}_k\delta(\bar{\mathbf{y}}^i_k,\bar{y}^i_{k+R})\\
    &=\sum\limits_{i=0}^{j-1}\sum\limits_{k\geq0}\lambda^{\star i}_k\bar{\mathbf{y}}^i_{k+1}.
a\end{align*}\normalsize
Thus the same $\lambda^{\star i}_k$s are also feasible for \eqref{eq:conv_cf} at $\mathbf{y}_{t+1}$ and we have
\small
\begin{align*}
    \bar{Q}^{j}(\bar{\mathbf{y}}_{t+1})-\bar{Q}^{j}(\bar{\mathbf{y}}_t)&\leq \sum\limits_{i=0}^{j-1}\sum\limits_{k\geq0}\lambda^{\star i}_k(\mathcal{C}^i_{k+1}-\mathcal{C}^i_{k})\\
    &= \sum\limits_{i=0}^{j-1}\sum\limits_{k\geq0}\lambda^{\star i}_k(-c(\bar{\mathbf{Y}}^i_k))\\
    &\leq -c(\sum\limits_{i=0}^{j-1}\sum\limits_{k\geq0}\lambda^{\star i}_k \bar{\mathbf{Y}}^i_k)\\
    &=-c(\bar{\mathbf{Y}}_t)
\end{align*}\normalsize
The second to last inequality comes from the convexity of $c(\cdot)$.
This completes the proof of the second part of the proposition.
\hfill$\blacksquare$



\end{document}